\documentclass[12pt]{article}
\usepackage[textwidth=5.5in,top=1.37in,bottom=1.37in]{geometry}
\usepackage{amsthm,amsmath,amssymb,bbm,bm}
\usepackage{natbib}
\usepackage{multirow}
\usepackage[pdftex]{graphicx}
\usepackage{subfigure}
\usepackage{wrapfig}
\usepackage{tikz}
\usepackage{centernot}
\usepackage{array}
\usepackage{url}
\usepackage{float}
\usepackage{algorithmic}
\usepackage[linesnumbered, ruled, vlined]{algorithm2e}
\usepackage{mathrsfs}
\usepackage{dsfont}
\usepackage{titling}
\usepackage{relsize}
\usepackage{rotating}
\usepackage{enumitem}
\usepackage{setspace}

\usepackage[colorlinks=true,
            linkcolor=red,
            anchorcolor=blue,
            citecolor=blue,
            urlcolor=blue]{hyperref}


\newtheorem{remark}{Remark}
\newtheorem{assumption}{Assumption}

\newtheorem{theorem}{Theorem}[section]
\newtheorem{lemma}{Lemma}[section] 


\newcommand{\expit}{{\text{expit}}}
\newcommand{\logit}{{\text{logit}}}
\newcommand{\sign}{{\text{sign}}}

\newcommand{\cV}{{\mathcal{V}}}
\newcommand{\cM}{{\mathcal{M}}}
\newcommand{\cD}{{\mathcal{D}}}
\newcommand{\cL}{{\mathcal{L}}}
\newcommand{\cH}{{\mathcal{H}}}
\newcommand{\cG}{{\mathcal{G}}}
\newcommand{\PP}{{\mathbbm{P}}}
\newcommand{\cO}{{\mathcal{O}}}

\newcommand{\indep}{\rotatebox[origin=c]{90}{$\models$}}
\newcommand{\yifan}[1]{\textcolor{black}{#1}\xspace}

\newcommand{\source}[1]{#1}
\newcommand*\samethanks[1][\value{footnote}]{\footnotemark[#1]}

\begin{document}
\onehalfspacing
\title{\vspace{-1.5cm}  {\Large A semiparametric instrumental variable approach to optimal treatment regimes under endogeneity}}

\author{
\large Yifan Cui\thanks{Department of Statistics, The Wharton School, University of Pennsylvania, Philadelphia, PA 19104}~\thanks{Email: \href{mailto:cuiy@wharton.upenn.edu}{cuiy@wharton.upenn.edu}}
\and \large Eric Tchetgen Tchetgen\samethanks[1]}
\predate{}
\postdate{}
\date{}


\maketitle
\vspace{-0.5cm}

\begin{abstract}
 There is a fast-growing literature on estimating optimal \mbox{treatment} regimes based on randomized trials or observational studies under a key identifying condition of no unmeasured confounding.
Because confounding by unmeasured factors cannot generally be ruled out with certainty in observational studies or randomized trials subject to non-compliance, we propose a general instrumental variable approach to learning optimal treatment regimes under endogeneity.
Specifically, we establish identification of both value function $E[Y_{\cD(L)}]$ for a given regime $\cD$ and optimal regimes $\arg \max_{\cD} E[Y_{\cD(L)}]$  with the aid of a binary instrumental variable, when no unmeasured confounding fails to hold.
We also \mbox{construct} novel multiply robust classification-based estimators.
Furthermore, we propose to identify and estimate optimal treatment regimes among those who would comply to the assigned treatment under a monotonicity assumption. In this latter case,  we establish the somewhat surprising result that complier optimal regimes can be consistently estimated without directly collecting compliance information and therefore without the complier average treatment effect itself being identified. Our approach is illustrated via extensive simulation studies and a data application on the effect of child rearing on labor participation.
\end{abstract}

\noindent {\bf keywords}
Precision medicine, Optimal treatment regimes, Complier optimal regimes, Instrumental variable, Unmeasured confounding

\section{Introduction}\label{sec:intro}
The primary goal of estimating an individualized treatment regime is to recover a rule which assigns the treatment, among a set of possible treatments, to each patient based on the individual's characteristics.
Optimal treatment regimes have recently received a lot of attention in \yifan{the statistical and biomedical literatures}. A prevailing strand of work in this literature has approached the optimal treatment problem through either Q-learning \citep{chakraborty2010dtr,qian2011performance,laber2014q,schulte2014} or A-learning \citep{Robins2000MarginalSM,murphy2003optimal,Robins2004,shi2018}. Recently, an alternative approach has emerged from a classification perspective \citep{zhang2012estimating,zhao2012estimating,rubin2012statistical}, \yifan{which has proven more robust to model misspecification in some settings}.

Recent explorations of optimal individual treatment regimes have considered a variety of data types, including within the context of standard randomized experiments \citep{kosorok2016,kosorok2019review,tsiatis2019dynamic}, but also observational studies \citep{athey2017efficient,kallus2018balanced} and electronic health records \citep{Wang2016LearningOI,wu2018ehr}. There has also been work on individualized treatment regimes with a somewhat different objective, \yifan{such as policy improvement \citep{kallus2018confounding}, quantiles for outcome measure \citep{linn2015,linn2017,wang2018quantile}}, tail control \citep{qi2019siam}, and interpretability \citep{Orellana2010pm,Laber2015tree,Zhang2015decisionlist}.

A common assumption made in prior work on estimating optimal treatment regimes is that of no unmeasured confounding; an assumption which cannot be guaranteed in observational studies, nor in randomized experiments subject to non-compliance. Without such an assumption, it is well-known that causal effects, and in particular optimal treatment regimes cannot be identified nonparametrically without an alternative assumption.
The central controversy of the unconfoundedness assumption is that one typically needs to collect and appropriately account for a large number of relevant covariates in order to make the assumption credible. \yifan{The use of instrumental variables (IVs)} is a well-known approach to estimating causal effects in observational studies or randomized trials with non-compliance.
An IV is defined as a \mbox{pretreatment} variable that is independent of all unmeasured confounders, and does not have a direct causal effect on the outcome other than through the treatment.
In a double-blind, placebo-controlled, randomized trial, random assignment is a common example of an ideal IV when patients fail to comply to assigned treatment.
A prominent IV approach in epidemiological studies, known as Mendelian randomization studies, leverages genetics variants known to be associated with the phenotype defining the exposure, in order to estimate the causal effect of the phenotype on a health outcome.  A well-known illustration of the approach takes fat mass and obesity-associated protein (FTO) as a genetic IV to estimate a causal association between body mass index (BMI) and depression \citep{Walter2015}.

\cite{imbens1994,angrist1996} proposed a formal counterfactual based approach for binary treatment and IV, and established identification under a certain monotonicity assumption, of the so-called complier average causal effect, i.e., the average treatment effect for the subset of the population who would always comply to their assigned treatment.
Building on this original work, \cite{ABADIE2003231,tan2006,ogburn2015} have developed various semiparametric methods for estimating complier average treatment effects with appealing robustness and efficiency properties.
The average treatment effect generally differs from complier average causal effect and often is the causal effect of primary interest  \citep{hernan2006epi,aronow2013,wang2018bounded}.
\cite{wang2018bounded} formally established identification of the population average treatment effect under certain no-interaction assumptions.
Although it has developed a rich literature on IV methods for static regimes, to the best of our knowledge, no prior literature exists on IV methods for optimal treatment regimes.

In this paper, we propose a number of IV learning methods for estimating optimal treatment regimes in case no unmeasured confounder assumption fails to hold.
Specifically, we adapt and extend the weighted classification perspective pioneered by \cite{zhang2012estimating,zhao2012estimating,rubin2012statistical}, by allowing for an endogenous treatment (i.e., confounded by unmeasured factors) which we account for by a novel use of an IV. We take a classification perspective as it is now widely recognized to be quite versatile for the purpose of estimating optimal treatment regimes,  because of the large arsenal of robust classification methods and corresponding off-the-shelf software that one can readily leverage.

The paper makes a number of contributions to both the IV and precision medicine literatures.  First, we establish identification of optimal treatment regimes for a binary treatment subject to unmeasured confounding, by leveraging a binary IV.
\yifan{The proposed identification conditions give rise to IV estimators of optimal treatment regimes without necessarily identifying the value function for a given regime.}
In addition, we construct multiply robust classification-based estimators of optimal treatment regimes provided that a subset of several posited models indexing the observed data distribution is correctly specified.
Second, we propose to identify and estimate optimal treatment regimes among the subset of the population that would always comply to their assigned treatment.
A somewhat surprising theorem establishes that under the identifying assumption of  no defier (i.e., monotonicity of the effect of the IV on the treatment) one can in fact identify complier optimal treatment regimes even when individuals' realized treatment values are not observed, and therefore the complier average treatment effect itself is not \mbox{identifiable}. Our results therefore imply that in a randomized trial subject to non-compliance, it is possible to consistently infer optimal treatment regimes for compliers even if, as often the case in practice, investigators fail to collect adherence information on individual participants. For instance, in a randomized trial with one-sided non-compliance (i.e., when the placebo group cannot access the experimental treatment) whereby monotonicity holds by design, one can obtain assumption-free inferences on who might benefit from the intervention, without necessarily knowing who in the treatment arm adhered to the assigned treatment, and therefore one cannot recover the actual magnitude of the treatment effect among compliers.

\yifan{Our simulation studies confirm that the proposed inverse weighted and multiply robust estimators perform well in a range of settings and in fact outperform existing methods in settings where unmeasured confounding is strong. In particular, the proposed estimators have significantly higher empirical value function, i.e., higher average potential outcome under the estimated optimal treatment regime in the presence of unmeasured confounding. In addition, the performance of the proposed estimators is comparable to that of prior methods when there is no unmeasured confounding.
We also apply the proposed methods to a data application on the effect of child rearing on labor participation.}

The remainder of the article is organized as follows. In Section~\ref{sec:method}, we present the mathematical framework for the use of IVs in estimating individualized treatment regimes subject to unmeasured confounding.
\yifan{Section~\ref{sec:robustness} develops two novel multiply robust classification-based estimators.}
Extensive simulation studies are presented in Section~\ref{sec:simulation}. Section~\ref{sec:realdata} describes application of the proposed methods to mother's labor participation.
Next, we propose to identify optimal treatment regimes among those who would always comply to their assigned treatment in Section~\ref{sec:cace}.
The article concludes with a discussion of future work in Section~\ref{sec:discussion}. Proofs and additional results are provided in Appendix and \yifan{Supplementary Material}.

\section{Methodology}\label{sec:method}

We briefly introduce some general notation used throughout the paper. Let $Y$ denote the outcome of interest and $A \in \{+1,-1\}$ be a binary treatment indicator.
Suppose that $U$ is an unmeasured confounder (possibly vector-value) of the effect of $A$ on $Y$.  Suppose also that one has observed a pretreatment binary  instrumental variable $Z \in \{+1,-1\}$. Let $L\in \mathcal L$ denote a set of fully observed pre-IV covariates, where $\mathcal L$ is a $p$-dimensional vector space.
Throughout we assume the complete data are independent and identically distributed realizations of $(Y, L, A, U, Z)$; thus the observed data are $\cO=(Y,L,A,Z)$.

We wish to identify a treatment regime $\cD$, which is a mapping from the patient-level covariate space $\cL$ to the treatment
space $\{+1, -1\}$ that maximizes the corresponding expected potential outcome for the population. In other words, the goal is to estimate an optimal treatment regime, defined as follows,
\begin{align}
\cD^*(L) = \sign\{ E(Y_1-Y_{-1}|L) \},
\label{eq:opt2}
\end{align}
where $Y_a$  is a person's potential outcome under an intervention that sets treatment to value $a$, $\sign(x) = 1$ if $x > 0$ and $\sign(x) = -1$ if $x < 0$.
Throughout it is assumed that larger values of $Y$ are more desirable.

Let $Y_{\cD(L)}$ be the potential outcome under a hypothetical intervention that assigns treatment according to regime $\cD$; this potential outcome is equivalently expressed as
\begin{align*}
Y_{\cD(L)} \equiv Y_{1}I\{\cD(L)=1\}+Y_{-1}I\{\cD(L)=-1\},
\end{align*}
where $I\{\cdot\}$ is the indicator function.
\yifan{Throughout the paper, we make the following standard consistency and positivity assumptions: (i) For a given regime $\cD$, $Y=Y_{\cD(L)}$ when $A=\cD(L)$ almost surely. That is, a person's observed outcome matches his/her potential outcome under a given treatment regime when the realized treatment matches his/her potential treatment assignment under the regime; (ii)  We assume that $\Pr(A=a|L)>0$ for $a=\pm 1$ almost surely, i.e., a person has an opportunity to receive both treatments.}

\subsection{Optimal treatment regimes subject to no unmeasured confounding}
Prior methods for estimating optimal treatment regimes have typically relied on the following unconfoundedness assumption:
\begin{assumption}\emph{(Unconfoundedness)}
$Y_a \indep A| L$ for $a=\pm 1$.
\label{asm:unconfoundedness}
\end{assumption}
The assumption essentially rules out the existence of an unmeasured factor $U$ that confounds the effects of $A$ on $Y$ upon conditioning on $L$. Such an assumption is untestable without further  restriction on the data generating mechanism and cannot generally be enforced outside of an ideal randomized study.

It is straightforward to verify that
under Assumption~\ref{asm:unconfoundedness}, one can identify the counterfactual mean (known as the value function of regime $\cD$, \citep{qian2011performance}) $E[Y_{\cD(L)}]$ for a given treatment regime $\cD$.
Furthermore, optimal treatment regimes in Equation~\eqref{eq:opt2} are identified from the observed data by the following expression,
\begin{align*}
\cD^*(L) = \sign\{ E(Y|L,A=1) - E(Y|L,A=-1) \}.
\end{align*}
As established by \cite{qian2011performance}, learning optimal individualized treatment regimes under unconfoundedness can alternatively be formulated as
\begin{align}
\cD^*=\arg\max_{\cD} E_L \left[  E_{Y_{\cD}} [Y_{\cD(L)}|L] \right]=\arg\max_{\cD} E\left[\frac{I\{A=\cD(L)\}Y}{f(A|L)}\right],
\label{eq:opt1}
\end{align}
\cite{zhang2012robust} proposed to directly maximize the value function over a restricted set of functions.

Rather than maximizing the above value function, \cite{zhao2012estimating,zhang2012estimating} transformed the above problem into a formal, equivalent weighted classification problem,
\begin{align}
\cD^*=\arg\min_{\cD} E\left[\frac{Y}{f(A|L)}I\{A\neq \cD(L)\}\right],
\label{eq:opt3}
\end{align}
with  0-1 loss function and weight ${Y}/{f(A|L)}$. \cite{zhao2012estimating} addressed the computational burden of formulation \eqref{eq:opt3} by substituting the 0-1 loss with the hinge loss and proposed to solve the optimization via support vector machines. The ensuing classification approach was shown to have appealing robustness properties, particularly in the context of a randomized study where no model assumption is needed.

Subsequent work has provided further extensions and refinements of the classification perspective
\citep{zhang2012estimating,zhao2015new,zhao2015doubly, chen2016dosefinding,rwl,zhou2017augmented,cui2017tree,zhu2017greedy,Liu2018AugmentedOL,zhangzhang2018}.
Notably, all prior methods, whether classification-based or not, rely on the unconfoundedness Assumption~\ref{asm:unconfoundedness}. As the assumption may not hold in observational studies or randomized trials with non-compliance, in the next section, we introduce a general framework for learning optimal treatment regimes under endogeneity (i.e., unmeasured confounding).

\subsection{Identification of optimal treatment regimes with unmeasured confounding}
In this section, we no longer rely on Assumption~\ref{asm:unconfoundedness} and therefore allow for unmeasured confounding. Instead, let $Y_{z,a}$ denote the potential outcome had, possibly contrary to fact, a person's IV and treatment value been set to $z$ and $a$, respectively. Suppose that the following assumption holds.
\begin{assumption}\emph{(Latent unconfoundedness)}
$Y_{z,a} \indep (Z,A) |L,U$ for $z,a=\pm 1$.
\label{asm:unconfoundedness2}
\end{assumption}
\vspace{-0.6cm}
This assumption essentially states that together $L$ and $U$ would in principle suffice to account for confounding of the joint effect of $Z$ and $A$ on $Y$.
Because $U$ is not observed, we propose to account for it by making the following standard IV assumptions:

\begin{assumption}\emph{(IV relevance)} $Z \centernot{\indep} A|L$.
\label{IV Relevance}
\end{assumption}

\begin{assumption}\emph{(Exclusion restriction)} $Y_{z,a}=Y_a$ for $z,a=\pm 1$ almost surely.
\label{Exclusion Restriction}
\end{assumption}

\vspace{-0.95cm}

\begin{assumption}\emph{(IV independence)} $Z \indep U|L$.
\label{IV Independence}
\end{assumption}

\begin{assumption}\emph{(IV positivity)} $0<f\left(  Z=1|L\right)<1$ almost surely.
\label{asm:IV positivity}
\end{assumption}

The first three conditions are well-known core IV conditions, while Assumption~\ref{asm:IV positivity} is needed for nonparametric identification \citep{10.1093/ije/29.4.722,hernan2006epi}.
Assumption~\ref{IV Relevance} requires that the IV is associated with the treatment conditional on $L$.  In a placebo controlled randomized trial with non-compliance, this assumption will typically be satisfied for \mbox{treatment} assignment $Z$ and treatment as taken $A$ whenever more individuals take the active treatment in the intervention arm than in the placebo arm.  Note that Assumption~\ref{IV Relevance} does not rule out confounding of the $Z$-$A$ association by an unmeasured factor, however, if present, such factor must be independent of $U$.  We will refer to $Z$ as a causal IV in case no such confounding is present.
 Assumption~\ref{Exclusion Restriction} states that there can be no direct causal effect of $Z$ on $Y$ not mediated by $A$.
 Assumption~\ref{IV Independence} ensures that the causal effect of $Z$ on $Y$ is unconfounded given $L$.  Figure~\ref{fig:sra dag} provides a graphical representation of Assumptions~\ref{Exclusion Restriction} and \ref{IV Independence} for a causal IV.

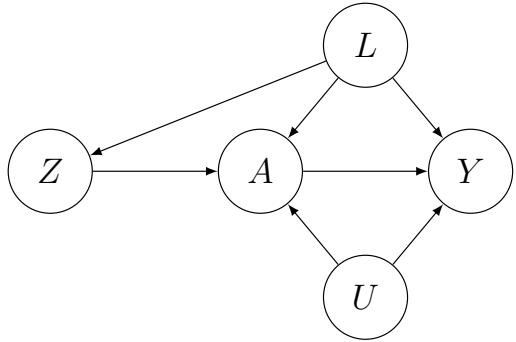
\begin{figure}[h]
  \centering
  \vfill
  \resizebox{200pt}{!}{%
    \begin{tikzpicture}[state/.style={circle, draw, minimum size=1cm}]
  \def\Ax{0}
  \def\Ay{0}
  \def\offset{2.5}
  \def\Bx{\Ax+5}
  \def\By{\Ay}
  \node[state,shape=circle,draw=black] (Z) at (\Ax,\Ay) {$Z$};
  \node[state,shape=circle,draw=black] (Y) at (\Bx,\By) {$Y$};
  \node[state,shape=circle,draw=black] (A) at (\Bx-\offset,\By) {$A$};
  \node[state,shape=circle,draw=black] (L) at (\Bx-1.25,\By+1.5) {$L$};
  \node[state,shape=circle,draw=black] (U) at (\Bx-1.25,\By-1.5) {$U$};

  \draw [-latex] (Z) to [bend left=0] (A);
  \draw [-latex] (A) to [bend left=0] (Y);
  \draw [-latex] (L) to [bend left=0] (A);
  \draw [-latex] (L) to [bend left=0] (Z);
  \draw [-latex] (L) to [bend left=0] (Y);
  \draw [-latex] (U) to [bend left=0] (A);
  \draw [-latex] (U) to [bend left=0] (Y);

\end{tikzpicture}
  }
  \vfill
  \caption{Causal DAG with unmeasured confounding and a causal IV.}
  \label{fig:sra dag}
\end{figure}

Under Assumptions~\ref{asm:unconfoundedness2}-\ref{IV Independence}, it is not possible to uniquely identify the value function of a given regime $\cD$.
Thus, directly optimizing the value function does not appear possible even with a valid IV encoded in the DAG of Figure~\ref{fig:sra dag}. Nonetheless, it is possible to identify treatment regimes that maximize lower bounds of the value function (see Section~\ref{bounds} in the Supplementary Material) without an additional assumption.
In order to identify $\cD^*$, we consider the following assumption.
\begin{assumption}\emph{(No unmeasured common effect modifier)}
$$ Cov\left\{\widetilde \delta(L,U), \widetilde \gamma(L,U)|L\right\} = 0,$$ almost surely,
where $\widetilde \delta(L,U) \equiv \Pr(A=1|Z=1,L,U)-\Pr(A=1|Z=-1,L,U)$ and $\widetilde \gamma(L,U) \equiv E(Y_1-Y_{-1}|L,U)$, respectively.
\label{asm:weak}
\end{assumption}

Assumption~\ref{asm:weak} essentially states that there is no common effect modifier by an unmeasured confounder, of the additive effect of treatment on the outcome, and the additive effect of the IV on treatment.
We also consider the following stronger condition.
\begin{assumption}\emph{(Independent compliance type)}
$$ \delta(L) \equiv \Pr(A=1 | Z=1,L)-\Pr(A=1| Z=-1,L)=\widetilde \delta(L,U)~ \text{almost surely}.$$
\label{asm:strong}
\end{assumption}
\vspace{-1cm}

Assumption~\ref{asm:strong} essentially states that there is no additive interaction between $Z$ and $U$ in a model for the probability of being treated conditional on $L$ and $U$. As stated in \cite{wang2018bounded}, this assumption would hold if $U$ was independent of a person's compliance type. Assumption~\ref{asm:weak} is implied by and therefore more general than Assumption A5 of \cite{wang2018bounded} which states that either $\widetilde \delta(L,U)$ or $\widetilde \gamma(L,U)$ does not depend on $U$ \citep{wang2018}. Clearly, Assumption~\ref{asm:strong} implies Assumption~\ref{asm:weak} and therefore is more stringent.
Now we are ready to state our first identification result.
\begin{theorem}
Under Assumptions~\ref{asm:unconfoundedness2}-\ref{asm:weak}, $\arg \max_\cD E[Y_{\cD(L)}]$ is nonparametrically identified,
\begin{align}
\arg\max_\cD E[Y_{\cD(L)}] = \arg\max_{\cD}E \left[\frac{ZAYI\{A=\cD(L)\}}{\delta(L)f(Z|L) } \right].
\label{eq:identification1}
\end{align}
Furthermore, under Assumptions~\ref{asm:unconfoundedness2}-\ref{asm:IV positivity} and \ref{asm:strong}, for a given regime $\cD$,
\begin{align}
E[Y_{\cD(L)}] =  \cV(\cD) \equiv E \left[\frac{ZAYI\{A=\cD(L)\}}{\delta(L)f(Z|L) } \right].
\label{eq:vd}
\end{align}
\label{thm:identification}
\end{theorem}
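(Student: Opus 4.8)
The plan is to pass to the conditional-on-$L$ version of the proposed functional, reduce it to $E(Y_{\cD(L)}\mid L)$ plus a remainder that does not involve $\cD$, and then integrate over $L$. Fix a regime $\cD$ and write $d=\cD(L)$, a deterministic function of $L$. Since $Z$ is binary,
\[
E\!\left[\frac{ZAYI\{A=d\}}{\delta(L)f(Z|L)}\,\Big|\,L\right]=\frac{1}{\delta(L)}\Big(E[AYI\{A=d\}\mid Z=1,L]-E[AYI\{A=d\}\mid Z=-1,L]\Big).
\]
On the event $\{A=d\}$ one has $AYI\{A=d\}=d\,Y\,I\{A=d\}$, and by consistency together with the exclusion restriction (Assumption~\ref{Exclusion Restriction}), $Y=Y_A=Y_d$ there, so $AYI\{A=d\}=d\,Y_d\,I\{A=d\}$. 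First I would condition additionally on $U$: Assumptions~\ref{asm:unconfoundedness2} and \ref{Exclusion Restriction} give $Y_d\indep(Z,A)\mid L,U$, so
\[
E[d\,Y_d\,I\{A=d\}\mid Z=z,L,U]=d\,E(Y_d\mid L,U)\,\Pr(A=d\mid Z=z,L,U),
\]
and Assumption~\ref{IV Independence} ($Z\indep U\mid L$) lets me replace the law of $U$ given $(Z=z,L)$ by its law given $L$ when taking a further expectation. Subtracting the two $z$-terms, and using $\Pr(A=-1\mid\cdot)=1-\Pr(A=1\mid\cdot)$ to handle the sign bookkeeping when $d=-1$, I obtain in both cases
\[
E\!\left[\frac{ZAYI\{A=\cD(L)\}}{\delta(L)f(Z|L)}\,\Big|\,L\right]=\frac{1}{\delta(L)}\,E\!\left[E(Y_{\cD(L)}\mid L,U)\,\widetilde\delta(L,U)\,\big|\,L\right],
\]
which is well defined because $\delta(L)\neq 0$ (Assumption~\ref{IV Relevance}) and $f(Z|L)\in(0,1)$ (Assumption~\ref{asm:IV positivity}).

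Next I would decompose this conditional expectation. Because $Z\indep U\mid L$, $E[\widetilde\delta(L,U)\mid L]=\Pr(A=1\mid Z=1,L)-\Pr(A=1\mid Z=-1,L)=\delta(L)$, while $E[E(Y_{\cD(L)}\mid L,U)\mid L]=E(Y_{\cD(L)}\mid L)$; hence
\[
E\!\left[\frac{ZAYI\{A=\cD(L)\}}{\delta(L)f(Z|L)}\,\Big|\,L\right]=E(Y_{\cD(L)}\mid L)+\frac{Cov\{E(Y_{\cD(L)}\mid L,U),\,\widetilde\delta(L,U)\mid L\}}{\delta(L)}.
\]
The key point is that the last term does not depend on $\cD$ under Assumption~\ref{asm:weak}: for each $L$, $E(Y_{\cD(L)}\mid L,U)$ is one of $E(Y_1\mid L,U)$ or $E(Y_{-1}\mid L,U)$, and writing $E(Y_1\mid L,U)=E(Y_{-1}\mid L,U)+\widetilde\gamma(L,U)$ together with $Cov\{\widetilde\gamma(L,U),\widetilde\delta(L,U)\mid L\}=0$ shows that both choices have the same conditional covariance with $\widetilde\delta(L,U)$, so the remainder equals some $B(L)$ not involving $\cD$. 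Taking expectation over $L$ gives $E[ZAYI\{A=\cD(L)\}/\{\delta(L)f(Z|L)\}]=E[Y_{\cD(L)}]+E[B(L)]$, with $E[B(L)]$ constant in $\cD$, so the two objectives share the same maximizers, which is~\eqref{eq:identification1}.

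For the second statement, Assumption~\ref{asm:strong} forces $\widetilde\delta(L,U)=\delta(L)$ almost surely, i.e.\ $\widetilde\delta$ is degenerate given $L$, so the conditional covariance in the last display vanishes identically; thus $E[ZAYI\{A=\cD(L)\}/\{\delta(L)f(Z|L)\}\mid L]=E(Y_{\cD(L)}\mid L)$, and integrating over $L$ yields $\cV(\cD)=E[Y_{\cD(L)}]$, which is~\eqref{eq:vd}. The step I expect to be the main obstacle is the first reduction: carrying the $\pm1$ coding of $A$ and the indicator $I\{A=\cD(L)\}$ correctly through the $d=-1$ case, and applying Assumptions~\ref{asm:unconfoundedness2}, \ref{Exclusion Restriction} and \ref{IV Independence} in the right order so that the conditioning on both $Z$ and $U$ is properly removed; the remaining arguments are a conditional covariance decomposition plus the observation that the residual nuisance is free of $\cD$.
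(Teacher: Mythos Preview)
Your proposal is correct and follows essentially the same approach as the paper's proof: both reduce the weighted expectation to $E[\widetilde\delta(L,U)\,E(Y_{\cD(L)}\mid L,U)/\delta(L)]$, then exploit Assumption~\ref{asm:weak} to isolate a $\cD$-free remainder, and finally specialize under Assumption~\ref{asm:strong}. The only cosmetic difference is that the paper expands the indicator $I\{A=\cD(L)\}$ into its $a=\pm 1$ components and writes $E(Y_{\cD(L)}\mid L,U)=\widetilde\gamma(L,U)I\{\cD(L)=1\}+E(Y_{-1}\mid L,U)$ before invoking the zero-covariance condition on the $\cD$-dependent piece, whereas you keep $E(Y_{\cD(L)}\mid L,U)$ intact and use the conditional covariance decomposition directly; these are two presentations of the same computation.
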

Theorem~\ref{thm:identification} gives one of our main identification results, and states that optimal treatment regimes are nonparametrically identified with a valid IV satisfying Assumption~\ref{asm:weak}, therefore extending prior identification of optimal  treatment regimes to account for potential confounding by an unmeasured factor.
The theorem further states that \yifan{the functional} $\cV(\cD)$ nonparametrically identifies the value function with a valid IV under the stronger Assumption~\ref{asm:strong}.
Theorem~\ref{thm:identification} also serves as basis for the estimator proposed in the next section.

The following theorem shows that progress can be made towards identifying optimal treatment regimes without necessarily using a person's realized treatment value $A$.
\begin{theorem}
Under Assumptions~\ref{asm:unconfoundedness2}-\ref{asm:weak},
\begin{align}
\arg \max_\cD E[Y_{\cD(L)}] = \arg\max_{\cD} E \left[\frac{YI\{Z=\cD(L)\}}{\delta(L)f(Z|L) } \right].
\label{eq:identification2}
\end{align}
\label{thm:identification2}
\end{theorem}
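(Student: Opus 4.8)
The plan is to show that the candidate functional $E\!\left[\frac{YI\{Z=\cD(L)\}}{\delta(L)f(Z\mid L)}\right]$ and the true value function $E[Y_{\cD(L)}]$ differ only by an additive constant that does not depend on $\cD$; since a constant shift leaves the set of maximizers unchanged, their $\arg\max$ sets must coincide.

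First I would re-establish the Wald-type identity that already underlies Theorem~\ref{thm:identification}: under Assumptions~\ref{asm:unconfoundedness2}, \ref{Exclusion Restriction} and \ref{IV Independence},
\[
E[Y\mid Z=1,L]-E[Y\mid Z=-1,L]=E\big[\widetilde\delta(L,U)\,\widetilde\gamma(L,U)\mid L\big],
\]
obtained by conditioning on $U$, using consistency with the exclusion restriction to write $Y=Y_1 I\{A=1\}+Y_{-1}I\{A=-1\}$, applying latent unconfoundedness to factor $E[Y_a I\{A=a\}\mid Z,L,U]=E[Y_a\mid L,U]\,\Pr(A=a\mid Z,L,U)$, and then using $Z\indep U\mid L$ to integrate $U$ out of the outer expectation (the same step also gives $E[\widetilde\delta(L,U)\mid L]=\delta(L)$). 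Invoking Assumption~\ref{asm:weak} to split the covariance then yields the clean form
\[
E[Y\mid Z=1,L]-E[Y\mid Z=-1,L]=\delta(L)\,E(Y_1-Y_{-1}\mid L).
\]
Keeping careful track of exactly where each IV assumption enters is the part of the argument that needs the most care; everything after it is elementary bookkeeping.

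Next I would evaluate the candidate functional by iterated expectation over $Z$ given $L$. Because $Z$ is binary with $\Pr(Z=z\mid L)=f(z\mid L)$ and $\cD(L)$ is $L$-measurable, the inverse-probability weight collapses the inner expectation, $E\!\left[\frac{YI\{Z=\cD(L)\}}{\delta(L)f(Z\mid L)}\,\big|\,L\right]=\delta(L)^{-1}E[Y\mid Z=\cD(L),L]$. Writing $E[Y\mid Z=\cD(L),L]=E[Y\mid Z=-1,L]+\big(E[Y\mid Z=1,L]-E[Y\mid Z=-1,L]\big)I\{\cD(L)=1\}$ and substituting the identity from the previous paragraph gives
\[
E\!\left[\frac{YI\{Z=\cD(L)\}}{\delta(L)f(Z\mid L)}\,\Big|\,L\right]=\frac{E[Y\mid Z=-1,L]}{\delta(L)}+E(Y_1-Y_{-1}\mid L)\,I\{\cD(L)=1\}.
\]
On the other hand, consistency and $L$-measurability of $\cD(L)$ give the standard decomposition $E[Y_{\cD(L)}\mid L]=E[Y_{-1}\mid L]+E(Y_1-Y_{-1}\mid L)\,I\{\cD(L)=1\}$, with the same coefficient $E(Y_1-Y_{-1}\mid L)$ multiplying $I\{\cD(L)=1\}$.

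Finally, taking expectations over $L$, both $E\!\left[\frac{YI\{Z=\cD(L)\}}{\delta(L)f(Z\mid L)}\right]$ and $E[Y_{\cD(L)}]$ equal $E\big[E(Y_1-Y_{-1}\mid L)\,I\{\cD(L)=1\}\big]$ plus a term free of $\cD$ (respectively $E\big[E(Y\mid Z=-1,L)/\delta(L)\big]$ and $E[Y_{-1}]$, both finite under the standing positivity and moment conditions). Hence the two functionals are maximized by exactly the same set of regimes, namely any $\cD$ with $\cD(L)=\sign\{E(Y_1-Y_{-1}\mid L)\}$ wherever this sign is nonzero, so their $\arg\max$ sets coincide, which is the claim of Theorem~\ref{thm:identification2}. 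Note that the argument only uses $\delta(L)\neq0$ and never its sign, so it is insensitive to whether the instrument is positively or negatively associated with treatment given $L$.
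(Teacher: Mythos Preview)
Your argument is correct and uses the same ingredients as the paper, but you organize them differently. The paper's proof stays at the $(L,U)$ level throughout: it writes $Y=\sum_a Y_a I\{A=a\}$, conditions on $(L,U,Z)$ using latent unconfoundedness, integrates $Z$ out via the inverse weight, and only at the very end invokes Assumption~\ref{asm:weak} to collapse $E\big[\widetilde\delta(L,U)\,\widetilde\gamma(L,U)\,I\{\cD(L)=1\}\mid L\big]$ into $E(Y_1-Y_{-1}\mid L)\,I\{\cD(L)=1\}$, with the remaining $\cD$-free residual written explicitly as a functional $\kappa(L,U)$. You instead front-load the use of Assumption~\ref{asm:weak} by first packaging everything into the Wald-type identity $E[Y\mid Z=1,L]-E[Y\mid Z=-1,L]=\delta(L)\,E(Y_1-Y_{-1}\mid L)$, after which the candidate functional reduces by elementary IPW algebra at the $L$ level. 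Your route is a bit more modular and makes it clearer exactly where each IV assumption enters; the paper's route is more direct and avoids stating the Wald identity as a separate lemma, at the price of carrying $U$ through a longer four-term expansion. Substantively the two arguments are equivalent and arrive at the same decomposition $E\big[E(Y_1-Y_{-1}\mid L)\,I\{\cD(L)=1\}\big]+\text{const}$.
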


\vspace{-1cm}
\begin{remark}
Interestingly, Theorem~\ref{thm:identification2} implies that if it is known a priori that the association between $Z$ and $A$ is positive conditional on $L$, i.e., $\delta(L)>0$, then it is possible to identify optimal treatment regimes even if one does not directly observe the treatment variable $A$, by solving the optimization problem $\arg\max_{\cD} E[\widetilde W(L) I\{Z=\cD(L)\}Y/f(Z|L)]$ for any choice of weight $\widetilde W(L)>0$. Furthermore, in the event that external information is available on $\delta(L)$, as would sometime be the case if a separate sample with data on $A$, $Z$ and $L$, solving the above optimization problem with $\widetilde W(L)=1/\delta(L)$ recovers Equation~\eqref{eq:identification2}.
\label{remark1}
\end{remark}

\subsection{An IV approach to weighted learning}\label{sec:2.3}

In this section, motivated by Theorems~\ref{thm:identification} and \ref{thm:identification2}, we propose two classification-based estimators.
To further motivate our classification-based approach, note that
optimization tasks \eqref{eq:identification1} and \eqref{eq:identification2} are equivalent to
 \begin{align}
\arg\min_{\cD} E \left[ W^{(1)} I\{A\neq \cD(L)\} \right], \label{eq:optidentification1}\\ \arg\min_{\cD} E \left[ W^{(2)} I\{Z\neq \cD(L)\} \right], \label{eq:optidentification2}
\end{align}
respectively, where
\begin{align*}
W^{(1)}=\frac{ZAY}{\delta(L)f(Z|L) },\quad W^{(2)}=\frac{Y}{\delta(L)f(Z|L) }.
\end{align*}
In the rest of this section, we focus primarily on Equation~\eqref{eq:optidentification1} to develop our estimator although our results readily extend to Equation~\eqref{eq:optidentification2}.
\yifan{The idea behind our estimator is similar in spirit to \cite{zhang2012estimating,zhao2012estimating,rubin2012statistical,zhangzhang2018} in that} this alternative formulation of the optimization task may be interpreted as a classification problem in which one aims to classify $A$ using $L$ with misclassification error given by the weighted outcome $W$. Because the weight may not be positive (e.g., for samples with $A \neq Z$ even if $Y/\delta(L)>0$), in order to remedy this difficulty, we further modify the weights  by adopting the approach of \cite{liu2016hybrid} and leverage the following equality
\begin{align}\label{eq:equiv}
\arg\min_\cD E \left[ |W| I\{ \sign(W)A \neq \cD(L)\} \right]= \arg\min_\cD E \left[ W I\{A\neq \cD(L)\} \right].
\end{align}
\yifan{We follow \cite{zhao2012estimating} and proceed with convex optimization via the use of hinge loss function.
Furthermore, we penalize the complexity of the decision function to avoid overfitting. Thus, we propose to estimate optimal treatment regimes by minimizing the following regularized objective function,
\begin{align}
\label{eq:obj}
\widehat g = \arg \min_{g} \frac{1}{n} \sum_{i=1}^n |W_i| \phi \big( \sign(W_i) A_i g(L_i) \big) +\frac{\lambda}{2}||g||^2,
\end{align}
where $\phi$ is the hinge loss function, $g$ encodes the decision function within a specific class, $\lambda$ is a tuning parameter, and $\widehat \cD = \sign(\widehat g)$. The weight $W_i$ is unknown and therefore estimated from the data in a first step,
 and then substituted in \eqref{eq:obj}. We denote $$\widehat W^{(1)}=\frac{Z A Y}{\widehat \delta(L) \widehat f(Z|L)},\widehat W^{(2)}=\frac{Y}{\widehat \delta(L) \widehat f(Z|L)},$$ where $\widehat \delta (L)$ and $\widehat f(Z|L)$ can be fitted by a logistic regression or a nonparametric model such as random forest.}

We refer to \cite{zhao2012estimating} for solving this optimization with linear and nonlinear decision rules.
Note that variable selection techniques can be incorporated in the proposed approach when the dimension of covariates is high. For example, the variable selection techniques developed in \cite{zhao2012estimating,rwl} can readily be adopted here,  in which $l_2$ penalty is replaced with the elastic net penalty \citep{zou2005}.
The Fisher consistency, excess risk bound and universal consistency of the estimated treatment regime are shown in the Supplementary Material. The proof is akin to \cite{zhao2012estimating,zhou2017augmented}.

\section{Multiply robust classification-based estimators}\label{sec:robustness}
\yifan{We characterize the efficient influence function of $\cV(\cD)$ under Assumptions \ref{asm:unconfoundedness2}-\ref{asm:IV positivity} and \ref{asm:strong} as shown in Theorem~\ref{thm:triply1} in the Supplementary Material.
In principle, one could estimate optimal treatment regimes which maximize $E[Y_{\cD}(L)]$ over a class of parametric models of regimes, as proposed by \cite{zhang2012robust} in case of unconfoundedness.
However, this kind of approach may restrict treatment regimes to a relatively small set of possible functions and therefore may be suboptimal. An alternative approach is to develop a multiply robust classification-based estimator \citep{zhang2012estimating,zhao2012estimating,rubin2012statistical,zhangzhang2018} of optimal treatment regimes as described in the following.}

We denote the conditional average treatment effect as
$$\Delta(L)= E(Y_1-Y_{-1}|L).$$
Note that in order to learn optimal treatment regimes, we essentially need to minimize the following weighted classification error with respect to regime $\cD$,
\begin{align}
E[A \Delta(L) I\{A \neq \cD(L)\}] \quad \text{or} \quad E[Z \Delta(L) I\{Z \neq \cD(L)\}].
\label{eq:substitue}
\end{align}
The choice of statistic
\begin{align*}
\widetilde \Delta(L) = \frac{ZY}{\delta(L)f(Z|L)},
\end{align*}
substituted in Equation~\eqref{eq:substitue} recovers the estimators which we described in Section~\ref{sec:method}.
Motivated by the form of the efficient influence function of $E[\Delta(L)]$ \citep{wang2018bounded}, we propose the following statistic
\begin{align*}
\widetilde \Delta_{MR}(L) = \frac{Z}{\delta(L)f(Z|L) }\bigg [Y -A \Delta(L) -E[Y|Z=-1,L] +\Delta(L) E[A|Z=-1,L] \bigg ]+ \Delta(L)
\end{align*}
to obtain the following multiply robust weights
\begin{align*}
\widehat W^{(1)}_{MR} =&\bigg[ \frac{ZA}{ \delta(L,\widehat \beta) \widehat f(Z|L) }\bigg \{Y -A  \Delta(L, \widehat \theta) \\ & -\widehat E[Y|Z=-1,L] + \Delta(L, \widehat \theta) \widehat E[A|Z=-1,L] \bigg \}+ A\Delta(L, \widehat \theta) \bigg],
\end{align*}
and
\begin{align*}
\widehat W^{(2)}_{MR} =&  \bigg [ \frac{1}{ \delta(L,\widehat \beta) \widehat f(Z|L) }\bigg \{Y -A  \Delta(L, \widehat \theta)\\& -\widehat E[Y|Z=-1,L] + \Delta(L, \widehat \theta) \widehat E[A|Z=-1,L] \bigg \}+ Z\Delta(L, \widehat \theta) \bigg ],
\end{align*}
respectively, where  $\delta(L,\widehat \beta)$ and $\Delta(L, \widehat \theta)$ are doubly robust estimators of $\delta(L)$ and $\Delta(L)$, and other nuisance estimators are estimated by maximum likelihood estimation.
It is also possible to use modern machine learning methods to estimate these nuisance parameters.
Under standard regularity conditions \citep{10.2307/1912526},
$\delta(L,\widehat \beta)$, $\Delta(L, \widehat \theta)$, $\widehat f(Z|L)$,  $\widehat E(Y|L,Z=-1)$, $\widehat E(A|L,Z=-1)$ converge in probability to $\delta(L,\beta^*)$, $\Delta(L, \theta^*)$, $f^*(Z|L)$,  $E^*(Y|L,Z=-1)$, $E^*(A|L,Z=-1)$.
 In Theorem~\ref{thm:triply2}, we show that
\begin{align*}
\Delta_{\cD}^{*(1)} = E \Big[  W^{*(1)}_{MR} I\{A=\cD(L)\}\Big] \quad \text{and} \quad  \Delta_{\cD}^{*(2)} = E\Big[  W^{*(2)}_{MR} I\{Z=\cD(L)\}\Big]
\end{align*}
are multiply robust in the sense of maximizing the value function  (or minimizing the weighted classification error) in the union model of the following models:

\vspace{0.2cm}

\noindent $\cM'_1$: models for $f(Z|L) $ and $ \delta(L)$ are correct;

\vspace{0.4cm}

\noindent $\cM'_2$:  models for  $f(Z|L)$  and $ \Delta(L) $ are correct;

\vspace{0.4cm}

\noindent $\cM'_3$:  models for  $\Delta(L), E[Y|Z=-1,L], \delta(L)$, $ E[A|Z=-1,L]$ are correct;

\vspace{0.2cm}
\noindent where the weights are
\begin{align*}
W^{*(1)}_{MR} = &\bigg[ \frac{ZA}{\delta(L,\beta^*)f(Z|L,\nu^*)} \bigg\{Y- A\Delta(L,\theta^*)- E^*(Y|L,Z=-1) \\ &+ \Delta(L, \theta^*) E^*(A|L,Z=-1) \bigg \}  +A \Delta(L,\theta^*) \bigg],
\end{align*}
and
\begin{align*}
W^{*(2)}_{MR} = &\bigg[ \frac{1}{\delta(L,\beta^*)f(Z|L,\nu^*)} \bigg\{Y- A\Delta(L,\theta^*)- E^*(Y|L,Z=-1) \\ &+ \Delta(L, \theta^*) E^*(A|L,Z=-1) \bigg\}  +Z \Delta(L,\theta^*) \bigg].
\end{align*}

\begin{theorem}
\label{thm:triply2}
Under Assumptions~\ref{asm:unconfoundedness2}-\ref{asm:weak} and standard regularity conditions, we have that
\begin{align*}
\arg\max_{\cD} \Delta_{\cD}^{*(1)} =\arg\max_{\cD} \Delta_{\cD}^{*(2)}= \arg\max_{\cD} E[\Delta(L) \cD(L)],
\end{align*}
under the union model $\cM'_{union}=\cM'_1 \cup \cM'_2 \cup  \cM'_3$.
\end{theorem}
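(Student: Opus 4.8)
\emph{Proof strategy.} Because the statement concerns the population functionals $\Delta^{*(1)}_\cD$ and $\Delta^{*(2)}_\cD$ built from the probability limits of the nuisance estimators, the argument is entirely population-level (the regularity conditions only guarantee these limits exist and that $\delta(L,\beta^*)\neq 0$, $f(Z|L,\nu^*)\in(0,1)$, so the weights are well defined). The plan is to reduce everything to a single conditional-mean identity. First, using $A^2=Z^2=1$, write $W^{*(1)}_{MR}=A\,\widetilde\Delta^*_{MR}(L)$ and $W^{*(2)}_{MR}=Z\,\widetilde\Delta^*_{MR}(L)$ with $\widetilde\Delta^*_{MR}(L)=\frac{Z}{\delta(L,\beta^*)f(Z|L,\nu^*)}\{Y-A\Delta(L,\theta^*)-E^*(Y|L,Z=-1)+\Delta(L,\theta^*)E^*(A|L,Z=-1)\}+\Delta(L,\theta^*)$. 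Then, using $I\{A=\cD(L)\}=\tfrac12(1+A\,\cD(L))$ and $A^2=1$, one gets $\Delta^{*(1)}_\cD=\tfrac12 E[A\,\widetilde\Delta^*_{MR}(L)]+\tfrac12 E[\widetilde\Delta^*_{MR}(L)\,\cD(L)]$, whose first term is free of $\cD$; the same computation with $Z$ in place of $A$ gives $\Delta^{*(2)}_\cD=\tfrac12 E[Z\,\widetilde\Delta^*_{MR}(L)]+\tfrac12 E[\widetilde\Delta^*_{MR}(L)\,\cD(L)]$. By iterated expectation it therefore suffices to prove the \emph{Main Lemma}: under Assumptions~\ref{asm:unconfoundedness2}--\ref{asm:weak} and the union model $\cM'_{union}$, $E[\widetilde\Delta^*_{MR}(L)\mid L]=\Delta(L)$; given this, each of $\Delta^{*(1)}_\cD$, $\Delta^{*(2)}_\cD$ equals a $\cD$-free constant plus $\tfrac12 E[\Delta(L)\cD(L)]$, so the three argmax sets coincide exactly.

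To prove the Main Lemma I would condition first on $(Z,L)$ and then on $L$, writing $\mu_Y(z)=E[Y|Z=z,L]$, $\mu_A(z)=E[A|Z=z,L]$, and invoke the IV identity $\mu_Y(1)-\mu_Y(-1)=\delta(L)\Delta(L)$, which holds under Assumptions~\ref{asm:unconfoundedness2}--\ref{asm:weak} by the reasoning behind Theorem~\ref{thm:identification} (equivalently, \cite{wang2018bounded}), together with $\mu_A(1)-\mu_A(-1)=\delta(L)$, $\mu_Y(-1)=E[Y|Z=-1,L]$, $\mu_A(-1)=E[A|Z=-1,L]$. Then I check the three submodels. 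In $\cM'_1$ ($f,\delta$ correct): the inverse-probability weights cancel the true density, terms free of $Z$ vanish since $\sum_z z=0$, and the remainder equals $\Delta(L)-\Delta(L,\theta^*)$, which cancels the additive $\Delta(L,\theta^*)$. In $\cM'_2$ ($f,\Delta$ correct): the same cancellation gives $\delta(L,\beta^*)^{-1}\{(\mu_Y(1)-\mu_Y(-1))-\Delta(L)(\mu_A(1)-\mu_A(-1))\}=0$, the $Z$-free terms again vanish, leaving the additive $\Delta(L)$. In $\cM'_3$ ($\Delta$, $E[Y|Z=-1,L]$, $\delta$, $E[A|Z=-1,L]$ correct): the crucial point is that $\mu_Y(z)-\Delta(L)\mu_A(z)=E[Y|Z=-1,L]-\Delta(L)E[A|Z=-1,L]$ does not depend on $z$ (by the IV identity), so the inverse-probability term is cancelled exactly by the subtracted term $\frac{Z}{\delta(L,\beta^*)f(Z|L,\nu^*)}\{E^*(Y|L,Z=-1)-\Delta(L)E^*(A|L,Z=-1)\}$, regardless of how $\delta(L,\beta^*),f(Z|L,\nu^*)$ are misspecified, again leaving the additive $\Delta(L)$. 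In all three cases $E[\widetilde\Delta^*_{MR}(L)\mid L]=\Delta(L)$, hence the identity holds on $\cM'_{union}$, which finishes the proof via the reduction above.

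The main obstacle is the $\cM'_3$ case: there both $f(Z|L)$ and the compliance score $\delta(L)$ may be wrong, so one cannot collapse the inverse-probability weights as in $\cM'_1$ and $\cM'_2$; robustness must instead come from an exact algebraic cancellation between the inverse-probability term and the augmentation term, which is available only because $E[Y|Z=z,L]-\Delta(L)E[A|Z=z,L]$ is constant in $z$, and this constancy is exactly where Assumptions~\ref{asm:unconfoundedness2}--\ref{asm:weak} enter, through the identity $\mu_Y(1)-\mu_Y(-1)=\delta(L)\Delta(L)$. The remaining work is routine bookkeeping of which population limits coincide with their true targets in which submodel.
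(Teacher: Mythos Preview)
Your proposal is correct and follows essentially the same route as the paper: both use $2I\{A=\cD(L)\}-1=A\cD(L)$ (resp.\ with $Z$) to split $\Delta^{*(j)}_{\cD}$ into a $\cD$-free constant plus $\tfrac{1}{2}E[\widetilde\Delta^{*}_{MR}\,\cD(L)]$, and then argue that $E[\widetilde\Delta^{*}_{MR}\mid L]=\Delta(L)$ on the union model. The paper defers the three-case verification to Theorem~6 of \cite{wang2018bounded}, whereas you spell it out; one small slip in your write-up is that under $\cM'_3$ the model for $\delta(L)$ \emph{is} assumed correct, though your cancellation argument for that case happens not to use it.
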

Consequently, following the theoretical results established in Section~\ref{sec:theory} in the Supplementary Material, the risk of the estimated treatment regime $\widehat \cD$ converges to the Bayes risk in probability.

\section{Simulation experiments}\label{sec:simulation}
\yifan{In this section, we report extensive simulation studies comparing the proposed estimators to outcome weighted learning \citep{zhao2012estimating} and residual weighted learning \citep{rwl}, which are in principle valid only under unconfounded treatment.}

\subsection{Simulation settings}

We generated $L$ from a uniform distribution on $[-1,1]^5$. Treatment $A$ was generated under a logistic regression with success probability,
\begin{align*}
\Pr(A=1|L,Z,U)=\expit\{2L^{(1)}+2.5Z-0.5U\},
\end{align*}
with $Z$ a Bernoulli event with probability 1/2, and $U$ from a bridge distribution with parameter $\phi=1/2$.
By a theorem  of \cite{10.1093/biomet/90.4.765}, the above data generating mechanism ensures that there exists a vector $\alpha$ such that $\logit\{\Pr(A=1|L,Z)\}= \alpha^T(1,L,Z)$, so that upon marginalizing over $U$ the model for $f(A|L,Z)$ remains a logistic regression.
\yifan{Additional simulations are conducted in the Supplementary Material (Tables~\ref{simu3}-\ref{simu6}) to illustrate how the strength of the instruments affects the variance of the estimated value functions.}

 The outcome $Y$ was generated differently in each scenario as described below. The sample size was 500 for each scenario. We repeated the simulation 500 times. A large independent test set with 10000 subjects was used to evaluate the performance of different methods. \yifan{Additional simulation results with sample sizes 250 and 1000 are shown in Tables~\ref{simu2501}-\ref{simu1002} in the Supplementary Material.}

The proposed methods were implemented according to Section~\ref{sec:2.3} with $\widehat \delta(L)=\widehat f(A|L,Z=1) - \widehat f(A|L,Z=-1)$ and $\widehat f(Z|L)$, where $\widehat f(A|L,Z)$ and $\widehat f(Z|L)$ were estimated from logistic regression models.
Outcome weighted learning and residual weighted learning likewise used $\widehat f(A|L,Z)$ for $f(A|L,Z)$.
\yifan{In addition, we implemented multiply robust weights with
correctly specified $M_1'$ but incorrect models for $M_2'$ and $M_3'$ to better understand the sensitivity of the proposed methods to misspecified nuisance models.
In particular, $E[Y|L,Z]$ was estimated by a linear regression, denoted by $\widehat E[Y|L,Z]$, and
$\Delta(L)$ was estimated by $\widehat \Delta(L) = \widehat \Delta^Y(L)/\widehat \delta(L)$, where $\widehat \Delta^Y(L) = \widehat E[Y|L,Z=1] - \widehat E[Y|L,Z=-1]$.
Furthermore, $\widehat E[A|Z=-1,L]$ and $\widehat \delta(L)$ were estimated by logistic regressions as specified above.}
 Both linear and Gaussian kernels
were considered for all methods. We applied cross-validation for choosing tuning parameters by searching over a pre-specified finite set following \cite{zhao2012estimating}.

We considered the following four scenarios for the outcome model, with both linear and nonlinear outcome models:
\begin{align*}
(1) \quad \quad  & Y = h(L) + q(L) A + \epsilon,  \\
(2) \quad \quad  & Y = h(L) + q(L)A + 0.5U + \epsilon, \\
(3) \quad \quad  & Y = \exp\{h(L) + q(L) A\}+ \epsilon,  \\
(4) \quad \quad  & Y = \exp\{h(L) + q(L)A\} + U + \epsilon,
\end{align*}
where the error term $\epsilon$ followed the standard normal distribution, and
\begin{align*}
h(L) &= (0.5+0.5L^{(1)} +0.8L^{(2)} +0.3 L^{(3)} -0.5L^{(4)} +0.7 L^{(5)}),\\
q(L)&=(0.2- 0.6 {L^{(1)}} - 0.8 {L^{(2)}}).
\end{align*}
Scenarios 1 and 3 were considered in \cite{zhou2017augmented}. Scenarios 2 and 4 are modifications of 1 and 3 by adding unmeasured confounding.

\subsection{Numerical results}

Table~\ref{simu1} reports the mean and standard deviation of value functions evaluated at estimated optimal regimes in test samples.
Table~\ref{simu2} reports the mean and standard deviation of correct classification rates in test samples.
\yifan{It is interesting to note that, by leveraging Equation~\eqref{eq:equiv}, the proposed estimators with $\widehat W^{(1)}$ and $\widehat W^{(2)}$ give the same estimated treatment regime, so do the multiply robust estimators with $\widehat W^{(1)}_{MR}$ and $\widehat W^{(2)}_{MR}$.}

 \yifan{In Scenario~1,
as unconfoundedness assumption~\ref{asm:unconfoundedness} holds, it is not surprising that all methods perform similarly.
In Scenario 2, where $U$ is present, the treatment assignment and the outcome are confounded. Outcome weighted learning and residual weighted learning in this case fail to find an optimal regime.  Our estimated treatment regime performs much better for both linear and Gaussian kernels.
In Scenarios 3 and 4, we again observe a \mbox{consistent} pattern that the proposed method performs much better in the presence of unmeasured confounding.
Furthermore, in almost all scenarios, the multiply robust estimator improves upon inverse weighted estimator and residual weighted learning improves upon outcome weighted learning.}


\begin{table}[!h]
\begin{center}
\caption{\label{simu1}
Simulation results: Mean $\times 10^{-2} $ (sd $\times 10^{-2}$) of value functions}
\begin{tabular}{cccccc}
\noalign{\smallskip}
\noalign{\smallskip}
 & Kernel   & OWL & RWL & IV-IW   & IV-MR  \\
\noalign{\smallskip}
\multirow{2}{*}{1} & Linear  &  96.0~(2.9)   &  97.3~(1.8)  &  95.6~(4.8)  &  96.9~(2.9)    \\
                & Gaussian &   87.6~(7.5)  &  95.4~(3.1)  &  89.3~(9.0)  &  93.4~(5.7)    \\
\noalign{\smallskip}
\multirow{2}{*}{2} & Linear & 35.9~(17.3)  &  37.9~(18.4) & 91.5~(7.6) &  92.3~(7.2)      \\
                & Gaussian  &  61.2~(9.9)   & 61.4~(10.2)  & 81.8~(11.3) & 85.4~(9.7)    \\
\noalign{\smallskip}
\noalign{\smallskip}
\multirow{2}{*}{3} & Linear &  356.5~(4.4) &  359.4~(2.5) & 358.5~(3.4)  &  358.9~(3.1)   \\
                & Gaussian   &  297.0~(33.0) &  356.6~(4.4) & 315.8~(34.6)  &  354.7~(8.4)    \\
\noalign{\smallskip}
\multirow{2}{*}{4} & Linear &  275.1~(4.6)  & 275.8~ (6.6)  & 349.1~(12.1) &  349.8~(10.1)   \\
                & Gaussian   &  280.4~(13.2)  & 298.2~(14.0) & 308.3~(33.8) & 331.2~(23.0)    \\
\end{tabular}
\end{center}
\source{OWL:  outcome weighted learning; RWL: residual weighted learning; IV-IW: the proposed estimator with weight $\widehat W^{(1)}$ or $\widehat W^{(2)}$; IV-MR: the proposed multiply robust estimator with weight $\widehat W^{(1)}_{MR} $ or $\widehat W^{(2)}_{MR} $. The empirical optimal value functions are 0.998, 0.995, 3.636, 3.630 for four scenarios, respectively.}
\end{table}

\begin{table}[!h]
\begin{center}
\caption{\label{simu2}
Simulation results: Mean $\times 10^{-2}$ (sd $\times 10^{-2}$) of correct classification rates}
\begin{tabular}{cccccc}
\noalign{\smallskip}
\noalign{\smallskip}
 & Kernel  & OWL & RWL & IV-IW & IV-MR  \\
\noalign{\smallskip}
\multirow{2}{*}{1}  & Linear &  88.0~(4.2)   &  90.2~(3.3) &  87.8~(5.6)  &  89.7~(4.0)      \\
                & Gaussian  &   79.3~(7.5)  &  87.5~(4.1) & 81.4~(8.5)  &  85.4~(6.0)    \\
\noalign{\smallskip}
\multirow{2}{*}{2} & Linear  &  42.0~(10.0) &   42.9~(10.3) & 83.0~(7.9)  &  84.0~(7.6)    \\
                & Gaussian &  57.1~(6.6)  &  57.5~(6.9)  & 74.5~(9.5)  &  77.5~(8.6)    \\
\noalign{\smallskip}
\noalign{\smallskip}
\multirow{2}{*}{3} & Linear &  88.7~(3.6)  &  90.0~(3.7) & 90.3~(3.4)  &  89.9~(3.2)   \\
                & Gaussian  &  71.1~(9.8)  &  87.9~(4.6) & 75.7~(11.7)  &  88.0~(5.2)  \\
\noalign{\smallskip}
\multirow{2}{*}{4} & Linear &  37.3~(2.0)  &  37.5~(2.7) & 83.1~(7.4) &  83.2~(6.7)   \\
                & Gaussian  &  44.8~(6.5)  &  48.4~(7.7)  & 69.0~(11.3) &  74.8~(10.2) \\
\end{tabular}
\end{center}
\source{OWL:  outcome weighted learning; RWL: residual weighted learning; IV-IW: the proposed estimator with weight $\widehat W^{(1)}$ or $\widehat W^{(2)}$; IV-MR: the proposed multiply robust estimator with weight $\widehat W^{(1)}_{MR} $ or $\widehat W^{(2)}_{MR}$.}
\end{table}

\yifan{As can be seen from Tables~\ref{simu3}-\ref{simu6} in the Supplementary Material, a higher compliance rate generally leads to a lower variance of the estimated regime in terms of both value functions and correct classification rates. In Tables~\ref{simu2501}-\ref{simu1002}, the observed patterns are consistent across different sample sizes, and as sample size increases, the proposed methods have higher prediction accuracy.}

\section{Data analysis}\label{sec:realdata}

In this section, we follow  \cite{angrist1998} and study a sample of married mothers with two or more children from 1980 census data. The data entail a publicly available sample from the U.S. 1980 census of married and unmarried mothers.
\cite{angrist1998} estimated the local average treatment effect of having a third child among
mothers with at least two children. \cite{athey2019} identified a conditional local average treatment effect given several covariates.

We seek to provide a personalized recommendation on whether or not a woman should plan to have a third child without compromising her ability to participate in the labor market. Therefore, we wish to discover optimal regimes for deciding to have three or more children in order to maximize the probability of remaining in the labor market. We included the following five covariates considered in \cite{athey2019}: the mother's age at the birth of her first child, her age at census time, her years of education and her race, as well as the father's income.
The outcome $Y$ was whether or not the mother worked in the year preceding the census. The treatment $A$ denoted whether the mother had three or more children at census time, and the instrument $Z$ was whether or not the mother's first two children were of the same sex.

In order to draw comparison between the various methods, we randomly selected 500 subjects as training set and 5000 subjects as test set from the original dataset including 561,459 subjects. This procedure was repeated 100 times.  We performed the analysis on the training dataset, and obtained the estimated optimal treatment regimes from the four methods evaluated in the previous section. The nuisance parameters were estimated as described in Section~\ref{sec:simulation} except for $E[Y|L,Z]$ which we modeled as a logistic regression because the outcome was binary. Tuning parameters were selected in the same way as Section~\ref{sec:simulation}.
Empirical values of estimated treatment regime $\widehat \cD$ were evaluated with
\begin{align*}
V = \frac{1}{n}\sum_{i=1}^n \widehat W^{(1)}_i I\{A_i = \widehat \cD(L_i)\},
\end{align*}
where $\widehat W_i^{(1)}$  was estimated by logistic regression according to Section~\ref{sec:2.3} using test dataset. A larger empirical value may be interpreted as a better performance.

Results are presented in Table~\ref{real1}.
Both proposed methods have higher values for linear and Gaussian kernels, and
Monte Carlo standard errors are comparable across all methods.
In addition, compared to outcome weighted learning, residual weighted learning has lower mean and higher variance of value functions for both linear and gaussian kernels, which is less likely to happen if no unmeasured confounding assumption holds.
A possible reason is that unmeasured confounding causes inaccurate estimation of outcome weighted learning and residual weighted learning. We selected 500 subjects and investigated the estimated linear decision rules. Intuitively, one might expect that having a third child would generally reduce a mother's labor participation even if the effects are heterogeneous. However, half of the coefficients including intercept of the estimated decision function for residual weighted learning appear to be positive.
Thus, the corresponding decision rule might be incorrectly recommending women to have a third child which may in fact reduce their labor participation. In contrast, most of the coefficients including intercept of the estimated decision function for the multiply robust estimator are negative and therefore, the corresponding decision rule seems to recommend the expected optimal policy.

\begin{table}[h]
\begin{center}
\caption{\label{real1} Real data application: Mean $\times 10^{-2}$ (sd $\times 10^{-2}$) of $V$}
\begin{tabular}{ccccc}
\noalign{\smallskip}
\noalign{\smallskip}
Kernel  &  OWL & RWL & IV-IW & IV-MR  \\
\noalign{\smallskip}
Linear      & 60.5~(21.6) & 60.8~(25.1) &  61.2~(24.3) & 63.7~(24.1)  \\
Gaussian    & 64.5~(24.3) & 63.8~(25.1) &  64.6~(22.3) & 65.4~(23.7) \\
\end{tabular}
\end{center}\source{OWL:  outcome weighted learning; RWL: residual weighted learning; IV-IW: the proposed estimator with weight $\widehat W^{(1)}$ or $\widehat W^{(2)}$; IV-MR: the proposed multiply robust estimator with weight $\widehat W^{(1)}_{MR} $ or $\widehat W^{(2)}_{MR} $.}
\end{table}

\section{Complier optimal treatment regimes}\label{sec:cace}

In this section, we target complier optimal treatment regimes, i.e., treatment regimes that would optimize the potential outcome among compliers:
\begin{align*}
\cD^\dag(L) = \sign\{ E[ Y_1 - Y_{-1}| A_1 >A_{-1}, L ] \},
\end{align*}
where $A_z$ denotes the potential treatment under an intervention that sets the IV $Z$ to $z$.
We define compliers' value function,
\begin{align*}
\cV_c(\cD) = E\bigg[ I\{\cD(L)=1\} E[ Y_1 | A_1 >A_{-1}, L ] + I\{\cD(L)=-1\} E[ Y_{-1} | A_1 >A_{-1}, L ]   \bigg| A_1 > A_{-1}  \bigg],
\end{align*}
provided that $\Pr(A_1 > A_{-1})>0$.
Thus, complier optimal treatment regimes can be formulated as
\begin{align*}
\cD^\dag=\arg\max_{\cD} \cV_c(\cD) .
\end{align*}
In order to identify complier optimal treatment regimes, we make the following well-known assumption.
\begin{assumption} \emph{(Monotonicity)}
$\Pr(A_{1} \geq A_{-1}) = 1$.
\label{asm:monotonicity}
\end{assumption}

Assumption~\ref{asm:monotonicity} essentially rules out the existence of defiers in the population, i.e., with $A_{1}<A_{-1}$. Furthermore, we assume $Z$ to be a causal IV \citep{hernan2006epi}, i.e., the causal effects of the IV on the treatment and outcome are unconfounded given $L$ in the following sense.
\begin{assumption} \emph{(Causal IV)}
$Z \indep (A_z,Y_{z,a})|L$ for $z,a=\pm 1$.
\label{asm:causal IV}
\end{assumption}
Then we have the following two identification results analogous to Theorems~\ref{thm:identification} and \ref{thm:identification2}, respectively.
\begin{theorem}
Under Assumptions~\ref{IV Relevance}-\ref{Exclusion Restriction}, \ref{asm:IV positivity}, and \ref{asm:monotonicity}-\ref{asm:causal IV}, the compliers' value function is nonparametrically identified:
\begin{align*}
\cV_c(\cD)   = E \left[\frac{ZAYI\{A=\cD(L)\}}{\{\Pr(A=1|Z=1)-\Pr(A=1|Z=-1)\} f(Z|L) } \right].
\end{align*}
Therefore, complier optimal treatment regimes are given by
$$\arg\max_{\cD} E \left[\frac{ZAYI\{A=\cD(L)\}}{f(Z|L) } \right]. $$
\label{thm:identification3}
\end{theorem}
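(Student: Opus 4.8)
The plan is to follow the route used to prove Theorem~\ref{thm:identification}, but to keep track of the potential treatments $A_1$ and $A_{-1}$ so that the complier event $\{A_1>A_{-1}\}$ emerges on its own, and then to observe that the change of measure induced by the IV weighting is exactly the one that converts the marginal law of $L$ into the law of $L$ among compliers. Concretely, I would start from the candidate functional $E[ZAYI\{A=\cD(L)\}/(\delta^* f(Z\mid L))]$, where $\delta^*=\Pr(A=1\mid Z=1)-\Pr(A=1\mid Z=-1)$ is a constant, and compute the inner conditional expectation given $L$. Using IV positivity (Assumption~\ref{asm:IV positivity}) to invert the propensity score, $E[ZAYI\{A=\cD(L)\}/f(Z\mid L)\mid L]$ collapses to $E[AYI\{A=\cD(L)\}\mid Z=1,L]-E[AYI\{A=\cD(L)\}\mid Z=-1,L]$; splitting on the sign of $A$ via $AYI\{A=\cD(L)\}=I\{\cD(L)=1\}YI\{A=1\}-I\{\cD(L)=-1\}YI\{A=-1\}$ reduces the problem to understanding $E[YI\{A=a\}\mid Z=z,L]$.

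Next I would pass to potential quantities. By consistency $A=A_z$ and $Y=Y_A$ on $\{Z=z\}$, and by the exclusion restriction (Assumption~\ref{Exclusion Restriction}) $Y_A=Y_{A_z}$, so $E[YI\{A=1\}\mid Z=z,L]=E[Y_1 I\{A_z=1\}\mid Z=z,L]$; the causal IV assumption (Assumption~\ref{asm:causal IV}), $Z\indep(A_z,Y_{z,a})\mid L$, then removes the conditioning on $\{Z=z\}$, giving $E[Y_1 I\{A_z=1\}\mid L]$, and the analogue holds for $a=-1$. Taking differences over $z=\pm1$ and invoking monotonicity (Assumption~\ref{asm:monotonicity}), the identity $I\{A_1=1\}-I\{A_{-1}=1\}=I\{A_1>A_{-1}\}$ manufactures the complier indicator, so the conditional difference above equals $\psi(L):=I\{\cD(L)=1\}E[Y_1 I\{A_1>A_{-1}\}\mid L]+I\{\cD(L)=-1\}E[Y_{-1}I\{A_1>A_{-1}\}\mid L]$.

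Finally, I would take expectation over the marginal law of $L$. Writing $\psi(L)=\Pr(A_1>A_{-1}\mid L)\{I\{\cD(L)=1\}E[Y_1\mid A_1>A_{-1},L]+I\{\cD(L)=-1\}E[Y_{-1}\mid A_1>A_{-1},L]\}$, the factor $\Pr(A_1>A_{-1}\mid L)$ is precisely the density ratio that reweights $E_L$ into $\Pr(A_1>A_{-1})\,E[\cdot\mid A_1>A_{-1}]$, so that $E_L[\psi(L)]=\Pr(A_1>A_{-1})\,\cV_c(\cD)$. It then remains to identify the complier proportion: applying the same consistency-plus-causal-IV argument to the first stage (i.e. with $Y\equiv1$) and using monotonicity shows $\Pr(A=1\mid Z=1)-\Pr(A=1\mid Z=-1)=\Pr(A_1>A_{-1})=\delta^*$, which is strictly positive under IV relevance (Assumption~\ref{IV Relevance}); this cancels the leading factor and delivers $\cV_c(\cD)=E[ZAYI\{A=\cD(L)\}/(\delta^* f(Z\mid L))]$. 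The second claim follows at once, since $\delta^*$ is a positive constant free of $\cD$, so dropping it leaves $\arg\max_\cD\cV_c(\cD)=\arg\max_\cD E[ZAYI\{A=\cD(L)\}/f(Z\mid L)]$.

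I expect the delicate step to be the last one: one must check that the weighting $Z/f(Z\mid L)$ together with the selector $AI\{A=\cD(L)\}$ produces exactly the complier-subpopulation reweighting of the law of $L$ rather than its marginal, and that the normalizing constant appearing in the denominator is the complier proportion itself. This is where monotonicity does double duty --- it both creates the complier indicator from $I\{A_1=1\}-I\{A_{-1}=1\}$ and pins down its probability through the first-stage contrast --- and it is the feature that distinguishes Theorem~\ref{thm:identification3} from Theorem~\ref{thm:identification}, where division by the covariate-specific $\delta(L)$ was required.
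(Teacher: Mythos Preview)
Your proposal is correct and follows essentially the same route as the paper's proof: both pass to potential quantities $(Y_a,A_z)$ via consistency, exclusion restriction, and the causal IV assumption, then use monotonicity to produce the complier indicator $I\{A_1>A_{-1}\}$ from $I\{A_1=1\}-I\{A_{-1}=1\}$ and identify the normalizing constant as $\Pr(A_1>A_{-1})$. The only cosmetic difference is that you condition on $L$ first and keep the $I\{\cD(L)=1\}$ and $I\{\cD(L)=-1\}$ terms together, whereas the paper works unconditionally and isolates a $\cD$-free remainder $\kappa$ before recombining; your organization is arguably a touch cleaner but the content is the same.
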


\begin{theorem}
Under Assumptions~\ref{IV Relevance}-\ref{Exclusion Restriction}, \ref{asm:IV positivity}, and \ref{asm:monotonicity}-\ref{asm:causal IV},
complier optimal treatment regimes are nonparametrically identified,
\begin{align}
\arg\max_{\cD} \cV_c(\cD) = \arg\max_{\cD} E \left[\frac{YI\{Z=\cD(L)\}}{ f(Z|L) } \right].
\end{align}
\label{thm:identification4}
\end{theorem}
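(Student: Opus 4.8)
The plan is to obtain Theorem~\ref{thm:identification4} from Theorem~\ref{thm:identification3} by the same pointwise-optimization argument that passes from Theorem~\ref{thm:identification} to Theorem~\ref{thm:identification2}. Theorem~\ref{thm:identification3} already gives
\[
\arg\max_{\cD}\cV_c(\cD)=\arg\max_{\cD} E\!\left[\frac{ZAYI\{A=\cD(L)\}}{f(Z|L)}\right]
\]
(the marginal complier fraction appearing in the denominator of $\cV_c$ being a strictly positive constant, which drops out of the $\arg\max$), so it suffices to show that this set of maximizers coincides with $\arg\max_{\cD} E[YI\{Z=\cD(L)\}/f(Z|L)]$.

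First I would reduce each optimization to a problem solved separately within strata of $L$. For any integrable weight $W=W(Y,A,Z,L)$,
\[
E\big[WI\{A=\cD(L)\}\big]=E_L\Big\{E[WI\{A=1\}\mid L]\,I\{\cD(L)=1\}+E[WI\{A=-1\}\mid L]\,I\{\cD(L)=-1\}\Big\},
\]
so a regime $\cD$ is a maximizer if and only if $\cD(L)=\sign\{\kappa_A(L)\}$ almost surely on $\{\kappa_A(L)\neq 0\}$, where the contrast function is $\kappa_A(\ell)\equiv E[W(I\{A=1\}-I\{A=-1\})\mid L=\ell]=E[WA\mid L=\ell]$ (using $A\in\{\pm1\}$); in particular the set of maximizers depends on $W$ only through $\kappa_A$. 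The identical statement holds for $\cD\mapsto E[WI\{Z=\cD(L)\}]$, with contrast $\kappa_Z(\ell)\equiv E[WZ\mid L=\ell]$.

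It then remains to compute the two contrasts and observe that they agree identically. For the $A$-indexed objective, $W=ZAY/f(Z|L)$ gives $\kappa_A(L)=E[ZA^2Y/f(Z|L)\mid L]=E[ZY/f(Z|L)\mid L]$ since $A^2=1$; for the $Z$-indexed objective, $W=Y/f(Z|L)$ gives $\kappa_Z(L)=E[YZ/f(Z|L)\mid L]=E[ZY/f(Z|L)\mid L]$. By Assumption~\ref{asm:IV positivity} both equal the observed-data contrast $E[Y\mid Z=1,L]-E[Y\mid Z=-1,L]$, which notably does not involve $A$ at all. Since the two objectives share the same contrast function, they share the same set of maximizers, and combining with the first paragraph yields Theorem~\ref{thm:identification4}.

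I do not anticipate a genuine obstacle: once Theorem~\ref{thm:identification3} is in hand the argument is purely algebraic. The only points requiring a line of care are the strict positivity of the marginal complier fraction (which equals $\Pr(A_1>A_{-1})>0$ under Assumptions~\ref{asm:monotonicity}--\ref{asm:causal IV}), needed to cancel it from the $\arg\max$, and the handling of ties on $\{\kappa_A(L)=0\}$, which is harmless precisely because the two contrast functions coincide everywhere, so no discrepancy between the two maximizer sets can arise from non-uniqueness.
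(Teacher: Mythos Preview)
Your argument is correct, but the route differs from the paper's. The paper does not appeal to Theorem~\ref{thm:identification3}; instead it expands $E[YI\{Z=\cD(L)\}/f(Z|L)]$ directly in terms of potential outcomes $(Y_a,A_z)$, uses the causal-IV assumption to integrate out $Z$, and under monotonicity reduces the expression to $E[I\{\cD(L)=1\}E(Y_1-Y_{-1}\mid A_1>A_{-1},L)\mid A_1>A_{-1}]\Pr(A_1>A_{-1})+\kappa$ with $\kappa$ free of $\cD$. Thus the paper's proof is self-contained and essentially parallels the proof of Theorem~\ref{thm:identification3}. Your approach instead treats Theorem~\ref{thm:identification3} as given and then shows, by a purely algebraic contrast-function calculation (exploiting $A^2=1$ and $Z^2=1$), that the $A$-indexed and $Z$-indexed objectives share the same pointwise contrast $E[ZY/f(Z|L)\mid L]$ and hence the same maximizer sets. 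This avoids repeating the potential-outcomes algebra and makes transparent why the treatment variable $A$ disappears from the optimization. The trade-off is that your proof depends on Theorem~\ref{thm:identification3}, whereas the paper's proof stands on its own; both are valid and of comparable length.
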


\begin{remark}
Theorem~\ref{thm:identification4} is somewhat surprising, as it suggests that one can in fact identify optimal treatment regimes for those who would always comply to their assigned treatment without observing their realized treatment values.  Intuition about this result is gained upon noting that under monotonicity, the effect of $Z$ on $Y$ within levels of $L$ is proportional to the causal \mbox{effect} of $A$ on $Y$ among compliers within levels of $L$, where the proportionality constant within levels of $L$ is the (nonnegative) additive causal effect of $Z$ on $A$.
Consequently, under monotonicity assumption, optimizing the value function $E[E(Y|Z=\cD(L),L)]$ with respect to the treatment assignment policy \yifan{(i.e., applying standard outcome weighted learning w.r.t. $Z$)},
is equivalent to optimizing the value function among compliers with respect to the treatment $\arg\max E(Y_{\cD(L)}|A_1>A_{-1})$, a task which can therefore be accomplished without directly observing the treatment variable. Thus, it is possible to learn who might benefit from the intervention even when one does not observe $A$ and therefore cannot identify the complier average treatment effect.
\end{remark}

\begin{remark}\label{remark3}
\yifan{
Typically, the first step in IV analyses is to assess the strength of the instrument by calculating the compliance rate. The strength of an IV is directly related to the performance of the corresponding estimator.
When one does not observe the treatment, one can not guarantee that the IV and the treatment are strongly
associated, in which case weak IV problem cannot necessarily be assessed  \citep{bound1995,small2008war,baiocchi2010building,baiocchi2014instrumental,ertefaie2018}. In well designed randomized experiments subject to non-compliance, although not perfect, compliance nevertheless typically remains relatively high (i.e., $\geq$ 80\%).
}
\end{remark}

Empirical versions of equations in Theorems~\ref{thm:identification3} and \ref{thm:identification4} give rise to estimators of value function and optimal treatment regime, respectively. Furthermore, it is relatively straightforward to show that results analogous to Section~\ref{sec:theory}
in the Supplementary Material also hold for complier optimal treatment regimes.

\section{Discussion}\label{sec:discussion}
In this paper,
we have proposed a general instrumental variable approach to learning optimal treatment regimes under endogeneity. To our knowledge, this is the first result for estimating optimal regimes when no unmeasured confounding fails to hold.
Specifically, we established identification of both value function $E[Y_{\cD(L)}]$ for a given regime $\cD$ and optimal regimes $\arg \max_{\cD} E[Y_{\cD(L)}]$  with the aid of a binary IV.
We also constructed novel multiply robust classification-based estimators.
Furthermore, we proposed to identify and estimate optimal treatment regimes among compliers under monotonicity. In the latter case,  we established the somewhat surprising result that complier optimal treatment regimes can be consistently estimated without accessing compliance information. Our approach was illustrated via extensive simulation studies and a real data application.

\yifan{
The proposed methods may be improved or extended in several \mbox{directions}.
Sometimes the values of instruments are unknown and must be estimated using the data \citep{baiocchi2014instrumental,ertefaie2018}, e.g., preference-based IVs \citep{brookhart2007preference}.
Understanding the implication for inference of empirically defining IV is a fruitful avenue of future research.
Moreover, as mentioned in Remark~\ref{remark3}, it is known that weak IVs can be problematic \citep{bound1995,small2008war,baiocchi2010building,baiocchi2014instrumental,ertefaie2018}. It may be possible to estimate optimal treatment regimes by empirically building stronger instruments \citep{baiocchi2010building,zubizarreta2013stronger,baiocchi2014instrumental,ertefaie2018}.
}

The proposed methods can also be modified in case of a censored survival outcome by accounting for possibly dependent censoring, thus providing extensions to \cite{zhao2015doubly,cui2017tree} to leverage an IV.
In addition, trials with multiple treatment arms occur frequently. Thus a potential extension of our method is in the direction of multicategory classification \citep{JMLR:v18:17-003,Zhou2018OutcomeWeightedLF,Qi2019}. Furthermore, personalized dose finding \citep{chen2016dosefinding,zhou2018dosedr} with unmeasured confounding is also of interest. Finally, it would be of interest to extend our methods to mobile health dynamic treatment regimes where a sequence of decision rules need to be learned under endogeneity \citep{Robins2004,Zhang2013RobustEO,laber2014,zhao2015new,luckett2019}.

\section{Funding}
The authors were supported by NIH funding: R01CA222147 and R01AI127271.

\newpage
\appendix
\begin{center}
{\LARGE Appendix}
\end{center}
\section{Proof of Theorem~\ref{thm:identification}}
\begin{proof} We first note that
\begin{align*}
& E \left [\frac{  ZI\{\cD(L)=A\} YA }{ \delta(L)f(Z|L) } \right]\\
= & E \left[ \sum_a \frac{ Z I\{A=a\} I\{\cD(L)=a\}Y_a a}{ \delta(L)f(Z|L)}  \right] \\
= & E \left[ \sum_a \frac{Z I\{A=a\} I\{\cD(L)=a\}E[Y_a|L,U] a}{ \delta(L)f(Z|L)}  \right] \\
= & E \left[ \sum_a \frac{Z \Pr(A=a| L,U,Z) I\{\cD(L)=a\}E[Y_a|L,U] a}{ \delta(L)f(Z|L)}  \right] \\
= & E \left[ \frac{ \Pr(A=1| L,U,Z=1) I\{\cD(L)=1\}E[Y_1|L,U]}{ \delta(L)}  \right] \\
- & E \left[ \frac{ \Pr(A=1| L,U,Z=-1) I\{\cD(L)=1\}E[Y_1|L,U]}{ \delta(L)}  \right] \\
- & E \left[ \frac{ \Pr(A=-1| L,U,Z=1) I\{\cD(L)=-1\}E[Y_{-1}|L,U] }{ \delta(L)}  \right]\\
+ & E \left[ \frac{\Pr(A=-1| L,U,Z=-1) I\{\cD(L)=-1\}E[Y_{-1}|L,U]}{ \delta(L)}  \right] \\
= & E \left[ \frac{  \left[\Pr(A=1| L,U,Z=1)-\Pr(A=1| L,U,Z=-1) \right]I\{\cD(L)=1\}E[Y_1|L,U]  }{\delta(L)}  \right] \\
+ & E \left[ \frac{  \left[\Pr(A=1| L,U,Z=1)-\Pr(A=1| L,U,Z=-1) \right]I\{\cD(L)=-1\}E[Y_{-1}|L,U]  }{\delta(L)}  \right] \\
\equiv & (I).
\end{align*}

  In order to maximize counterfactual mean $E[Y_{\cD(L)}]$, we only need Assumption~\ref{asm:weak} rather than Assumption~\ref{asm:strong}. To see this, note that
\begin{align*}
& E[Y_1|L,U]I\{\cD(L)=1\}+E[Y_{-1}|L,U]I\{\cD(L)= -1\}\\
= & (E[Y_1|L,U]-E[Y_{-1}|L,U])I\{\cD(L)=1\}+E[Y_{-1}|L,U]I\{\cD(L)=1\} + E[Y_{-1}|L,U]I\{\cD(L)=-1\}\\
= & (E[Y_1|L,U]-E[Y_{-1}|L,U])I\{\cD(L)=1\}+E[Y_{-1}|L,U].\\
\end{align*}

By Assumption~\ref{asm:weak}, we further have that
\begin{align*}
(I) = & E\left[  \frac{ \Pr(A=1|U,L,Z=1)- \Pr(A=1|U,L,Z=-1)  }{ \delta(L) \big\{ (E[Y_1|L,U]-E[Y_{-1}|L,U])I\{\cD(L)=1\}+ E[Y_{-1}|L,U] \big\}^{-1} } \right] \\
= & E \bigg \{ \left( E[Y_1|L,U]-E[Y_{-1}|L,U] \right)I\{\cD(L)=1\}\\
& +  \frac{\Pr(A=1|U,L,Z=1)- \Pr(A=1|U,L,Z=-1) }{\delta(L)} E[Y_{-1}|L,U] \bigg \}\\
= & E \big [  E(Y_1-Y_{-1}|L) I\{\cD(L)=1\}\big ] \\
 & + E\left \{ \frac{[\Pr(A=1|U,L,Z=1)- \Pr(A=1|U,L,Z=-1)]E[Y_{-1}|L,U] }{\delta(L)}  \right \} \\
= & E \big [  E(Y_1-Y_{-1}|L) I\{\cD(L)=1\}\big ] + E[\kappa(L,U)],
\end{align*}
where the second term $E[\kappa(L,U)]$ doesn't depend on $\cD$. Recall that
\begin{align*}
\arg\max_\cD E[Y_{\cD(L)}]=\arg\max_{\cD} E[E(Y_1-Y_{-1}|L)I\{\cD(L)=1\}],
\end{align*}
so maximizing $(I)$ is equivalent to maximizing $E[Y_{\cD(L)}]$.

Furthermore, by Assumption~\ref{asm:strong},
\begin{align*}
(I)= & E \left(  I\{\cD(L)=1\}E[Y_1|L,U]+I\{\cD(L)=-1\}E[Y_{-1}|L,U]  \right)  \\
= & E[Y_{\cD(L)}].
\end{align*}
\end{proof}

\section{Proof of Theorem~\ref{thm:identification2}}
\begin{proof}
\begin{align*}
& E \left [\frac{  I\{\cD(L)=Z\} Y }{ \delta(L)f(Z|L) } \right]\\
= & E \left[ \sum_a \frac{I\{\cD(L)=Z\}Y_a I\{A=a\}}{ \delta(L)f(Z|L)}  \right] \\
= & E \left[ \sum_a \frac{ I\{\cD(L)=Z\}E[Y_a|L,U] \Pr(A=a|L,U,Z)}{ \delta(L)f(Z|L)}  \right] \\
= & E \left[ \frac{ \Pr(A=1| L,U,Z=1) I\{\cD(L)=1\}E[Y_1|L,U]}{ \delta(L)}  \right] \\
+ & E \left[ \frac{ \Pr(A=1| L,U,Z=-1) I\{\cD(L)=-1\}E[Y_1|L,U]}{ \delta(L)}  \right] \\
+ & E \left[ \frac{ \Pr(A=-1| L,U,Z=1) I\{\cD(L)=1\}E[Y_{-1}|L,U] }{ \delta(L)}  \right]\\
+ & E \left[ \frac{\Pr(A=-1| L,U,Z=-1) I\{\cD(L)=-1\}E[Y_{-1}|L,U]}{ \delta(L)}  \right] \\
= & E \left[ \frac{  \left[\Pr(A=1| L,U,Z=1)-\Pr(A=1| L,U,Z=-1) \right]I\{\cD(L)=1\}E[Y_1|L,U]  }{\delta(L)}  \right] \\
+ & E \left[ \frac{  \left[\Pr(A=1| L,U,Z=1)-\Pr(A=1| L,U,Z=-1) \right]I\{\cD(L)=-1\}E[Y_{-1}|L,U]  }{\delta(L)}  \right] \\
+ & E \left[ \frac{ \Pr(A=1| L,U,Z=-1) E[Y_1|L,U]}{ \delta(L)}  \right] \\
+ & E \left[ \frac{ \Pr(A=-1| L,U,Z=-1) E[Y_{-1}|L,U]}{ \delta(L)}  \right]\\
= & E \big (  E[Y_1-Y_{-1}|L] I\{\cD(L)=1\}\big ) + E[\kappa(L,U)],
\end{align*}
where
\begin{align*}
\kappa(L,U)= &  \frac{[\Pr(A=1|U,L,Z=1)- \Pr(A=1|U,L,Z=-1)]E[Y_{-1}|L,U] }{\delta(L)}  \\
+ &  \frac{ \Pr(A=1| L,U,Z=-1) E[Y_1|L,U]+ \Pr(A=-1| L,U,Z=-1) E[Y_{-1}|L,U]}{ \delta(L)}.
\end{align*}

\end{proof}

\section{Proof of Theorem~\ref{thm:triply2}}
\begin{proof}
For $\Delta_{\cD}^{*(1)}$, we have that
\begin{align*}
& 2E\Big [W^{*(1)}_{MR} I\{A=\cD(L)\}\Big ]\\
= & E\Big [ W^{*(1)}_{MR}  [2I\{A= \cD(L)\}-1] \Big ] +  E \Big [ W^{*(1)}_{MR} \Big ]\\
= & E\Big [W^{*(1)}_{MR} A\cD(L) \Big] +   E \Big [W^{*(1)}_{MR} \Big] \\
= & E\bigg [\bigg \{ \frac{Z}{\delta(L,\beta^*)f^*(Z|L) }\bigg [Y -A \Delta(L,\theta^*) -E^*[Y|Z=-1,L] +\\
&\Delta(L,\theta^*)E^*[A|Z=-1,L] \bigg ]+ \Delta(L,\theta^*) \bigg \} \cD(L) \bigg ] +  E \Big[ W^{*(1)}_{MR} \Big]\\
= & E\Big [ \Delta(L) \cD(L) \Big ] +  E \Big[ W^{*(1)}_{MR} \Big],
\end{align*}
where the last equality holds under any of $\cM'_1$, $\cM'_2$, or $\cM'_3$,  and the proof follows a similar argument of Theorem 6 in \cite{wang2018bounded}.

For $\Delta_{\cD}^{*(2)}$, we have that
\begin{align*}
& 2 E\Big [W^{*(2)}_{MR} I\{Z=\cD(L)\}\Big ]\\
= & E\Big [ W^{*(2)}_{MR}  [2I\{Z= \cD(L)\}-1] \Big ] +  E \Big [ W^{*(2)}_{MR} \Big ]\\
= & E\Big [W^{*(2)}_{MR} Z\cD(L) \Big] +   E \Big [W^{*(2)}_{MR} \Big] \\
= & E\bigg [\bigg \{ \frac{Z}{\delta(L,\beta^*)f^*(Z|L) }\bigg [Y -A \Delta(L,\theta^*) -E^*[Y|Z=-1,L] +\\
&\Delta(L,\theta^*)E^*[A|Z=-1,L] \bigg ]+ \Delta(L,\theta^*) \bigg \} \cD(L) \bigg ] +  E \Big[  W^{*(2)}_{MR} \Big]\\
= & E\Big [ \Delta(L) \cD(L) \Big ] +  E \Big[ W^{*(2)}_{MR} \Big],
\end{align*}
 where the last equality holds under any of $\cM'_1$, $\cM'_2$, or $\cM'_3$. This completes our proof.
\end{proof}

\section{Proof of Theorem~\ref{thm:identification3}}
\begin{proof}  We have the following equality
\begin{align*}
& E \left [\frac{  I\{\cD(L)=A\}YAZ}{f(Z|L) } \right]\\
= & E \left[ \sum_a \frac{I\{\cD(L)=A\}Y_a I\{A=a\}aZ}{ f(Z|L)}  \right] \\
= & E \left[ \sum_z \sum_a \frac{ I\{\cD(L)=a\} Y_a I\{A_z=a\} I\{Z=z\}az}{ f(Z|L)}  \right] \\
= & E \left[ \sum_z \sum_a I\{\cD(L)=a\} Y_a az I\{A_z=a\} \right] \\
= & E \left[ I\{A_1=1\} I\{\cD(L)=1\}Y_1  \right] - E \left[ I\{A_{-1}=1\} I\{\cD(L)=1\}Y_1  \right] \\
- & E \left[ I\{A_1=-1\} I\{\cD(L)=-1\}Y_{-1}  \right] + E \left[ I\{A_{-1}=-1\} I\{\cD(L)=-1\}Y_{-1}  \right] \\
= & E \left[ I\{A_1=1\} I\{\cD(L)=1\}Y_1  \right] - E \left[ I\{A_{-1}=1\} I\{\cD(L)=1\}Y_1  \right] \\
+ & E \left[ I\{A_1=-1\} I\{\cD(L)=1\}Y_{-1}  \right] - E \left[ I\{A_{-1}=-1\} I\{\cD(L)=1\}Y_{-1}  \right] \\
- & E \left[ I\{A_1=-1\} Y_{-1}  \right] + E \left[ I\{A_{-1}=-1\} Y_{-1}  \right] \\
= & E \bigg[ I\{\cD(L)=1\} Y_1[ I\{A_1=1\} - I\{A_{-1}=1\}]  \bigg] \\
+ & E \bigg[ I\{\cD(L)=1\} Y_{-1}[ I\{A_1=-1\} - I\{A_{-1}=-1\}]  \bigg] + \kappa \\
= & E \bigg[ I\{\cD(L)=1\} E[ Y_1 - Y_{-1}| A_1 >A_{-1}, L ]   \big[ I\{A_1=1\} - I\{A_{-1}=1\} \big]  \bigg] + \kappa \\
= & E \bigg[ I\{\cD(L)=1\} E[ Y_1 - Y_{-1}| A_1 >A_{-1}, L ]   \bigg| A_1 > A_{-1}  \bigg] \Pr(A_1 > A_{-1}) + \kappa, \\
\end{align*}
\vspace{-1.5cm}

where
\vspace{-0.7cm}

\begin{align*}
\kappa= - & E \left[ I\{A_1=-1\} Y_{-1}  \right] + E \left[ I\{A_{-1}=-1\} Y_{-1}  \right]\\
=  & E \left[ [I\{A_1=1\}  - I\{A_{-1}=1\}] Y_{-1}  \right]\\
= & E[Y_{-1} | A_1 >A_{-1} ] \Pr( A_1 >A_{-1}).
\end{align*}
Thus, $$E \left[\frac{ZAYI\{\cD(L)=A\}}{\{\Pr(A=1|Z=1)-\Pr(A=1|Z=-1)\} f(Z|L) } \right]$$ identifies compliers' value function $\cV_c(\cD)$, i.e.,
\begin{align*}
E\bigg[ I\{\cD(L)=1\} E[ Y_1 | A_1 >A_{-1}, L ] + I\{\cD(L)=-1\} E[ Y_{-1} | A_1 >A_{-1}, L ]   \bigg| A_1 > A_{-1}  \bigg].
\end{align*}
\end{proof}

\section{Proof of Theorem~\ref{thm:identification4}}
\begin{proof}We have the following equality
\begin{align*}
& E \left [\frac{  I\{\cD(L)=Z\} Y }{f(Z|L) } \right]\\
= & E \left[ \sum_a \frac{I\{\cD(L)=Z\}Y_a I\{A=a\}}{ f(Z|L)}  \right] \\
= & E \left[ \sum_z \sum_a \frac{ I\{\cD(L)=z\} Y_a I\{A_z=a\} I\{Z=z\}}{ f(Z|L)}  \right] \\
= & E \left[ \sum_z \sum_a I\{\cD(L)=z\} Y_a I\{A_z=a\} \right] \\
= & E \left[ I\{A_1=1\} I\{\cD(L)=1\}Y_1  \right] + E \left[ I\{A_{-1}=1\} I\{\cD(L)=-1\}Y_1  \right] \\
+ & E \left[ I\{A_1=-1\} I\{\cD(L)=1\}Y_{-1}  \right] + E \left[ I\{A_{-1}=-1\} I\{\cD(L)=-1\} Y_{-1}  \right] \\
= & E \left[ I\{A_1=1\} I\{\cD(L)=1\}Y_1  \right] - E \left[ I\{A_{-1}=1\} I\{\cD(L)=1\}Y_1  \right] \\
+ & E \left[ I\{A_1=-1\} I\{\cD(L)=1\}Y_{-1}  \right] - E \left[ I\{A_{-1}=-1\} I\{\cD(L)=1\}Y_{-1}  \right] \\
+ & E \left[ I\{A_{-1}=1\} Y_1  \right] + E \left[ I\{A_{-1}=-1\} Y_{-1}  \right] \\
= & E \bigg[ I\{\cD(L)=1\} Y_1[ I\{A_1=1\} - I\{A_{-1}=1\}]  \bigg] \\
+ & E \bigg[ I\{\cD(L)=1\} Y_{-1}[ I\{A_1=-1\} - I\{A_{-1}=-1\}]  \bigg] + \kappa \\
= & E \bigg[ I\{\cD(L)=1\} E[ Y_1 - Y_{-1}| A_1 >A_{-1}, L ]   \big[ I\{A_1=1\} - I\{A_{-1}=1\} \big]  \bigg] + \kappa \\
= & E \bigg[ I\{\cD(L)=1\} E[ Y_1 - Y_{-1}| A_1 >A_{-1}, L ]   \big| A_1 > A_{-1}  \bigg] \Pr(A_1 > A_{-1}) + \kappa, \\
\end{align*}
where $\kappa= E \left[ I\{A_{-1}=1\} Y_1  \right] + E \left[ I\{A_{-1}=-1\} Y_{-1}  \right]$ does not depend on $\cD$. This completes our proof.
\end{proof}

\newpage
\begin{center}
{ \LARGE Supplementary Material}
\end{center}
\vspace{-0.5cm}

\section{Lower and upper bounds of $E\left[ Y_{\mathcal{D}(L)}\right]$}\label{bounds}

\begin{lemma}
Provided that $Z$ is a valid causal IV (as defined by \cite{Balke1997})  and outcome $Y$ is binary, we have the following lower and upper bounds of the value function,
\begin{align*}
&E \{\omega_1(L) [\mathcal{L}\left( L\right)
 I\left\{ \mathcal{D}(L)=1\right\} +\mathcal{L}_{-1}\left( L\right)]
 + \omega_{-1}(L) [ -\mathcal{U}\left( L\right) I\left\{ \mathcal{D}(L)=-1\right\} +\mathcal{L}_{1}\left( L\right)]\}\\
& \leq E[ \mathcal{L}_{1}(L) I\{\cD(L)=1\} +\mathcal{L}_{-1}(L) I\{\cD(L)=-1\}] \leq E\left[ Y_{\mathcal{D}
(L)}\right],\\
& E\{\omega_1(L) [\mathcal{U}\left( L\right)
 I\left\{ \mathcal{D}(L)=1\right\} +\mathcal{U}_{-1}\left( L\right)]
+ \omega_{-1}(L) [ -\mathcal{L}\left( L\right) I\left\{ \mathcal{D}(L)=-1\right\} +\mathcal{U}_{1}\left( L\right)]\}\\
& \geq E[ \mathcal{U}_{1}(L) I\{\cD(L)=1\} +\mathcal{U}_{-1}(L) I\{\cD(L)=-1\}] \geq E\left[ Y_{\mathcal{D}
(L)}\right],
\end{align*}
where $\omega_1(l),\omega_{-1}(l)\geq 0$, $\omega_1(l)+\omega_{-1}(l)=1$ for any $l$,
\begin{eqnarray*}
\mathcal{L}\left( l\right) = \mathcal{L}_{1}(l) - \mathcal{U}_{-1}(l) = \max
\left \{
  \begin{tabular}{c}
  $ p_{-1,-1,-1,l}+p_{1,1,1,l}-1$\\
  $ p_{-1,-1,1,l}+p_{1,1,1,l}-1$ \\
  $ p_{1,1,-1,l}+p_{-1,-1,1,l}-1$ \\
  $ p_{-1,-1,-1,l}+p_{1,1,-1,l}-1$ \\
  $2p_{-1,-1,-1,l}+p_{1,1,-1,l}+p_{1,-1,1,l}+p_{1,1,1,l}-2$ \\
  $p_{-1,-1,-1,l}+2p_{1,1,-1,l}+p_{-1,-1,1,l}+p_{-1,1,1,l}-2$ \\
  $p_{1,-1,-1,l}+p_{1,1,-1,l}+2p_{-1,-1,1,l}+p_{1,1,1,l}-2$ \\
  $p_{-1,-1,-1,l}+p_{-1,1,-1,l}+p_{-1,-1,1,l}+2p_{1,1,1,l}-2$ \\
  \end{tabular}
\right \}, \\
\end{eqnarray*}
\begin{eqnarray*}
\mathcal{U}\left( l\right) = \mathcal{U}_{1}(l) - \mathcal{L}_{-1}(l) =  \min
\left \{
  \begin{tabular}{c}
  $ 1- p_{1,-1,-1,l} - p_{-1,1,1,l}$\\
  $ 1- p_{-1,1,-1,l} - p_{1,-1,1,l}$ \\
  $ 1- p_{-1,1,-1,l} - p_{1,-1,-1,l}$ \\
  $ 1 - p_{-1,1,1,l} - p_{1,-1,1,l}$ \\
  $2- 2p_{-1,1,-1,l}- p_{1,-1,-1,l} - p_{1,-1,1,l} - p_{1,1,1,l}$ \\
  $2- p_{-1,1,-1,l}- 2p_{1,-1,-1,l} - p_{-1,-1,1,l} - p_{-1,1,1,l}$ \\
  $2- p_{1,-1,-1,l}- p_{1,1,-1,l} - 2p_{-1,1,1,l} - p_{1,-1,1,l}$ \\
  $2- p_{-1,-1,-1,l}- p_{-1,1,-1,l} - p_{-1,1,1,l} - 2p_{1,-1,1,l}$ \\
  \end{tabular}
\right \},
\end{eqnarray*}
\begin{eqnarray*}
\mathcal{L}_{-1}\left( l\right) =& \max
\left \{
  \begin{tabular}{c}
  $p_{1,-1,1,l}$ \\
  $p_{1,-1,-1,l}$ \\
  $p_{1,-1,-1,l} + p_{1,1,-1,l} - p_{-1,-1,1,l} - p_{1,1,1,l} $ \\
  $p_{-1,1,-1,l} + p_{1,-1,-1,l} - p_{-1,-1,1,l} - p_{-1,1,1,l} $ \\
  \end{tabular}
\right \}, \\
\mathcal{U}_{-1}\left( l \right) =& \min
\left \{
  \begin{tabular}{c}
  $1 - p_{-1,-1,1,l}$ \\
  $1- p_{-1,-1,-1,l}$ \\
  $p_{-1,1,-1,l} + p_{1,-1,-1,l} + p_{1,-1,1,l} + p_{1,1,1,l} $ \\
  $p_{1,-1,-1,l} + p_{1,1,-1,l} + p_{-1,1,1,l} + p_{1,-1,1,l} $ \\
  \end{tabular}
\right \}, \\
\mathcal{L}_{1}\left( l\right) =& \max
\left \{
  \begin{tabular}{c}
  $p_{1,1,-1,l}$ \\
  $p_{1,1,1,l}$ \\
  $-p_{-1,-1,-1,l} - p_{-1,1,-1,l} + p_{-1,-1,1,l} + p_{1,1,1,l} $ \\
  $-p_{-1,1,-1,l} - p_{1,-1,-1,l} + p_{1,-1,1,l} + p_{1,1,1,l} $ \\
  \end{tabular}
\right \}, \\
\mathcal{U}_{1}\left( l \right) =& \min
\left \{
  \begin{tabular}{c}
  $1 - p_{-1,1,1,l}$ \\
  $1- p_{-1,1,-1,l}$ \\
  $p_{-1,-1,-1,l} + p_{1,1,-1,l} + p_{1,-1,1,l} + p_{1,1,1,l} $ \\
  $p_{1,-1,-1,l} + p_{1,1,-1,l} + p_{-1,-1,1,l} + p_{1,1,1,l} $ \\
  \end{tabular}
\right \},
\end{eqnarray*}
and $p_{y,a,z,l}$ denotes $\Pr(Y=y,A=a|Z=z,L=l)$.

\label{lemma:bounds}
\end{lemma}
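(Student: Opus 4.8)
The plan is to derive the bounds from the Balke--Pearl linear-programming bounds on the marginal potential-outcome means, together with elementary manipulations of indicator functions and the convex weights $\omega_1,\omega_{-1}$.

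First I would record the ingredient that carries all the causal content: because $Z$ is a valid causal IV and $Y$ is binary, applying the bounds of \cite{Balke1997} within each stratum $\{L=l\}$ gives $\mathcal{L}_a(l)\le \Pr(Y_a=1\mid L=l)\le \mathcal{U}_a(l)$ for $a=\pm1$, with $\mathcal{L}_a,\mathcal{U}_a$ the displayed closed forms in $p_{y,a,z,l}$; in particular $\mathcal{L}_a(l)\le\mathcal{U}_a(l)$. Next, since $\mathcal{D}$ is a deterministic function of $L$ and $Y_{\mathcal{D}(L)}=Y_1 I\{\mathcal{D}(L)=1\}+Y_{-1}I\{\mathcal{D}(L)=-1\}$, the tower property yields $E[Y_{\mathcal{D}(L)}]=E[\Pr(Y_1=1\mid L)I\{\mathcal{D}(L)=1\}+\Pr(Y_{-1}=1\mid L)I\{\mathcal{D}(L)=-1\}]$, so replacing each conditional mean by its per-stratum bound produces the two ``inner'' inequalities
\[
\begin{aligned}
E\big[\mathcal{L}_{1}(L)I\{\mathcal{D}(L)=1\}+\mathcal{L}_{-1}(L)I\{\mathcal{D}(L)=-1\}\big]&\le E[Y_{\mathcal{D}(L)}]\\
&\le E\big[\mathcal{U}_{1}(L)I\{\mathcal{D}(L)=1\}+\mathcal{U}_{-1}(L)I\{\mathcal{D}(L)=-1\}\big].
\end{aligned}
\]

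It then remains to squeeze the two outer expressions against these inner ones pointwise. Using $I\{\mathcal{D}(L)=1\}+I\{\mathcal{D}(L)=-1\}=1$ I would rewrite, for instance,
\[
\mathcal{L}(L)I\{\mathcal{D}(L)=1\}+\mathcal{L}_{-1}(L)=\big(\mathcal{L}(L)+\mathcal{L}_{-1}(L)\big)I\{\mathcal{D}(L)=1\}+\mathcal{L}_{-1}(L)I\{\mathcal{D}(L)=-1\},
\]
and observe that $\mathcal{L}(L)+\mathcal{L}_{-1}(L)=\mathcal{L}_{1}(L)-\mathcal{U}_{-1}(L)+\mathcal{L}_{-1}(L)\le\mathcal{L}_{1}(L)$ because $\mathcal{L}_{-1}\le\mathcal{U}_{-1}$; similarly $-\mathcal{U}(L)I\{\mathcal{D}(L)=-1\}+\mathcal{L}_{1}(L)=\mathcal{L}_{1}(L)I\{\mathcal{D}(L)=1\}+\big(\mathcal{L}_{1}(L)-\mathcal{U}(L)\big)I\{\mathcal{D}(L)=-1\}$ with $\mathcal{L}_{1}(L)-\mathcal{U}(L)=\mathcal{L}_{1}(L)-\mathcal{U}_{1}(L)+\mathcal{L}_{-1}(L)\le\mathcal{L}_{-1}(L)$. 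Hence both bracketed quantities are dominated pointwise by $\mathcal{L}_{1}(L)I\{\mathcal{D}(L)=1\}+\mathcal{L}_{-1}(L)I\{\mathcal{D}(L)=-1\}$, and since $\omega_1,\omega_{-1}\ge0$ with $\omega_1+\omega_{-1}=1$ so is their convex combination; taking expectations chains it below the inner lower bound, which in turn is $\le E[Y_{\mathcal{D}(L)}]$. The upper-bound chain follows from the mirror-image computation, using $\mathcal{U}(L)+\mathcal{U}_{-1}(L)\ge\mathcal{U}_1(L)$ and $\mathcal{U}_1(L)-\mathcal{L}(L)\ge\mathcal{U}_{-1}(L)$ with all inequalities reversed.

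The only step that is not purely mechanical is the first: one must verify that conditioning the Balke--Pearl program on $L=l$ is legitimate under the stated ``valid causal IV'' notion and that the resulting optimum equals the displayed $\mathcal{L}_a(l)$ (resp.\ $\mathcal{U}_a(l)$) --- essentially a citation to \cite{Balke1997} after checking that the assumptions line up, noting that here $\mathcal{L}=\mathcal{L}_1-\mathcal{U}_{-1}$ and $\mathcal{U}=\mathcal{U}_1-\mathcal{L}_{-1}$ are used only as definitions. Everything after that is the indicator algebra and convexity argument above.
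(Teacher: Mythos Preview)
Your proposal is correct and follows essentially the same route as the paper. The paper writes the three equivalent representations
\[
E[Y_{\mathcal D(L)}\mid L]=E(Y_1\mid L)I\{\mathcal D(L)=1\}+E(Y_{-1}\mid L)I\{\mathcal D(L)=-1\}
=E(Y_1-Y_{-1}\mid L)I\{\mathcal D(L)=1\}+E(Y_{-1}\mid L)
=-E(Y_1-Y_{-1}\mid L)I\{\mathcal D(L)=-1\}+E(Y_1\mid L),
\]
cites Balke--Pearl for the conditional bounds $\mathcal L_a(L)\le E(Y_a\mid L)\le\mathcal U_a(L)$ and $\mathcal L(L)\le E(Y_1-Y_{-1}\mid L)\le\mathcal U(L)$, asserts the pointwise chain of inequalities, and takes expectations. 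Your argument uses only the marginal bounds $\mathcal L_a,\mathcal U_a$ and treats $\mathcal L=\mathcal L_1-\mathcal U_{-1}$, $\mathcal U=\mathcal U_1-\mathcal L_{-1}$ purely as definitions, then carries out the indicator algebra explicitly to show each bracketed term is pointwise dominated by the inner bound; this is exactly the computation the paper leaves implicit when it writes the chain \eqref{s:1}--\eqref{s:2}. In particular, you supply a self-contained justification of the ordering between the outer (weighted) and inner bounds via $\mathcal L_{-1}\le\mathcal U_{-1}$ and $\mathcal L_1\le\mathcal U_1$, whereas the paper leans on the second and third representations together with the Balke--Pearl ATE bound to get each bracketed term directly as a lower bound of $E[Y_{\mathcal D(L)}\mid L]$. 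Either route yields the same result; yours is slightly more elementary because it never needs the ATE bound as a separate ingredient.
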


\begin{proof}
We consider construction of bounds for the value function with a valid IV.
Note that
\begin{align*}
E\left[ Y_{\mathcal{D}(L)}|L\right] &=E\left( Y_{1}|L\right) I\left\{
\mathcal{D}(L)=1\right\} +E\left( Y_{-1}|L\right)I\left\{
\mathcal{D}(L)=-1\right\},\\
E\left[ Y_{\mathcal{D}(L)}|L\right] &=E\left( Y_{1}-Y_{-1}|L\right) I\left\{
\mathcal{D}(L)=1\right\} +E\left( Y_{-1}|L\right),\\
E\left[ Y_{\mathcal{D}(L)}|L\right] &=E\left( Y_{-1}-Y_{1}|L\right) I\left\{
\mathcal{D}(L)=-1\right\} +E\left( Y_{1}|L\right).
\end{align*}
By the results from \cite{Balke1997}, one has the following bounds,
\begin{align}
&\omega_1(L) [\mathcal{L}\left( L\right)
 I\left\{ \mathcal{D}(L)=1\right\} +\mathcal{L}_{-1}\left( L\right)]
+ \omega_{-1}(L) [ -\mathcal{U}\left( L\right) I\left\{ \mathcal{D}(L)=-1\right\} +\mathcal{L}_{1}\left( L\right)] \nonumber \\
&\leq \mathcal{L}_{1}(L) I\{\cD(L)=1\} +\mathcal{L}_{-1}(L) I\{\cD(L)=-1\} \leq E\left[ Y_{\mathcal{D}
(L)}|L\right], \label{s:1}\\
 &\omega_1(L) [\mathcal{U}\left( L\right)
 I\left\{ \mathcal{D}(L)=1\right\} +\mathcal{U}_{-1}\left( L\right)]
+ \omega_{-1}(L) [ -\mathcal{L}\left( L\right) I\left\{ \mathcal{D}(L)=-1\right\} +\mathcal{U}_{1}\left( L\right)]\nonumber \\
&\geq  \mathcal{U}_{1}(L) I\{\cD(L)=1\} +\mathcal{U}_{-1}(L) I\{\cD(L)=-1\} \geq E\left[ Y_{\mathcal{D}
(L)}|L\right], \label{s:2}
\end{align}
where $\omega_1(L),\omega_{-1}(L)\geq 0$, $\omega_1(L) + \omega_{-1}(L)=1$,
$\mathcal{L}\left( L\right) $ and $\mathcal{U}\left( L\right) $ are
lower and upper bounds for $E\left( Y_{1}-Y_{-1}|L\right) $ given by \cite{Balke1997},
while $\mathcal{L}_{-1}\left( L\right) $, $\mathcal{U}_{-1}\left( L\right) $, $\mathcal{L}_{1}\left( L\right) $, $\mathcal{U}_{1}\left( L\right) $
are lower and upper bounds for $E\left( Y_{-1}|L\right) $ and $E\left( Y_{1}|L\right) $ obtained by \cite{Balke1997}.
Therefore, we complete the proof by taking expectations on both sides of Equations~\eqref{s:1} and \eqref{s:2}.
\end{proof}
Because it is not possible to directly maximize $E[ Y_{\mathcal{D}(L)}]$, one may nevertheless proceed by maximizing the minimum value function
$E[ \mathcal{L}_{1}(L) I\{\cD(L)=1\} +\mathcal{L}_{-1}(L) I\{\cD(L)=-1\}]$
and its lower bounds
 with user-specified weights $\omega_1(\cdot)$ and $\omega_{-1}(\cdot)$ which may reflect personal preferences. For instance, if $A = -1$ refers to placebo, the
safest regime might be maximizing $E[ \mathcal{L}\left( L\right) I\{ \mathcal{D}(L)=1\} ]$, i.e., assigning only $A = 1$ to those for whom $\mathcal{L}(L) > 0$.
Note that maximizing $E\left[ \mathcal{L}\left( L\right) I\left\{ \mathcal{D}(L)=1\right\} \right]$ and $E\left[ -\mathcal{U}\left( L\right) I\left\{ \mathcal{D}(L)=-1\right\} \right]$ would recommend two conflicting treatments to patients whose $(\mathcal{L}(L),\mathcal{U}(L))$ covers 0, i.e., the treatment decision remains ambiguous to those patients. We caution that the interval estimate for whom $(\mathcal{L}(L),\mathcal{U}(L))$ covers 0 might not be further narrowed down to an optimal treatment decision given the overwhelming uncertainty inside IV bounds.

\section{Fisher consistency, excess risk bound and universal consistency of the estimated regime\label{sec:theory}}

In this section, we establish Fisher consistency, excess risk bound and universal consistency of the estimated treatment regime. We focus on our first estimator, however, the results hold for the second estimator.

\subsection{Preliminaries}

Define the following risk
\begin{align*}
R(g) \equiv E [W I\{A\neq \sign (g(L))\} ],
\end{align*}
where $W=AZY/(\delta(L)f(Z|L))$. The optimal decision function associated with the optimal treatment regime $\cD^*$ is defined as $g^* \equiv \arg \min_{g\in \cG} R(g)$ and corresponding Bayes risk is $ R^*  \equiv R(g^*)$, where $\cG$ is the class of all measurable functions.

We also define the $\phi$-risk
\begin{align*}
R_\phi(g) \equiv E [|W| \phi(\sign(W)Ag(L)) ],
\end{align*}
where $\phi$ is the hinge loss function. The minimal $\phi$-risk $R_\phi^* \equiv \inf_{g \in \cG} R_\phi(g)$ and $g_\phi^* \equiv \arg\min_{g \in \cG} R_\phi(g)$.

\subsection{Fisher consistency and excess risk bound}
Note that Theorem 2.1 of \cite{zhou2017augmented} shows that
Fisher consistency holds if and only if $\phi'(0)$ exists and $\phi'(0)<0$ provided that the loss function $\phi$ is convex. The hinge loss function $\phi$ satisfies this condition which essentially implies the following Fisher consistency.

\begin{lemma}\label{fisher}  Under Assumptions \ref{asm:unconfoundedness2}-\ref{asm:weak}, $R^*=R(g^*_\phi)$.
\end{lemma}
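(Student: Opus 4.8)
The claim is Fisher consistency in the strong form $R^* = R(g_\phi^*)$: the surrogate-optimal decision function $g_\phi^*$, when passed through $\sign(\cdot)$, induces a regime achieving the Bayes risk for the weighted $0$--$1$ problem. The plan is to reduce everything to a pointwise (conditional on $L$) analysis and then invoke the already-cited characterization of Fisher consistency from \cite{zhou2017augmented}. First I would recall from Theorem~\ref{thm:identification} (and its proof) that maximizing $E[Y_{\cD(L)}]$ is equivalent to minimizing $E[W\,I\{A\neq\cD(L)\}]$ with $W = AZY/(\delta(L)f(Z|L))$, and that by the equality \eqref{eq:equiv} this in turn equals $E[|W|\,I\{\sign(W)A\neq\cD(L)\}]$; hence the weighted classification problem defining $R(g)$ has the \emph{same} minimizers as the causal optimization, and $\cD^* = \sign(g^*)$ with $g^*$ as defined in the Preliminaries.

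Next I would pass to the surrogate. Conditioning on $L$, write $R_\phi(g) = E\big[\, \eta_+(L)\phi(g(L)) + \eta_-(L)\phi(-g(L)) \,\big]$ where $\eta_{\pm}(L) \equiv E[\,|W|\,I\{\sign(W)A = \pm 1\}\mid L\,]$ are the (nonnegative) conditional ``relabeled'' weights. Since the objective decouples across $L$, the minimizer $g_\phi^*(L)$ is obtained pointwise by minimizing $t\mapsto \eta_+(L)\phi(t) + \eta_-(L)\phi(-t)$ over $t\in\mathbb{R}$. For the hinge loss $\phi(x) = (1-x)_+$, which is convex with $\phi'(0) = -1 < 0$, Theorem~2.1 of \cite{zhou2017augmented} applies: the pointwise surrogate minimizer has $\sign(g_\phi^*(L)) = \sign(\eta_+(L) - \eta_-(L))$ whenever $\eta_+(L)\neq\eta_-(L)$, i.e. it agrees with the Bayes rule for the weighted $0$--$1$ loss at every $L$ where that rule is unambiguous. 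Plugging $\sign(g_\phi^*)$ back into $R(\cdot)$ and comparing pointwise with $R^* = R(g^*)$, the two conditional risks coincide at every such $L$; on the null set where $\eta_+(L) = \eta_-(L)$ the choice of sign is irrelevant to $R$. Integrating over $L$ gives $R(g_\phi^*) = R^*$.

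The main obstacle, and the place where Assumptions~\ref{asm:unconfoundedness2}--\ref{asm:weak} actually enter, is the very first reduction: one must be careful that the object being optimized --- namely $R(g) = E[W\,I\{A\neq\sign(g(L))\}]$ with the IV-based weight $W$ --- really does have $\cD^* = \sign(g^*)$ as its minimizer, as opposed to merely some surrogate quantity. This is exactly the content of Theorem~\ref{thm:identification}: under the ``no unmeasured common effect modifier'' Assumption~\ref{asm:weak}, the functional $E[W\,I\{A=\cD(L)\}]$ differs from $E[\Delta(L)I\{\cD(L)=1\}]$ only by a $\cD$-free term, so its argmax is $\cD^*$. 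I would therefore state the pointwise identity $E[W\,I\{A=\cD(L)\}\mid L] = \Delta(L)I\{\cD(L)=1\} + (\text{term free of }\cD)$ derived there, note it forces $\sign(g^*(L)) = \cD^*(L) = \sign(\Delta(L))$, and then everything downstream is the standard hinge-loss Fisher-consistency argument. The remaining steps --- decoupling in $L$, pointwise minimization, invoking \cite{zhou2017augmented} --- are routine and need only be sketched.
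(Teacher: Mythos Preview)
Your proposal is correct and follows essentially the same route as the paper, which simply cites Theorem~2.1 of \cite{zhou2017augmented} and notes that the hinge loss is convex with $\phi'(0)=-1<0$; you have merely filled in the pointwise conditional decomposition that underlies that citation. One minor remark: the equality $R^*=R(g_\phi^*)$ is a purely classification-theoretic statement about the observed-data risks $R$ and $R_\phi$, so Assumptions~\ref{asm:unconfoundedness2}--\ref{asm:weak} are needed only to interpret the minimizer of $R$ as the causal optimal regime $\cD^*$ (via Theorem~\ref{thm:identification}), not for the Fisher-consistency equality itself---your ``main obstacle'' paragraph is correct context but not strictly part of the proof of this lemma.
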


The following theorem states that for any measurable decision function $g$, the excess risk under 0-1 loss is bounded by the excess $\phi$-risk.

\begin{lemma}
\label{thm:excess}
Under Assumptions \ref{asm:unconfoundedness2}-\ref{asm:weak}, for any measurable decision function $g$, we have that
\begin{align*}
R(g)-R^* \leq R_\phi(g)-R^*_\phi.
\end{align*}
\end{lemma}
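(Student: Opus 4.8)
The plan is to reduce the inequality to a pointwise, conditional-on-$L$ statement about the hinge loss, in the spirit of the surrogate-risk calibration arguments of \cite{zhao2012estimating} and \cite{zhou2017augmented}. First I would put the $0$--$1$ risk in nonnegatively weighted form. Expanding indicators as in the identity behind \eqref{eq:equiv} gives, for any regime $\cD$,
\[
|W|\,I\{\sign(W)A\neq \cD(L)\} \;=\; W\,I\{A\neq \cD(L)\} + W_-,\qquad W_-\equiv \max(-W,0),
\]
so that $R(g) = E\big[\,|W|\,I\{\sign(W)A\neq \sign(g(L))\}\,\big] - E[W_-]$, with $E[W_-]$ not depending on $g$. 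Hence $R(g)-R^*$ equals the excess risk of the nonnegatively weighted classification problem with weight $|W|$ and flipped label $\widetilde A \equiv \sign(W)A$; the surrogate risk $R_\phi(g)=E[\,|W|\,\phi(\widetilde A\, g(L))\,]$ is already in that form. This step needs only that $W$ be well defined (positivity of $f(Z\mid L)$ and $\delta(L)$) and $E|W|<\infty$.

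Next I would condition on $L$. Writing $p(L)=E[\,|W|\,I\{\widetilde A=1\}\mid L\,]$ and $q(L)=E[\,|W|\,I\{\widetilde A=-1\}\mid L\,]$, iterated expectations give the conditional $0$--$1$ risk $p(L)I\{\sign(g(L))=-1\}+q(L)I\{\sign(g(L))=1\}$ and the conditional surrogate risk $p(L)\phi(g(L))+q(L)\phi(-g(L))$. Minimizing pointwise over all measurable functions (measurable selection) yields $R^*=E[\min\{p(L),q(L)\}]$, and since $\inf_{t}\{p\,\phi(t)+q\,\phi(-t)\}=2\min\{p,q\}$ for the hinge loss, $R_\phi^*=E[2\min\{p(L),q(L)\}]$. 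It therefore suffices to establish, for (almost) every $l$, the pointwise bound
\[
\big[p\,I\{s=-1\}+q\,I\{s=1\}\big]-\min\{p,q\}\;\le\;\big[p\,\phi(t)+q\,\phi(-t)\big]-2\min\{p,q\},
\]
where $p=p(l),\ q=q(l),\ t=g(l),\ s=\sign(t)$; integrating over $L$ then gives the lemma.

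The pointwise bound is the technical core, verified by cases. If $s$ agrees with $\sign(p-q)$ (including the tie $p=q$) the left side is $0$ and the right side is nonnegative. Otherwise, by symmetry take $p>q$ and $t\le 0$: the left side equals $p-q$, while $\phi(t)=1-t$ and $\phi(-t)=(1+t)_+$ give $p\,\phi(t)+q\,\phi(-t)\ge p+q$ — for $t\ge -1$ it equals $p+q+t(q-p)\ge p+q$ because $t\le 0\le p-q$, and for $t<-1$ it is $p(1-t)>2p\ge p+q$ — so the right side is at least $(p+q)-2q=p-q$. I expect the only (minor) obstacles to be bookkeeping: justifying the sign-flip reduction cleanly, and handling the value of $\sign(0)$ left unspecified in the paper (adopting, say, $\sign(0)=1$; the case $t=0$ is covered since $p\phi(0)+q\phi(0)-2\min\{p,q\}=|p-q|$ already dominates the excess $0$--$1$ risk $\max\{p,q\}-\min\{p,q\}$). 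Once the problem is in nonnegatively weighted form the hinge-loss calculation is elementary, which is precisely the classification-calibration property underlying \cite{zhao2012estimating}; note that it also re-derives the Fisher consistency of Lemma~\ref{fisher} by taking $g=g_\phi^*$.
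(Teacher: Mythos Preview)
Your argument is correct and is precisely the classification-calibration computation underlying the cited references; the paper itself gives no independent proof of this lemma and simply defers to Theorem~2.2 of \cite{zhou2017augmented}. What you have written is therefore not a different route so much as an explicit reconstruction of the standard route: the sign-flip reduction to a nonnegatively weighted problem, followed by the pointwise hinge-loss inequality $p\phi(t)+q\phi(-t)-2\min\{p,q\}\ge |p-q|\,I\{\sign(t)\neq \sign(p-q)\}$, which is exactly the mechanism behind the cited theorem.
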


The proof follows from Theorem 2.2 of \cite{zhou2017augmented}. Lemma~\ref{thm:excess} implies that the loss of the value
function due to the individualized treatment regime $\cD$ associated with the decision function $g$ can be bounded by the excess risk under the hinge loss. This excess bound also serves as an intermediate step for investigating the universal consistency of the estimated treatment regime.

\subsection{Consistency of the estimated treatment regime}

In this section, we establish the universal consistency of the estimated treatment regime with a universal kernel (e.g., Gaussian kernel). Estimation error has two potential sources. The first is from the approximation error associated with
$\cH_K$. The second is the uncertainty in estimated weights.

Before stating the universal consistency result of the estimated \mbox{treatment} regime, we first introduce the concept of universal kernels \citep{Steinwart:2008:SVM:1481236}. A continuous kernel $K$ on a compact metric space $\cL$ is called universal if its associated reproducing kernel Hilbert space (RKHS) $\cH_K$ is dense in $C(\cL)$, where $C(\cL)$ is the space of all continuous functions
on the compact metric space $\cL$ endowed with the usual supremum norm.

Let $K$ be a universal kernel, and $\mathcal H_K$ be the associated RKHS. Suppose that $g^*_\phi$ is measurable and bounded, $|g^*_\phi|\leq M_g$, and $|\sqrt \lambda_n b_n|\leq M_b$ almost surely for some constants $ M_g$ and $M_b$. In addition, we consider a sequence of tuning parameters $\lambda_n \rightarrow 0$ and $ n\lambda_n \rightarrow \infty$ as $n \rightarrow \infty$.
In order to study the excess risk bound of the $\phi$ loss, we need one additional assumption to bound the weight $W$.
\begin{assumption}
The outcome $Y$ is sub-Gaussian. Furthermore, we assume that $M_1<|\delta(L)|$, $M_2<f(Z=1|L)<1-M_2$ for some $0<M_1<1$, $0<M_2<1/2$ almost surely.
\label{asm:bounded}
\end{assumption}
Then, we have the following result.
\begin{theorem}
\label{thm:consistency}
Under Assumptions \ref{asm:unconfoundedness2}-\ref{asm:weak}, \ref{asm:bounded}, and further assume that
\begin{align*}
\sup_{l \in \mathcal L}|\widehat \delta(l) - \delta(l)| \overset{p}{\to} 0, \quad  \text{and}  \quad
\sup_{l \in \mathcal L}|\widehat f(z=1|l) - f(z=1|l)| \overset{p}{\to} 0,
\end{align*}
 as $n \rightarrow \infty$, we have the following convergence in probability,
\begin{align*}
\lim_{n\rightarrow \infty} R(g_{n}) = R^*,
\end{align*}
where $g_n =h_n +b_n$ is the estimated decision function from
\begin{align*}
\min_{g=h+b\in \cH_K+\{1\}} \frac{1}{n}\sum_{i=1}^{n} |\widehat w_i| \phi(\sign(\widehat w_i)a_ig(l_i)) +\frac{\lambda}{2}||h||_K^2.
\end{align*}
\end{theorem}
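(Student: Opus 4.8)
The plan is to decompose the excess $0$-$1$ risk of the estimated rule into an approximation piece and an estimation-error piece, using Lemma~\ref{thm:excess} to reduce everything to the $\phi$-risk. By Lemma~\ref{thm:excess}, $R(g_n) - R^* \leq R_\phi(g_n) - R_\phi^*$, so it suffices to show $R_\phi(g_n) \to R_\phi^*$ in probability. First I would introduce the population-weighted and empirical-weighted regularized objectives, say $R_{\phi,\lambda}(g) = R_\phi(g) + \tfrac{\lambda}{2}\|h\|_K^2$ and its empirical counterpart with estimated weights $\widehat w_i$, and write the standard chain
\begin{align*}
R_\phi(g_n) - R_\phi^* \;\leq\; \underbrace{\Big(R_\phi(g_n) - R_{\phi,\lambda_n}^{\text{true,emp}}(g_n)\Big)}_{\text{(a) uniform deviation}} + \underbrace{\Big(R_{\phi,\lambda_n}^{\text{true,emp}}(g_n) - \widehat R_{\phi,\lambda_n}(g_n)\Big)}_{\text{(b) weight estimation error}} + \underbrace{\Big(\widehat R_{\phi,\lambda_n}(g_n) - \inf_{g\in\cH_K+\{1\}} R_{\phi,\lambda_n}(g)\Big)}_{\leq 0 \text{ by optimality, plus approx.}} + \underbrace{\Big(\inf_{g} R_{\phi,\lambda_n}(g) - R_\phi^*\Big)}_{\text{(c) approximation error}},
\end{align*}
where $\widehat R$ uses estimated weights and $R^{\text{true,emp}}$ the true weights with the empirical measure. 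Term (c) vanishes because $K$ is universal, so $\cH_K + \{1\}$ is dense in $C(\cL)$ and hence $\inf_g R_\phi(g) = R_\phi^*$; combined with $\lambda_n \to 0$ this forces the regularized approximation error to $0$ (here Assumption~\ref{asm:bounded} guarantees $|W|$ is integrable with sub-Gaussian tails, so the hinge loss is well-behaved and the RKHS functions can be taken bounded).

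For term (a), the uniform deviation, I would follow the Rademacher-complexity / covering-number argument of \cite{zhao2012estimating,zhou2017augmented}. Using the regularization $\tfrac{\lambda}{2}\|h\|_K^2$ one bounds $\|h_n\|_K$ and hence confines $g_n$ to a ball in $\cH_K + \{1\}$ whose radius grows slowly; since $\lambda_n \to 0$ with $n\lambda_n \to \infty$, the complexity term is $O_p(1/\sqrt{n\lambda_n}) \to 0$. Assumption~\ref{asm:bounded} ($M_1 < |\delta(L)|$, $M_2 < f(Z=1|L) < 1-M_2$, $Y$ sub-Gaussian) is exactly what makes $|W|$ have light tails, which is needed to control the empirical process uniformly over the function class with an unbounded loss weight; a truncation argument on $|W|$ at a slowly growing level handles the sub-Gaussian (rather than bounded) outcome.

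For term (b), the weight-estimation error, I would write $\widehat R_\phi(g) - R^{\text{true,emp}}_\phi(g) = \tfrac{1}{n}\sum_i \big(|\widehat w_i|\phi(\sign(\widehat w_i)a_i g(l_i)) - |w_i|\phi(\sign(w_i)a_i g(l_i))\big)$ and bound it uniformly over the relevant ball of $g$'s. Since $\phi$ is $1$-Lipschitz and bounded by its argument, and $w$ depends on $(\delta(L), f(Z|L))$ only through $1/(\delta(L) f(Z|L))$, the difference is controlled by $\sup_l |\widehat\delta(l) - \delta(l)|$ and $\sup_l |\widehat f(z=1|l) - f(z=1|l)|$ together with the lower bounds $M_1, M_2$ which keep the denominators bounded away from $0$ (and keep $1/(\widehat\delta\,\widehat f)$ bounded once the sup-norm errors are small enough, which happens with probability tending to $1$). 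The assumed uniform consistency of $\widehat\delta$ and $\widehat f$ then drives (b) to $0$ in probability. The main obstacle is term (a): carefully handling the unbounded hinge-loss weight $|W|$ together with the growing RKHS ball — the truncation level, the rate $\lambda_n$, and the covering-number bound must be balanced so the product tends to zero; everything else is either a density argument (c) or a Lipschitz perturbation argument (b). Once (a), (b), (c) are each $o_p(1)$, we conclude $R_\phi(g_n) \to R_\phi^*$ in probability, and Lemma~\ref{thm:excess} gives $R(g_n) \to R^*$, completing the proof.
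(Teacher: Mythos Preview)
Your approach is correct and is precisely the route the paper intends: the paper's own argument for this theorem is simply the remark that ``the proof is akin to \cite{zhao2012estimating,zhou2017augmented},'' and your decomposition into approximation error, empirical-process deviation, and weight-estimation perturbation is exactly the standard machinery from those references, adapted here to handle the estimated nuisance $(\widehat\delta,\widehat f)$ via a Lipschitz argument under Assumption~\ref{asm:bounded}. One minor bookkeeping point: the middle term $\widehat R_{\phi,\lambda_n}(g_n) - \inf_{g} R_{\phi,\lambda_n}(g)$ is not itself $\le 0$; you need to insert a fixed comparator $g^\dagger \in \cH_K + \{1\}$ (approximately achieving the infimum), use optimality of $g_n$ to bound $\widehat R_{\phi,\lambda_n}(g_n) \le \widehat R_{\phi,\lambda_n}(g^\dagger)$, and then apply a second copy of your terms (a) and (b) at the single function $g^\dagger$, which is easier since no uniformity is required there.
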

The proof is akin to \cite{zhao2012estimating,zhou2017augmented}.
The rate of convergence for the estimated treatment regime might also be studied under certain regularity conditions on the distribution of the data, such as the geometric noise assumption proposed by \cite{steinwart2007}.

\section{A locally efficient and multiply robust estimation of value function}
Consider the nonparametric model $\cM_{np}$ which places no restriction on the observed data law. Below, we characterize the efficient influence function of the value functional $\cV(\cD)$
in $\cM_{np}$ and therefore characterize the semiparametric efficiency bound for the model, where functional $\cV(\cD)$ is defined in Equation~\eqref{eq:vd}.

\begin{theorem}
\label{thm:triply1}
Under Assumptions~\ref{asm:unconfoundedness2}-\ref{asm:IV positivity} and \ref{asm:strong}, the efficient influence function of $\cV(\cD)$ in $\cM_{np}$ is given by
\begin{align*}
& EIF_{\cV(\cD)}=\frac{ZAYI\{A=\cD(L)\}}{ f(Z|L)\delta(L) }  - \bigg \{    \frac{ZE[AYI\{A=\cD(L)\}|Z,L] }{f(Z|L)\delta(L)}\\ & - \sum_z \frac{ zE[AYI\{A=\cD(L)\}|Z=z,L] }{\delta(L)} \\
  & + \frac{Z [A-E(A|Z,L) ] }{2f(Z|L)\delta(L)} \sum_z \frac{E[AYI\{A=\cD(L)\}|Z=z,L]z}{\delta(L)}  \bigg\}-\cV(\cD).
\end{align*}
Therefore, the semiparametric efficiency bound of $\cV(\cD)$ in $\cM_{np}$ equals $E[EIF_{\cV(\cD)}^2]$.
\end{theorem}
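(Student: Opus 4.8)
The plan is to obtain $EIF_{\cV(\cD)}$ by a pathwise-derivative argument, using that $\cM_{np}$ is saturated: its tangent space at $P$ is the whole space of mean-zero, square-integrable functions of $\cO=(Y,L,A,Z)$, so $\cV(\cD)$ has a \emph{unique} gradient, and it suffices to produce a mean-zero $\varphi(\cO)$ with $\frac{d}{dt}\cV(\cD)(P_t)\big|_{t=0}=E[\varphi(\cO)S(\cO)]$ along every regular parametric submodel $\{P_t\}$ through $P$ with score $S$; that $\varphi$ is then the efficient influence function, and the efficiency bound is $E[\varphi^2]$ by standard semiparametric theory. Here Assumption~\ref{asm:strong} is used only through Theorem~\ref{thm:identification}, which tells us that the observed-data functional $\cV(\cD)$ in Equation~\eqref{eq:vd} equals the causal target $E[Y_{\cD(L)}]$, so it is enough to compute the influence function of the observed-data functional.

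First I would put the functional in ``regression'' form. Write $\widetilde Y\equiv AYI\{A=\cD(L)\}$, $m(z,L)\equiv E[\widetilde Y\mid Z=z,L]$, and $e(z,L)\equiv E[A\mid Z=z,L]$, so that $\delta(L)=\tfrac12\{e(1,L)-e(-1,L)\}$, and note the identity $\frac{I\{Z=z\}}{f(z\mid L)}\,z=\frac{Z}{f(Z\mid L)}$. Taking iterated expectations in Equation~\eqref{eq:vd} (over $A,Y$ given $Z,L$, then over $Z$ given $L$) gives $\cV(\cD)=E_L\!\left[\{m(1,L)-m(-1,L)\}/\delta(L)\right]=E_L\!\left[\textstyle\sum_z z\,m(z,L)/\delta(L)\right]$, a smooth functional of the $L$-marginal and of the conditional-mean nuisances $m(\pm1,L)$, $e(\pm1,L)$ only; in particular it does not depend on the IV propensity $f(Z\mid L)$, which serves as an internal check below.

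Next I would differentiate $\cV(\cD)(P_t)$ along $\{P_t\}$, decomposing $S=S_L+S_{Z\mid L}+S_{A\mid Z,L}+S_{Y\mid A,Z,L}$ and collecting contributions block by block. (i) The $S_L$ block gives $E\!\left[\left\{\sum_z z\,m(z,L)/\delta(L)-\cV(\cD)\right\}S_L\right]$, i.e. $\varphi_i(\cO)=\sum_z z\,m(z,L)/\delta(L)-\cV(\cD)$. (ii) The $S_{Y\mid A,Z,L}$ block, together with the part of the $S_{A\mid Z,L}$ block that flows through $m$, gives $E\!\left[\dfrac{Z\{\widetilde Y-m(Z,L)\}}{f(Z\mid L)\delta(L)}\,S\right]$ after rewriting the conditional-mean derivative $\partial_t m_t(z,L)$ as an inverse-probability-weighted residual; thus $\varphi_{ii}(\cO)=\dfrac{Z\{\widetilde Y-m(Z,L)\}}{f(Z\mid L)\delta(L)}$, whose integrand has zero conditional mean given $(Z,L)$. (iii) The remaining part of the $S_{A\mid Z,L}$ block, flowing through $\delta$, gives, using $\partial_t(1/\delta_t)=-\delta^{-2}\partial_t\delta_t$ and $\partial_t\delta_t(\ell)=\tfrac12 E[\frac{Z}{f(Z\mid\ell)}\{A-e(Z,\ell)\}\,S_{A\mid Z,L}\mid L=\ell]$, the term $\varphi_{iii}(\cO)=-\dfrac{Z\{A-e(Z,L)\}}{2f(Z\mid L)\delta(L)}\cdot\dfrac{\sum_z z\,m(z,L)}{\delta(L)}$, again with zero conditional mean given $(Z,L)$. (iv) The $S_{Z\mid L}$ block contributes nothing, consistent with the observation in the previous paragraph. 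Adding $\varphi_i+\varphi_{ii}+\varphi_{iii}$ and writing $\varphi_{ii}$ as the difference of $\frac{Z\widetilde Y}{f(Z\mid L)\delta(L)}$ and $\frac{Z m(Z,L)}{f(Z\mid L)\delta(L)}$ reproduces exactly the grouping in the statement.

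Finally I would verify that this $\varphi$ is a legitimate gradient: it is mean zero; $\varphi_i\in\mathcal{T}_L$, while $\varphi_{ii}\in\mathcal{T}_{A\mid Z,L}\oplus\mathcal{T}_{Y\mid A,Z,L}$ and $\varphi_{iii}\in\mathcal{T}_{A\mid Z,L}$ (both with zero conditional mean given $(Z,L)$), so $\varphi$ lies in the tangent space; and $E[\varphi S]$ matches the pathwise derivative for every $S$. Hence $\varphi=EIF_{\cV(\cD)}$ is the unique gradient, and the semiparametric efficiency bound in $\cM_{np}$ is $E[EIF_{\cV(\cD)}^2]$. I expect the only delicate point to be step (iii): differentiating $1/\delta_t$ correctly, keeping the factor $\tfrac12$ that arises because the $A$-residual is expressed through $E(A\mid Z,L)$ rather than $\Pr(A=1\mid Z,L)$, and tracking the signs in $\sum_z z\,m(z,L)=m(1,L)-m(-1,L)$; the remaining manipulations are routine.
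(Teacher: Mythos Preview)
Your proposal is correct and follows the same overall strategy as the paper: compute the pathwise derivative of $\cV(\cD)$ along regular submodels, exhibit a mean-zero $\varphi$ with $\partial_t\cV_t(\cD)|_{t=0}=E[\varphi S]$, and invoke saturation of $\cM_{np}$ to conclude that $\varphi$ is the unique (hence efficient) influence function.

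The organization differs in a useful way. The paper works directly from the IPW form $E[Z\widetilde Y/\{f(Z|L)\delta(L)\}]$ and applies the product rule to the denominator, obtaining three pieces: the raw term (I), a term (II) from $\partial_t f_t(Z|L)$, and a term (III) from $\partial_t\delta_t(L)$; term (II) is then centered so that it can be expressed against the full score $S(\cO)$. You instead first collapse the IPW form to the regression form $E_L[\sum_z z\,m(z,L)/\delta(L)]$, which eliminates $f(Z|L)$ altogether, and then differentiate block by block in the score decomposition $S_L+S_{Z|L}+S_{A|Z,L}+S_{Y|A,Z,L}$. This buys you the immediate observation that the $S_{Z|L}$ block contributes nothing---a fact the paper recovers only after computing (II) and centering it---and it makes the orthogonality structure of the three summands $\varphi_i,\varphi_{ii},\varphi_{iii}$ transparent from the start. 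The paper's route, on the other hand, requires no preliminary rewriting and makes the connection to the IPW estimand in Equation~\eqref{eq:vd} more direct. Your handling of the $\tfrac12$ in step~(iii), via $\delta(L)=\tfrac12\{e(1,L)-e(-1,L)\}$ with $e(z,L)=E[A|Z=z,L]$, matches the paper's computation of $2\,\partial_t\delta_t(L)|_{t=0}$ exactly.
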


One cannot be confident that any of the required nuisance models to evaluate the efficient influence function can be correctly specified. It is of interest to develop a multiply robust estimation approach, which is guaranteed to deliver valid inferences about $\cV(\cD)$ provided that some but not necessarily all needed models are correct.
\yifan{When finite-dimensional models are used for nuisance parameters,
it is likely that all of them are misspecified leading to lack of consistency. Using infinite-dimensional
models can mitigate the problem, however, to achieve asymptotic linearity it is required that all the parts are
consistently estimated with sufficiently fast rates \citep{robins2017,Chernozhukov2018}.}

In order to describe our proposed multiply robust approach, consider the following three semiparametric models that place restrictions on different components of the observed data likelihood while allowing the rest of the likelihood to remain unrestricted.

\vspace{0.2cm}

\noindent $\cM_1$: models for $f(Z|L)$ and $\delta(L)$ are correct;

\vspace{0.4cm}

\noindent $\cM_2$: models for $f(Z|L)$ and  $\gamma(L) \equiv  \sum_z \{ zE[AYI\{A=\cD(L)\}|Z=z,L] \}/{\delta(L)} $ \\ \indent ~ are correct;

\vspace{0.4cm}

\noindent $\cM_3$: models for $\gamma(L), \gamma'(L) \equiv {E[AYI\{A=\cD(L)\}|Z=-1,L]}, \delta(L)$ and \\ \indent ~ $E[A|Z=-1,L]$ are correct.

\vspace{0.2cm}

Note that by Theorem~\ref{thm:gamma} presented in Section~\ref{sec:thm:gamma}, $\gamma(L)$ has the counterfactual interpretation $E[Y_{\cD(L)}|L]$,
 which may help formulate appropriate parametric models for the former. For instance, in case $Y$ is binary, $\gamma(L)$ would need to be specified with an appropriate link function to ensure it falls within the unit interval $(0,1)$.

Our proposed multiply robust estimator requires estimation of nuisance parameters $f(Z|L)$, $E(A|Z=-1, L)$, $\gamma'(L)$, $\delta(L)$ and  $\gamma(L)$. One may use maximum likelihood estimation for $f(Z|L)$, $E(A|Z=-1, L)$, $\gamma'(L)$, denoted as $\widehat f(Z|L)$, $\widehat E(A|Z=-1, L)$ and $\widehat \gamma'(L)$, respectively.

Because $\delta(L)$ and  $\gamma(L)$ are shared across submodels of the union model, i.e.,  $\cM_1 \cup \cM_3$, $\cM_2 \cup \cM_3$, respectively, in order to ensure multiple robustness, one must estimate these unknown functions in their respective union model.
For estimating $\delta(L)$, we propose to use doubly robust g-estimation \citep{robins1994,robins2000proceedings},
\begin{align*}
\PP_n \psi_1(L) \left[ A- \delta(L, \widehat \beta)\frac{1+Z}{2} - \widehat E(A|Z=-1, L)    \right] \frac{ Z }{\widehat f(Z|L)}=0,
\end{align*}
and we propose the following doubly robust estimating equation to estimate $\gamma(L)$,
\begin{align*}
& \PP_n \psi_2(L) \bigg[ {AYI\{A=\cD(L)\}}  -   {\widehat \gamma'(L) }  - \frac{[A-\widehat E(A|Z=-1,L) ]  \gamma(L,\widehat \eta)}{2} \bigg]  \frac{ Z }{\widehat f(Z|L)}= 0,
\end{align*}
\noindent where vector-valued functions $ \psi_1(L)$ and  $\psi_2(L)$ have the same dimension as $\widehat \beta$ and $\widehat \eta$, respectively. Thus, $\delta(L,\widehat \beta)$ is consistent and asymptotically normal in the union model $\cM_1 \cup \cM_3$, and $\gamma(L,\widehat  \eta)$ is consistent and asymptotically normal in the union model $\cM_2 \cup \cM_3$.
Similarly to results in \cite{tchetgen2018tr}, we have the following theorem.

\begin{theorem}
\label{thm:triply1.1}
Under Assumptions~\ref{asm:unconfoundedness2}-\ref{asm:IV positivity}, \ref{asm:strong} and standard regularity conditions,
\begin{align}
\widehat \cV_{MR}(\cD) =& \PP_n \Bigg [ \frac{ZAYI\{A=\cD(L)\}}{  \widehat f(Z|L)  \delta(L,\widehat \beta) }  -    \frac{Z \widehat \gamma'(L) }{\widehat f(Z|L)\delta(L,\widehat \beta) }  \nonumber \\ &+ \gamma(L,\widehat \eta)   - \frac{Z [A-\widehat E(A|Z=-1,L)] }{2\widehat f(Z|L)\delta(L,\widehat \beta)} \gamma(L,\widehat \eta) \Bigg ]
\label{eq:vmr}
\end{align}
is a consistent and asymptotically normal estimator of  $\cV(\cD)$ under the semiparametric union model $\cM_{union}=\cM_1 \cup \cM_2 \cup  \cM_3$. Furthermore, $\widehat \cV_{MR}(\cD)$ is semiparametric locally efficient in $\cM_{union}$ at the intersection submodel $\cM_{int}=\cM_1 \cap \cM_2 \cap \cM_3$.
\end{theorem}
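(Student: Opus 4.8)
\emph{Proof plan.} The plan is to view $\widehat\cV_{MR}(\cD)$ as a $Z$-estimator generated by a single estimating function and to establish, in order, (i) that its probability limit equals $\cV(\cD)$ throughout $\cM_{union}$, (ii) its asymptotic linearity and normality there, and (iii) that at $\cM_{int}$ its influence function coincides with the efficient influence function of Theorem~\ref{thm:triply1}. Write $\widehat\cV_{MR}(\cD)=\PP_n m(\cO;\widehat\xi)$, where $\widehat\xi$ collects the nuisance estimators $\widehat f(Z|L)$, $\delta(L,\widehat\beta)$, $\widehat\gamma'(L)$, $\gamma(L,\widehat\eta)$ and $\widehat E(A|Z=-1,L)$, and let $\xi^*$ denote their probability limits. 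Set $h(z,L)\equiv E[AYI\{A=\cD(L)\}\mid Z=z,L]$. The backbone of the argument is the elementary identity, valid at the true nuisance values, $h(z,L)=\gamma'(L)+\tfrac12\{E(A\mid Z=z,L)-E(A\mid Z=-1,L)\}\gamma(L)$, which follows from $E(A\mid Z=1,L)-E(A\mid Z=-1,L)=2\delta(L)$ together with the definitions $\gamma'(L)=h(-1,L)$ and $\delta(L)\gamma(L)=h(1,L)-h(-1,L)$.

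\emph{Step 1: multiple robustness of the limit.} Recalling that the g-estimator $\delta(L,\widehat\beta)$ is consistent in $\cM_1\cup\cM_3$ and the doubly robust estimator $\gamma(L,\widehat\eta)$ is consistent in $\cM_2\cup\cM_3$, I would verify $E[m(\cO;\xi^*)]=\cV(\cD)$ in each of the three submodels by iterated expectation. In $\cM_1$ (correct $f,\delta$), conditioning on $L$ the leading term has mean $\gamma(L)$, the $\widehat\gamma'$ term vanishes since $\sum_z z=0$, and the two occurrences of $\gamma^*(L)$ in the last two terms cancel because $\tfrac{1}{2\delta(L)}\sum_z z\{E(A\mid Z=z,L)-E^*(A\mid Z=-1,L)\}=1$; thus $E[m\mid L]=\gamma(L)$ irrespective of $\gamma^*$. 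In $\cM_2$ (correct $f,\gamma$), the leading and last terms contribute $\pm\delta(L)\gamma(L)/\delta^*(L)$ and cancel, leaving $\gamma(L)$. In $\cM_3$ (correct $\gamma,\gamma',\delta,E(A\mid Z=-1,L)$, arbitrary $f$), conditioning on $(L,Z)$ the three terms carrying $Z/f^*(Z|L)$ collapse to $\tfrac{Z}{f^*(Z|L)\delta(L)}\{h(Z,L)-\gamma'(L)-\tfrac12(E(A\mid Z,L)-E(A\mid Z=-1,L))\gamma(L)\}$, which is zero by the backbone identity, again leaving $\gamma(L)$. In all cases $E[m(\cO;\xi^*)]=E[\gamma(L)]=\cV(\cD)$, the last equality being iterated expectation in the definition~\eqref{eq:vd} (equivalently the counterfactual reading of $\gamma$ in Theorem~\ref{thm:gamma}).

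\emph{Step 2: asymptotic normality.} I would next check that $m$ has the mixed-bias (Neyman-orthogonal) structure: after plugging in $\widehat\xi$, the bias relative to $\PP_n m(\cO;\xi^*)$ is a sum of terms each of which is a product of estimation errors in a pair of nuisances that cannot both be misspecified inside $\cM_{union}$. Combining this with the $\sqrt n$-consistent, asymptotically linear expansions of the internally estimated $\delta(L,\widehat\beta)$ and $\gamma(L,\widehat\eta)$ obtained from their doubly robust estimating equations, a standard $Z$-estimation argument under the stated regularity conditions --- Donsker-type control of the nuisance function classes, or sample splitting as in \cite{robins2017,Chernozhukov2018} --- gives $\sqrt n\{\widehat\cV_{MR}(\cD)-\cV(\cD)\}=\sqrt n\,\PP_n\{m(\cO;\xi^*)-\cV(\cD)\}+o_p(1)$, hence asymptotic normality with variance $\mathrm{Var}\{m(\cO;\xi^*)\}$ throughout $\cM_{union}$; this parallels the proof of Theorem~\ref{thm:triply2} and \cite{tchetgen2018tr}.

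\emph{Step 3: local efficiency, and the main obstacle.} Finally, at $\cM_{int}$ I would substitute the true nuisances into $m$ and compare directly with $EIF_{\cV(\cD)}$ in Theorem~\ref{thm:triply1}: the two expressions differ only by $\tfrac{Z}{f(Z|L)\delta(L)}\{\gamma'(L)-h(Z,L)+\tfrac12\gamma(L)(E(A\mid Z,L)-E(A\mid Z=-1,L))\}$, which again vanishes by the backbone identity. Hence $m(\cO;\xi)=EIF_{\cV(\cD)}+\cV(\cD)$ on $\cM_{int}$, so the influence function of $\widehat\cV_{MR}(\cD)$ there is the efficient one and its asymptotic variance attains the bound $E[EIF_{\cV(\cD)}^2]$. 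I expect the principal difficulty to lie in Step 1 --- tracking exactly which stray terms ($\gamma^*$ in $\cM_1$, $\delta^*$ in $\cM_2$, $f^*$ in $\cM_3$) cancel, and confirming that the internal g-estimators of $\delta$ and $\gamma$ really are consistent in precisely the union models claimed --- together with pinning down in Step 2 the exact smoothness/rate conditions (or the use of cross-fitting) under which the remainder is genuinely $o_p(n^{-1/2})$.
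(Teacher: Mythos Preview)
Your three-step plan mirrors the paper's proof, and Step~1 is both correct and somewhat cleaner than the paper's version: your ``backbone identity'' $h(z,L)=\gamma'(L)+\tfrac12\{E(A\mid Z=z,L)-E(A\mid Z=-1,L)\}\gamma(L)$ lets you dispatch $\cM_1,\cM_2,\cM_3$ uniformly, whereas the paper first rewrites the estimator via an algebraic substitution (its displayed equation~\eqref{eq:mr}) and then checks each model by hand. Step~3 is essentially the paper's argument as well.

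There is, however, a genuine gap in Step~2. You claim that the mixed-bias structure yields $\sqrt n\{\widehat\cV_{MR}(\cD)-\cV(\cD)\}=\sqrt n\,\PP_n\{m(\cO;\xi^*)-\cV(\cD)\}+o_p(1)$ \emph{throughout} $\cM_{union}$, i.e., that nuisance estimation contributes nothing to the influence function anywhere in the union model. This is false. Under $\cM_1$, for example, $f$ and $\delta$ are correct but $\gamma^*$ need not be, and a direct computation (differentiate your own Step~1 display in $\delta^*$ at $\delta^*=\delta$) gives $E[\partial_{\delta^*}m\mid L]=-\{\gamma(L)-\gamma^*(L)\}/\delta(L)$, which is generally nonzero; hence the first-order correction $E[\partial_\xi m(\cO;\xi^*)]\sqrt n(\widehat\xi-\xi^*)$ is $O_p(1)$, not $o_p(1)$. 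The paper handles this correctly: its Taylor expansion retains the term $E(\partial EIF_{\cV(\cD)}/\partial\eta)\sqrt n(\widehat\eta-\eta)$, argues asymptotic normality in $\cM_{union}$ from the asymptotic linearity of the nuisance estimators (so the influence function picks up an extra contribution away from $\cM_{int}$), and only invokes the vanishing of this derivative at $\cM_{int}$ to conclude local efficiency. Your Step~2 should be amended in the same way: keep the first-order nuisance correction for the normality claim on $\cM_{union}$, and reserve the orthogonality argument for $\cM_{int}$, where it does hold (as your Step~3 already verifies).
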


Based on Theorem~\ref{thm:triply1.1}, one may evaluate the value function $E[Y_{\cD}(L)]$ for any given treatment regime $\cD$ with multiple robustness property. We derive the influence function of $\widehat \cV_{MR}(\cD)$ in Section~\ref{sec:mr2}, which can be used to nonparametrically estimate the standard deviation of value function under a given regime.

\section{Proof of Theorem~\ref{thm:triply1}}
\begin{proof}
In order to find the efficient influence function for $\cV(\cD)$,
we need to find the canonical gradient $G$ for $\cV(\cD)$ in the nonparametric model $\cM_{np}$, e.g, find a random variable $G$ with mean 0 and
\begin{align*}
\frac{\partial }{\partial t} \cV_t(\cD) \big|_{t=0}  = E[G S(\cO;t)]\big |_{t=0},
\end{align*}
where $S(\cO;t)={\partial \log f(\cO;t)}/{\partial t}$, and $\cV_t(\cD)$ is the value function under a regular parametric submodel in $\cM_{np}$ indexed by $t$ that includes the true data generating mechanism at $t=0$ \citep{vaart_1998}.
Note that we have
\begin{align*}
& \frac{\partial }{\partial t} \cV_t(\cD)\big |_{t=0} \\
= & E[ \frac{ZAYI(A=\cD(L))}{\delta(L)f(Z|L) } S(\cO)] - E \bigg[   \frac{ZAYI(A=\cD(L))}{\delta^2(L)f^2(Z|L) } [  \frac{\partial }{\partial t} f_t(Z|L) \delta(L) + \frac{\partial }{\partial t} \delta_t(L)f(Z|L)  ]    \bigg] \bigg |_{t=0}\\
= & (I) - (II) - (III).
\end{align*}
The second term
\begin{align*}
(II)= & E \bigg[   \frac{ZAYI(A=\cD(L))}{\delta^2(L)f^2(Z|L) }  \frac{\partial }{\partial t} f_t(Z|L) \delta(L) \bigg] \bigg |_{t=0}\\
= & E \bigg[   \frac{ZAYI(A=\cD(L))}{\delta(L)f(Z|L) }  S(Z|L)    \bigg]\\
= & E \bigg[   E[\frac{ZAYI(A=\cD(L))}{\delta(L)f(Z|L) }|Z,L]  S(Z|L)    \bigg]\\
= & E \bigg[  \Big \{ E[\frac{ZAYI(A=\cD(L))}{\delta(L)f(Z|L) }|Z,L]-   E[\sum_z \frac{zAYI(A=\cD(L))}{\delta(L)}|Z=z, L] \Big \}  S(Z|L)    \bigg]\\
= & E \bigg[  \Big \{ E[\frac{ZAYI(A=\cD(L))}{\delta(L)f(Z|L) }|Z,L]-   E[\sum_z \frac{zAYI(A=\cD(L))}{\delta(L)}|Z=z, L] \Big \}  S(Z,L)    \bigg]\\
= & E \bigg[  \Big \{ E[\frac{ZAYI(A=\cD(L))}{\delta(L)f(Z|L) }|Z,L]-   E[\sum_z \frac{zAYI(A=\cD(L))}{\delta(L)}|Z=z, L] \Big \}  S(\cO)    \bigg]\\
= & E \bigg[  \Big \{ \frac{ZE[AYI(A=\cD(L))|Z,L]}{\delta(L)f(Z|L) }-   \sum_z \frac{zE[AYI(A=\cD(L))|Z=z, L]}{\delta(L)} \Big \}  S(\cO)    \bigg].\\
\end{align*}

In order to calculate $(III)$, we need to calculate the term $\frac{\partial}{\partial t}\delta_t(L)$. To do so, we first calculate $\frac{\partial }{\partial t} E_t[A|Z=z,L]\big|_{t=0}$. Note that

\begin{align*}
& \frac{\partial }{\partial t} E_t[A|Z=z,L] \\
= & \frac{\partial }{\partial t} \int af_t(a|Z=z,L) da \\
= & \int  a \frac{\partial f_t(a|Z=z,L)}{f_t(a|Z=z,L)}f_t(a|Z=z,L) da  \\
= & E[A\frac{\partial f_t(A|Z=z,L)}{f_t(A|Z=z,L)} | Z=z,L],\\
\end{align*}
and
\begin{align*}
& \frac{\partial }{\partial t} E_t[A|Z=z,L]\bigg|_{t=0} \\
= & E[AS(A|Z=z,L) | Z=z,L ]\\
= & E[(A-E[A|Z=z,L]) S(A|Z=z,L)| Z=z,L ]\\
= & E[(A-E[A|Z=z,L]) S(A,Z=z|L) | Z=z,L ].
\end{align*}
Then
\begin{align*}
& 2\frac{\partial}{\partial t}\delta_t(L) \bigg |_{t=0}\\
= & E[(A-E[A|Z=1,L]) S(A,Z=1|L) | L ]- E[(A-E[A|Z=-1,L]) S(A,Z=-1|L) | L]\\
= & E[\frac{Z}{f(Z|L)} (A-E[A|Z,L]) S(A,Z|L)  |L]\\
= & E[\frac{Z}{f(Z|L)} (A-E[A|Z,L]) S(A,Z|L)  |L] + E[\frac{Z}{f(Z|L)} (A-E[A|Z,L]) S(L)  |L] \\
= & E[\frac{Z}{f(Z|L)} (A-E[A|Z,L]) S(A,Z,L)  |L] \\
= & E[\frac{Z}{f(Z|L)} (A-E[A|Z,L]) S(A,Z,L)  |L] + E[\frac{Z}{f(Z|L)} (A-E[A|Z,L]) S(Y|A,Z,L)  |L] \\
= & E[\frac{Z}{f(Z|L)} (A-E[A|Z,L]) S(A,Y,Z,L)  |L].\\
\end{align*}

It follows that
\begin{align*}
(III)= & E \bigg[   \frac{ZAYI(A=\cD(L))}{\delta^2(L)f^2(Z|L) } \frac{\partial }{\partial t} \delta_t(L)f(Z|L)  \bigg ] \bigg |_{t=0}\\
= & \frac{1}{2} E \bigg[  E [\frac{ZAYI(A=\cD(L))}{\delta^2(L)f(Z|L) } | L] E[\frac{Z}{f(Z|L)} (A-E[A|Z,L]) S(\cO)  |L] \bigg ]\\
= & \frac{1}{2} E \bigg[  \sum_z\frac{zE[AYI(A=\cD(L))| Z=z,L ]}{\delta^2(L) }  E[\frac{Z}{f(Z|L)} (A-E[A|Z,L]) S(\cO)  |L] \bigg ]\\
= & \frac{1}{2} E \bigg[  \sum_z\frac{zE[AYI(A=\cD(L))| Z=z,L ]}{\delta^2(L) }  \frac{Z}{f(Z|L)} (A-E[A|Z,L]) S(\cO)  \bigg ].
\end{align*}
Thus, $\frac{\partial }{\partial t} \cV_t(\cD)|_{t=0}$ further equals to
\begin{align*}
& E[ \frac{ZAYI(A=\cD(L))}{\delta(L)f(Z|L) } S(\cO)] \nonumber \\
 & - E \bigg[  \Big \{ \frac{ZE[AYI(A=\cD(L))|Z,L]}{\delta(L)f(Z|L) }-   \sum_z \frac{zE[AYI(A=\cD(L))|Z=z, L]}{\delta(L)} \Big \}  S(\cO)    \bigg]\\
& -\frac{1}{2} E \bigg[  \sum_z\frac{zE[AYI(A=\cD(L))| Z=z,L  ]}{\delta^2(L) }  \frac{Z}{f(Z|L)} (A-E[A|Z,L]) S(\cO)  \bigg ].
\end{align*}

So the canonical gradient in $\cM_{np}$ is
\begin{align*}
& \frac{ZAYI(A=\cD(L))}{ f(Z|L)\delta(L) }  - \bigg \{    \frac{ZE[AYI(A=\cD(L))|Z,L] }{f(Z|L)\delta(L)}\\ & - \sum_z \frac{ zE[AYI(A=\cD(L))|Z=z,L] }{\delta(L)} \\
  & +\frac{Z(A-E(A|Z,L) ) }{2f(Z|L)\delta(L)} \sum_z \frac{E[AYI(A=\cD(L))|Z=z,L]z}{\delta(L)}  \bigg\} -\cV(\cD).
\end{align*}
As shown by \cite{bickel1993efficient,newey1990,vaart_1998}, the canonical gradient in $\cM_{np}$ equals to the efficient influence function evaluated at observed data $\cO$, which completes our proof.

\end{proof}

\section{Proof of Theorem~\ref{thm:triply1.1}}\label{sec:mr2}
\begin{proof}
We start from multiply robustness.
Under some regularity conditions \citep{10.2307/1912526}, the nuisance estimators $\delta(L,\widehat \beta)$, $\gamma(L,\widehat \eta)$, $\widehat \gamma'(L)$, $\widehat f(Z|L)$, $\widehat E(A|Z=-1,L)$, converge in probability to $\delta(L,\beta^*)$, $\gamma(L,\eta^*)$, $\gamma^{*}{'}(L)$, $f^*(Z|L)$,  $E^*(A|Z=-1,L)$. It suffices to show that in the union model $\cM_{union}$,
\begin{align*}
& E\Bigg [ \frac{ZAYI(A=\cD(L))}{ f^*(Z|L)\delta(L,\beta^*) }  - \bigg \{    \frac{ Z\gamma^{*}{'}(L) }{f^*(Z|L)\delta(L,\beta^*) } - \gamma(L,\eta^*) \\
&    +\frac{Z [A-E^*(A|Z=-1,L)] }{2f^*(Z|L)\delta(L,\beta^*)}\gamma(L,\eta^*) \bigg\} \Bigg ] = E[Y_{\cD(L)}].
\end{align*}

We first note that
\begin{align}
& E\Bigg [ \frac{ZAYI(A=\cD(L))}{ f^*(Z|L)\delta(L,\beta^*) }  - \bigg \{    \frac{ Z\gamma^{*}{'}(L) }{f^*(Z|L)\delta(L,\beta^*) } - \gamma(L,\eta^*) \\
&+\frac{Z [A-E^*(A|Z=-1,L)] }{2f^*(Z|L)\delta(L,\beta^*)}\gamma(L,\eta^*) \bigg\} \Bigg ] \nonumber \\
= & E\Bigg [ \frac{ZAYI(A=\cD(L))}{ f^*(Z|L)\delta(L,\beta^*) }  - \bigg \{    \frac{Z [ \gamma(L, \eta^*)(1+Z)\delta(L,\beta^*) + 2 \gamma^{*}{'}(L) ] }{2f^*(Z|L)\delta(L,\beta^*) } - \gamma(L,\eta^*) \nonumber \\
&    +\frac{Z [A-E^*(A|Z=-1,L)-(1+Z)\delta(L,\beta^*)] }{2f^*(Z|L)\delta(L,\beta^*)}\gamma(L,\eta^*) \bigg\} \Bigg ]
\label{eq:mr}
\end{align}

If $\cM_1$ is correctly specified, Equation~\eqref{eq:mr} equals to
\begin{align*}
 & E\Bigg [ \frac{ZAYI(A=\cD(L))}{ f(Z|L)\delta(L) } - \frac{Z [A-E^*(A|Z=-1,L)-(1+Z)\delta(L)]}{2f(Z|L)\delta(L)}  \gamma(L,\eta^*) \Bigg ]\\
= & E\Bigg [ \frac{ZAYI(A=\cD(L))}{ f(Z|L)\delta(L) } - \frac{Z ( E[A|Z=-1,L] - E^*[A|Z=-1,L] ) }{2f(Z|L)\delta(L)}  \gamma(L,\eta^*)  \Bigg ]\\
= & E  \Bigg[ E\bigg [ \frac{ZAYI(A=\cD(L))}{ f(Z|L)\delta(L) }|L \bigg]\Bigg]  \\
= & E[Y_{\cD}(L)].
\end{align*}

If $\cM_2$ is correctly specified, Equation~\eqref{eq:mr} equals to
\begin{align*}
 & E\Bigg [ \frac{ZAYI(A=\cD(L))}{ f(Z|L)\delta(L,\beta^*) } - \frac{Z [A-E^*(A|Z=-1,L)-(1+Z)\delta(L,\beta^*)] }{2f(Z|L)\delta(L,\beta^*)} \gamma(L)\Bigg ]\\
= & E\Bigg [\sum_z \frac{z}{\delta(L,\beta^*)}  E[AYI(A=\cD(L))|Z=z,L] \\ &- \frac{Z [A- E^*(A|Z=-1,L)-(1+Z)\delta(L,\beta^*)] }{2f(Z|L)\delta(L,\beta^*)} \gamma(L)   \Bigg ]\\
= & E\Bigg [ \frac{\delta(L)}{\delta(L,\beta^*)}\gamma(L)   -\sum_z  \frac{z [  \delta(L)z- \delta(L,\beta^*)z ] }{2\delta(L,\beta^*)}\gamma(L)  \Bigg ]\\
= & E[Y_{\cD}(L)].
\end{align*}

Finally, if $\cM_3$ is correctly specified,
notice that
\begin{align*}
E[AYI(A=\cD(L))|Z,L]= \gamma(L){\delta(L)}\frac{1+Z}{2}+\gamma'(L),
\end{align*}
so we have Equation~\eqref{eq:mr} equals to
\begin{align*}
 & E \bigg[ \frac{ZAYI(A=\cD(L))}{ f^*(Z|L)\delta(L) }  - \bigg \{   \frac{Z [ \gamma(L)(1+Z)\delta(L) + 2 \gamma'(L) ] }{2f^*(Z|L)\delta(L) } - \gamma(L) \bigg\} \bigg]\\
= & E \bigg[ \frac{ZE[AYI(A=\cD(L))|Z,L]}{ f^*(Z|L)\delta(L) }  -   \frac{Z[\gamma(L) (1+Z)\delta(L)+2\gamma'(L)]}{2f^*(Z|L)\delta(L)} + \gamma(L) \bigg] \\
= & E[Y_{\cD}(L)].
\end{align*}

As shown by \cite{robins2001}, the efficient influence function in $\cM_{union}$ coincides with the efficient influence function in $\cM_{np}$, i.e., $EIF_{\cV(\cD)}$.
Thus, in order to show asymptotic normality and local efficiency, we need to derive the influence function of $\widehat \cV_{MR}(\cD)$.
Let $\eta$ be a vector including all nuisance parameters.
 From a standard Taylor expansion of $EIF_{\cV(\cD)}$ around $\cV(\cD)$ and $\eta$, following uniform weak law of large number \citep{NEWEY19942111} under some regularity conditions,
 we have
\begin{align*}
\sqrt n (\widehat \cV_{MR}(\cD) -\cV(\cD)) = \frac{1}{\sqrt n} \sum_{i=1}^{n} EIF_{\cV(\cD)}(\cO_i) + E( \frac{\partial EIF_{\cV(\cD)}}{\partial \eta} )\sqrt n(\widehat \eta - \eta) + o_p(1).
\end{align*}
Following the proof of multiple robustness, we have $E( \partial EIF_{\cV(\cD)}/\partial \eta)=0$ under the intersection model $\cM_{int}$.
This completes our proof.
\end{proof}

\section{Theorem~\ref{thm:gamma} and its proof}\label{sec:thm:gamma}

\begin{theorem}
Under Assumptions~\ref{asm:unconfoundedness2}-\ref{asm:IV positivity} and \ref{asm:strong}, we have that $\gamma(L)=E[Y_{\cD(L)}|L].$
\label{thm:gamma}
\end{theorem}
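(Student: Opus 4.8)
The plan is to show that $\gamma(L)$, defined as $\gamma(L) \equiv \sum_z \{ zE[AYI\{A=\cD(L)\}|Z=z,L] \}/\delta(L)$, equals the conditional value $E[Y_{\cD(L)}|L]$ by essentially ``conditioning'' the calculation already carried out in the proof of Theorem~\ref{thm:identification} on $L$. Recall that in that proof, the key identity was obtained by writing $ZAYI\{A=\cD(L)\}/(\delta(L)f(Z|L))$, taking iterated expectations over $(Y,A)$ given $(L,U,Z)$ using Assumptions~\ref{asm:unconfoundedness2} and \ref{Exclusion Restriction}, and then summing over $z=\pm1$ against $f(Z|L)$. The only thing to change here is to \emph{not} take the outer expectation over $L$, so that one works with the inner conditional expectation throughout.

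First I would rewrite $\gamma(L)$ by pulling the $f(Z|L)$ weight back in: since $\sum_z z\, g(z,L) = \sum_z z\, g(z,L) f(z|L)/f(z|L) = E[Z g(Z,L)/f(Z|L) \mid L]$ for any function $g$, we have
\begin{align*}
\gamma(L) = E\!\left[ \left. \frac{Z\, A Y I\{A=\cD(L)\}}{\delta(L) f(Z|L)} \,\right|\, L \right].
\end{align*}
Next I would expand the inner conditional expectation by further conditioning on $U$ and $Z$: using the consistency assumption $Y = \sum_a Y_a I\{A=a\}$, Assumption~\ref{Exclusion Restriction} ($Y_{z,a}=Y_a$), and Assumption~\ref{asm:unconfoundedness2} (latent unconfoundedness, so that $E[Y_a \mid L,U,Z,A=a] = E[Y_a\mid L,U]$), the inner expression becomes
\begin{align*}
\gamma(L) = E\!\left[ \left. \sum_a \frac{Z\,\Pr(A=a\mid L,U,Z)\, I\{\cD(L)=a\}\, E[Y_a\mid L,U]\, a}{\delta(L) f(Z|L)} \,\right|\, L \right],
\end{align*}
where the sum over $a$ carries the sign factor $a$. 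Carrying out the sum over $Z\in\{+1,-1\}$ (against $f(Z|L)$) exactly as in the Appendix proof of Theorem~\ref{thm:identification} collapses this to
\begin{align*}
\gamma(L) = E\!\left[ \left. \frac{\widetilde\delta(L,U)\,\big\{ (E[Y_1\mid L,U]-E[Y_{-1}\mid L,U])I\{\cD(L)=1\} + E[Y_{-1}\mid L,U]\big\}}{\delta(L)} \,\right|\, L \right],
\end{align*}
using the algebraic rearrangement $E[Y_1|L,U]I\{\cD(L)=1\}+E[Y_{-1}|L,U]I\{\cD(L)=-1\} = \widetilde\gamma(L,U)I\{\cD(L)=1\}+E[Y_{-1}|L,U]$ with $\widetilde\gamma(L,U)=E[Y_1-Y_{-1}|L,U]$.

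Finally I would invoke Assumption~\ref{asm:strong} (independent compliance type), which gives $\widetilde\delta(L,U)=\delta(L)$ almost surely, so $\widetilde\delta(L,U)/\delta(L)=1$ and the expression reduces to $E[\widetilde\gamma(L,U)I\{\cD(L)=1\} + E[Y_{-1}|L,U] \mid L] = E[Y_1|L]I\{\cD(L)=1\}+E[Y_{-1}|L]I\{\cD(L)=-1\} = E[Y_{\cD(L)}|L]$. The main obstacle — though it is more bookkeeping than conceptual difficulty — is keeping track of the sign factors $a$ and $z$ correctly through the double sum over $a$ and $z$ and verifying that the cross terms assemble into $\widetilde\delta(L,U)=\Pr(A=1|Z=1,L,U)-\Pr(A=1|Z=-1,L,U)$ rather than something else; this is precisely the computation already displayed in the proof of Theorem~\ref{thm:identification}, so I would simply reference that chain of equalities and note that every step there holds verbatim before the outer expectation over $L$ is applied. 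Under the weaker Assumption~\ref{asm:weak} one would instead only recover $E[\widetilde\gamma(L,U)|L]I\{\cD(L)=1\}$ plus an $L$-measurable term depending on $U$, which is why the stronger Assumption~\ref{asm:strong} is needed for the clean counterfactual interpretation.
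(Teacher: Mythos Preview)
Your proposal is correct and follows essentially the same route as the paper: both arguments expand $AYI\{A=\cD(L)\}$ over $a$, push in iterated expectations over $(Y,A)$ given $(L,U,Z)$ via Assumptions~\ref{asm:unconfoundedness2}--\ref{IV Independence}, collapse the $z$-sum to produce the factor $\widetilde\delta(L,U)/\delta(L)$, and then invoke Assumption~\ref{asm:strong} to set that ratio to $1$. The only cosmetic difference is that you first rewrite $\gamma(L)$ as $E[ZAYI\{A=\cD(L)\}/\{\delta(L)f(Z|L)\}\mid L]$ and appeal to the display in the proof of Theorem~\ref{thm:identification}, whereas the paper works directly with the sum-over-$z$ definition; the underlying algebra is identical.
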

\begin{proof}
Note that
\begin{align*}
\gamma(L) = & \sum_z \frac{ zE[AYI(A=\cD(L))|Z=z,L] }{\delta(L)}\\
= & \frac{ E[AYI(A=\cD(L))|Z=1,L] }{\delta(L)} - \frac{E[AYI(A=\cD(L))|Z=-1,L] }{\delta(L)} \\
= & \sum_a \frac{ E[aY_aI(A=\cD(L))I(A=a)|Z=1,L] }{\delta(L)}\\
 & - \sum_a \frac{E[aY_aI(A=\cD(L))I(A=a)|Z=-1,L] }{\delta(L)} \\
 = & \sum_a \frac{ E\big[aE[Y_a|L,U]I(\cD(L)=a)\Pr(A=a|Z=1,L,U)\big] }{\delta(L)}\\
 & - \sum_a \frac{ E\big[aE[Y_a|L,U]I(\cD(L)=a)\Pr(A=a|Z=-1,L,U)\big] }{\delta(L)} \\
 = & \frac{ E\big[ E[Y_{1}|L,U]I(\cD(L)=1)\Pr(A=1|Z=1,L,U) \big]}{\delta(L)} \\
 &- \frac{ E\big[ E[Y_{-1}|L,U]I(\cD(L)=-1)\Pr(A=-1|Z=1,L,U)\big] }{\delta(L)} \\
 & - \frac{E\big[ E[Y_{1}|L,U]I(\cD(L)=1)\Pr(A=1|Z=-1,L,U)\big] }{\delta(L)} \\
 & + \frac{E\big[ E[Y_{-1}|L,U]I(\cD(L)=-1)\Pr(A=-1|Z=-1,L,U)\big] }{\delta(L)}\\
 = & E[Y_{1}|L]I(\cD(L)=1)+E[Y_{-1}|L]I(\cD(L)=-1)\\
 =& E[Y_{\cD(L)}|L].
\end{align*}
\end{proof}

\section{Additional simulations}
\subsection{Sensitivity analysis on the strength of the IV}
In this section, we conducted the sensitivity analysis on the strength of the IV.
Treatment $A$ was generated under a logistic regression with success probabilities,
\begin{align*}
\Pr(A=1|Z,L,U)=\expit\{2L^{(1)}+3Z-0.5U\},
\end{align*}
and
\begin{align*}
\Pr(A=1|Z,L,U)=\expit\{2L^{(1)}+2Z-0.5U\},
\end{align*}
respectively, with $Z$ a Bernoulli event with probability 1/2, and $U$ from a bridge distribution with parameter $\phi=1/2$.
The additive associations between $A$ and $Z$ defined as $\Pr(A=1|Z=1)- \Pr(A=1|Z=-1)$ are approximately equal to 0.74 and 0.54 for these two scenarios, respectively.
The additive association between $A$ and $Z$ is approximately equal to 0.66 for the scenario considered in the article.
Tables~\ref{simu3} and \ref{simu5} report the mean and standard deviation of value functions evaluated at estimated optimal regimes in test samples for two scenarios, respectively.
Tables~\ref{simu4} and \ref{simu6} report the mean and standard deviation of correct classification rates in test samples for two scenarios, respectively.
As can be seen from tables, higher compliance rate generally leads to a lower variance of the estimated regime in terms of both value functions and correct classification rates.

\begin{table}[!]
\begin{center}
\caption{\label{simu3}
Simulation results: Mean $\times 10^{-2} $ (sd $\times 10^{-2}$) of value functions}
\begin{tabular}{cccccc}
\noalign{\smallskip}
\noalign{\smallskip}
 & Kernel  & OWL & RWL & IV-IW   & IV-MR   \\
\noalign{\smallskip}
\multirow{2}{*}{1} & Linear &  94.5~(4.7)   &  96.5~(2.9) &  96.6~(3.7)  &  97.6~(2.1)      \\
                & Gaussian  &   87.7~(7.0)  &  95.0~(2.9) &  90.8~(8.2)  &  94.6~(4.8)    \\
\noalign{\smallskip}
\multirow{2}{*}{2} & Linear & 38.1~(18.9)  &  40.1~(19.3) & 93.0~(6.6) &  93.8~(6.0)      \\
                & Gaussian &  64.1~(9.0)   & 65.0~(9.2)  & 84.3~(10.4) & 87.7~(9.1)     \\
\noalign{\smallskip}
\noalign{\smallskip}
\multirow{2}{*}{3} & Linear &  354.0~(6.1) &  358.6~(3.0) & 359.5~(2.7)  &  359.6~(2.2)   \\
                & Gaussian   &  302.9~(33.3) &  356.4~(4.1)  & 320.9~(33.0)  &  357.0~(5.8)   \\
\noalign{\smallskip}
\multirow{2}{*}{4} & Linear &  275.4~(5.4)  & 275.5~(5.3) & 351.6~(8.4) &  351.7~(8.2)    \\
                & Gaussian  &  282.4~(14.2)  & 302.4~(14.6) & 314.0~(32.9) & 337.8~(19.1)     \\
\end{tabular}
\end{center}
\source{OWL:  outcome weighted learning; RWL: residual weighted learning; IV-IW: the proposed estimator with weight $\widehat W^{(1)}$ or $\widehat W^{(2)}$; IV-MR: the proposed multiply robust estimator with weight $\widehat W^{(1)}_{MR} $ or $\widehat W^{(2)}_{MR} $;  The empirical optimal value functions are 0.998, 0.995, 3.636, 3.630 for four scenarios, respectively.}
\end{table}

\begin{table}[!]
\begin{center}
\caption{\label{simu4}
Simulation results: Mean $\times 10^{-2}$ (sd $\times 10^{-2}$) of correct classification rates}
\begin{tabular}{cccccc}
\noalign{\smallskip}
\noalign{\smallskip}
 & Kernel  & OWL & RWL & IV-IW & IV-MR  \\
\noalign{\smallskip}
\multirow{2}{*}{1}  & Linear &  86.1~(5.6)   &  89.0~(4.1)  &  89.3~(4.7)  &  91.0~(3.3)     \\
                & Gaussian  &   79.2~(6.8)  &  86.9~(3.9) & 83.0~(8.1)  &  86.9~(5.4)    \\
\noalign{\smallskip}
\multirow{2}{*}{2} & Linear  &  43.3~(11.0) &   44.2~(10.9)  & 84.9~(7.2)  &  85.8~(6.7)   \\
                & Gaussian  &  59.4~(6.2)  &  60.3~(6.6) & 76.7~(9.2)  &  79.8~(8.2)    \\
\noalign{\smallskip}
\noalign{\smallskip}
\multirow{2}{*}{3} & Linear &  86.9~(4.4)  &  89.1~(4.0) & 91.3~(3.0)  &  90.6~(2.9)   \\
                & Gaussian  &  72.4~(9.4)  &  87.6~(4.3) & 77.8~(11.2)  &  89.3~(4.2)  \\
\noalign{\smallskip}
\multirow{2}{*}{4} & Linear &  37.3~(2.2)  &  37.4~(2.3) & 84.5~(5.9) &  84.3~(6.1)   \\
                & Gaussian  &  47.6~(7.1)  &  51.0~(8.3) & 71.0~(11.2) &  77.5~(9.5) \\
\end{tabular}
\end{center}
\source{OWL:  outcome weighted learning; RWL: residual weighted learning; IV-IW: the proposed estimator with weight $\widehat W^{(1)}$ or $\widehat W^{(2)}$; IV-MR: the proposed multiply robust estimator with weight $\widehat W^{(1)}_{MR} $ or $\widehat W^{(2)}_{MR} $.}
\end{table}


\begin{table}[!]
\begin{center}
\caption{\label{simu5}
Simulation results: Mean $\times 10^{-2} $ (sd $\times 10^{-2}$) of value functions}
\begin{tabular}{cccccc}
\noalign{\smallskip}
\noalign{\smallskip}
 & Kernel  & OWL & RWL & IV-IW   & IV-MR   \\
\noalign{\smallskip}
\multirow{2}{*}{1} & Linear &  96.8~(1.9)   &  97.7~(1.4) &  93.7~(6.5)  &  95.3~(4.9)      \\
                & Gaussian   &   87.9~(8.2)  &  96.0~(2.7) &  86.3~(10.6)  &  90.5~(8.0)   \\
\noalign{\smallskip}
\multirow{2}{*}{2} & Linear & 35.0~(17.1)  &  38.3~(18.7) & 88.8~(8.9) & 89.5~(9.0)      \\
                & Gaussian  &  59.0~(10.1)   & 58.8~(11.1) & 77.6~(12.6) & 81.1~(11.4)     \\
\noalign{\smallskip}
\noalign{\smallskip}
\multirow{2}{*}{3} & Linear &  358.0~(3.6) &  360.2~(2.1) & 356.8~(5.4)  &  357.4~(4.2)   \\
                & Gaussian   &  288.3~(33.5) &  357.5~(4.5)  & 305.7~(34.6)  &  350.9~(13.0)   \\
\noalign{\smallskip}
\multirow{2}{*}{4} & Linear &  274.6~(0.0)  & 275.0~(3.6) & 343.5~(20.0) &  345.1~(16.9)    \\
                & Gaussian    &  279.0~(11.2)  & 293.5~(13.2) & 297.9~(32.2) & 320.0~(25.1)   \\
\end{tabular}
\end{center}
\source{OWL:  outcome weighted learning; RWL: residual weighted learning; IV-IW: the proposed estimator with weight $\widehat W^{(1)}$ or $\widehat W^{(2)}$; IV-MR: the proposed multiply robust estimator with weight $\widehat W^{(1)}_{MR} $ or $\widehat W^{(2)}_{MR} $; The empirical optimal value functions are 0.998, 0.995, 3.636, 3.630 for four scenarios, respectively.}
\end{table}

\begin{table}[!]
\begin{center}
\caption{\label{simu6}
Simulation results: Mean $\times 10^{-2}$ (sd $\times 10^{-2}$) of correct classification rates}
\begin{tabular}{cccccc}
\noalign{\smallskip}
\noalign{\smallskip}
 & Kernel  & OWL & RWL  & IV-IW & IV-MR \\
\noalign{\smallskip}
\multirow{2}{*}{1} & Linear  &  89.4~(3.4)  &  91.1~(3.0) & 85.3~(7.0)  &  87.3~(5.6)   \\
                & Gaussian &  79.8~(8.0)  &  88.4~(3.8) & 78.5~(9.4)  &  82.5~(7.6)    \\
\noalign{\smallskip}
\multirow{2}{*}{2} & Linear &  41.4~(9.7) &   43.0~(10.2) & 80.2~(8.7)  &  80.9~(8.8)     \\
                & Gaussian  &  55.3~(6.4)  &  55.5~(7.2) & 71.1~(10.1)  &  73.8~(9.5)    \\
\noalign{\smallskip}
\noalign{\smallskip}
\multirow{2}{*}{3} & Linear &  90.0~(3.3)  &  91.0~(3.3) & 88.9~(4.1)  &  88.7~(3.6)   \\
                & Gaussian  &  69.2~(10.4)  &  88.9~(4.6) & 71.6~(11.8)  &  85.9~(6.5)  \\
\noalign{\smallskip}
\multirow{2}{*}{4} & Linear  &  37.0~(0.0)  &  37.2~(1.5) & 80.4~(9.7) &  80.8~(8.9)  \\
                & Gaussian  &  42.8~(5.4)  &  45.4~(6.5) & 65.5~(10.7) &  70.7~(10.0) \\
\end{tabular}
\end{center}
\source{OWL:  outcome weighted learning; RWL: residual weighted learning; IV-IW: the proposed estimator with weight $\widehat W^{(1)}$ or $\widehat W^{(2)}$; IV-MR: the proposed multiply robust estimator with weight $\widehat W^{(1)}_{MR} $ or $\widehat W^{(2)}_{MR} $.}
\end{table}

\subsection{Additional simulations with different sample sizes}
Tables~\ref{simu2501}-\ref{simu2502} and \ref{simu1001}-\ref{simu1002} report the simulation results with sample sizes 250 and 1000, respectively.

\begin{table}[!]
\begin{center}
\caption{\label{simu2501}
Simulation results: Mean $\times 10^{-2} $ (sd $\times 10^{-2}$) of value functions (sample size $n=250$)}
\begin{tabular}{cccccc}
\noalign{\smallskip}
\noalign{\smallskip}
 & Kernel  & OWL & RWL  & IV-IW   & IV-MR  \\
\noalign{\smallskip}
\multirow{2}{*}{1} & Linear &  92.0~(6.5)   &  95.5~(3.2) &  93.0~(6.2)  &  94.3~(5.4)      \\
                & Gaussian  &   82.7~(8.7)  &  92.8~(4.8) &  81.5~(11.4)  &  87.9~(9.8)   \\
\noalign{\smallskip}
\multirow{2}{*}{2} & Linear & 48.5~(22.7)  &  51.7~(22.3) & 87.0~(9.6) &  87.4~(9.5)      \\
                & Gaussian  &  61.6~(11.1)   & 62.0~(12.9) & 75.7~(12.1) & 77.4~(12.3)     \\
\noalign{\smallskip}
\noalign{\smallskip}
\multirow{2}{*}{3} & Linear &  349.5~(11.8) &  357.4~(4.0) & 354.7~(8.4)  &  355.9~(5.5)   \\
                & Gaussian   &  282.7~(35.4) &  354.0~(6.4) & 298.7~(35.7)  &  346.1~(17.2)    \\
\noalign{\smallskip}
\multirow{2}{*}{4} & Linear  &  278.5~(12.7)  & 280.8~(15.1) & 339.7~(21.1) &  340.3~(19.3)   \\
                & Gaussian  &  276.5~(17.7)  & 299.3~(15.4)  & 295.6~(33.7) & 317.8~(25.3)   \\
\end{tabular}
\end{center}
\source{OWL:  outcome weighted learning; RWL: residual weighted learning; IV-IW: the proposed estimator with weight $\widehat W^{(1)}$ or $\widehat W^{(2)}$; IV-MR: the proposed multiply robust estimator with weight $\widehat W^{(1)}_{MR} $ or $\widehat W^{(2)}_{MR} $;  The empirical optimal value functions are 0.998, 0.995, 3.636, 3.630 for four scenarios, respectively.}
\end{table}

\begin{table}[!]
\begin{center}
\caption{\label{simu2502}
Simulation results: Mean $\times 10^{-2}$ (sd $\times 10^{-2}$) of correct classification rates (sample size $n=250$)}
\begin{tabular}{cccccc}
\noalign{\smallskip}
\noalign{\smallskip}
 & Kernel   & OWL & RWL & IV-IW & IV-MR \\
\noalign{\smallskip}
\multirow{2}{*}{1}  & Linear  &  83.2~(7.0)   &  87.4~(4.6) &  84.3~(6.7)  &  86.0~(6.1)     \\
                & Gaussian  &   74.5~(8.1)  &  84.4~(5.4) & 74.2~(9.8)  &  80.1~(8.9)    \\
\noalign{\smallskip}
\multirow{2}{*}{2} & Linear &  50.0~(14.1) &   51.3~(13.4) & 78.3~(8.8)  &  78.7~(8.7)     \\
                & Gaussian &  58.0~(7.4)  &  58.5~(8.6) & 69.4~(9.6)  &  70.7~(9.8)    \\
\noalign{\smallskip}
\noalign{\smallskip}
\multirow{2}{*}{3} & Linear &  84.8~(5.3)  &  88.2~(4.3) & 87.6~(5.0)  &  87.6~(4.5)   \\
                & Gaussian  &  67.3~(9.8)  &  85.8~(5.5) & 70.2~(11.9)  &  83.3~(8.4)  \\
\noalign{\smallskip}
\multirow{2}{*}{4} & Linear &  39.0~(6.2)  &  39.9~(7.2) & 78.7~(9.6) &  78.6~(9.4)   \\
                & Gaussian  &  46.6~(6.9)  &  49.9~(8.6) & 65.1~(10.8) &  69.8~(10.6)  \\
\end{tabular}
\end{center}
\source{OWL:  outcome weighted learning; RWL: residual weighted learning; IV-IW: the proposed estimator with weight $\widehat W^{(1)}$ or $\widehat W^{(2)}$; IV-MR: the proposed multiply robust estimator with weight $\widehat W^{(1)}_{MR} $ or $\widehat W^{(2)}_{MR} $.}
\end{table}


\begin{table}[!]
\begin{center}
\caption{\label{simu1001}
Simulation results: Mean $\times 10^{-2} $ (sd $\times 10^{-2}$) of value functions (sample size $n=1000$)}
\begin{tabular}{cccccc}
\noalign{\smallskip}
\noalign{\smallskip}
 & Kernel  & OWL & RWL & IV-IW   & IV-MR   \\
\noalign{\smallskip}
\multirow{2}{*}{1} & Linear  &  97.7~(1.4)   &  98.4~(0.9)  &  97.6~(2.2)  &  98.2~(1.1)    \\
                & Gaussian  &   91.4~(6.0)  &  97.2~(1.6)  &  93.4~(5.3)  &  96.1~(2.6)  \\
\noalign{\smallskip}
\multirow{2}{*}{2} & Linear & 29.1~(8.6)  &  31.9~(12.8) & 95.0~(5.2) &  95.7~(4.5)      \\
                & Gaussian  &  62.5~(6.6)   & 62.1~(7.3) & 88.7~(8.1) & 90.8~(6.1)     \\
\noalign{\smallskip}
\noalign{\smallskip}
\multirow{2}{*}{3} & Linear &  359.4~(2.5) &  360.8~(1.5) & 360.4~(1.9)  &  360.1~(1.7)   \\
                & Gaussian  &  305.2~(29.8) &  359.2~(2.8)  & 331.8~(29.7)  &  359.0~(3.8)   \\
\noalign{\smallskip}
\multirow{2}{*}{4} & Linear  &  274.6~(0.0)  & 274.7~ (1.3) & 355.1~(4.2) &  354.7~(3.9)   \\
                & Gaussian   &  283.3~(10.4)  & 299.0~(11.5) & 325.9~(29.2) & 345.8~(13.1)    \\
\end{tabular}
\end{center}
\source{ OWL:  outcome weighted learning; RWL: residual weighted learning; IV-IW: the proposed estimator with weight $\widehat W^{(1)}$ or $\widehat W^{(2)}$; IV-MR: the proposed multiply robust estimator with weight $\widehat W^{(1)}_{MR} $ or $\widehat W^{(2)}_{MR}$; The empirical optimal value functions are 0.998, 0.995, 3.636, 3.630 for four scenarios, respectively.}
\end{table}

\begin{table}[!]
\begin{center}
\caption{\label{simu1002}
Simulation results: Mean $\times 10^{-2}$ (sd $\times 10^{-2}$) of correct classification rates (sample size $n=1000$)}
\begin{tabular}{cccccc}
\noalign{\smallskip}
\noalign{\smallskip}
 & Kernel   & OWL & RWL & IV-IW & IV-MR \\
\noalign{\smallskip}
\multirow{2}{*}{1}  & Linear &  91.0~(2.9)   &  92.8~(2.4) &  90.9~(3.4)  &  92.2~(2.6)      \\
                & Gaussian  &   83.4~(6.3)  &  90.3~(2.8) & 85.7~(5.5)  &  88.8~(3.5)    \\
\noalign{\smallskip}
\multirow{2}{*}{2} & Linear &  38.0~(4.5) &   39.4~(6.5) & 87.4~(5.9)  &  88.3~(5.3)     \\
                & Gaussian   &  57.6~(4.8)  &  57.5~(5.4) & 80.8~(7.4)  &  82.6~(6.1)   \\
\noalign{\smallskip}
\noalign{\smallskip}
\multirow{2}{*}{3} & Linear  &  91.3~(2.8)  &  91.5~(2.7) & 92.2~(2.5)  &  91.2~(2.4)  \\
                & Gaussian  &  73.8~(9.2)  &  90.7~(3.3) & 81.2~(10.5)  &  91.3~(3.0)  \\
\noalign{\smallskip}
\multirow{2}{*}{4} & Linear &  37.0~(0.0)  &  37.1~(0.7) & 87.0~(3.9) &  86.5~(3.7)   \\
                & Gaussian  &  44.1~(5.4)  &  47.9~(6.3) & 75.4~(10.7) &  81.5~(7.1)  \\
\end{tabular}
\end{center}
\source{OWL:  outcome weighted learning; RWL: residual weighted learning; IV-IW: the proposed estimator with weight $\widehat W^{(1)}$ or $\widehat W^{(2)}$; IV-MR: the proposed multiply robust estimator with weight $\widehat W^{(1)}_{MR} $ or $\widehat W^{(2)}_{MR} $.}
\end{table}

\bibliographystyle{asa}
\bibliography{msm,iv,owlsurvival,survtrees,survtrees2,causal}

\begin{thebibliography}{80}
\newcommand{\enquote}[1]{``#1''}
\expandafter\ifx\csname natexlab\endcsname\relax\def\natexlab#1{#1}\fi

\bibitem[{Abadie(2003)}]{ABADIE2003231}
Abadie, A. (2003), \enquote{Semiparametric instrumental variable estimation of
  treatment response models,} \textit{Journal of Econometrics}, 113, 231 --
  263.

\bibitem[{Angrist and Evans(1998)}]{angrist1998}
Angrist, J.~D. and Evans, W.~N. (1998), \enquote{Children and Their Parents'
  Labor Supply: Evidence from Exogenous Variation in Family Size,} \textit{The
  American Economic Review}, 88, 450--477.

\bibitem[{Angrist et~al.(1996)Angrist, Imbens, and Rubin}]{angrist1996}
Angrist, J.~D., Imbens, G.~W., and Rubin, D.~B. (1996), \enquote{Identification
  of Causal Effects Using Instrumental Variables,} \textit{Journal of the
  American Statistical Association}, 91, 444--455.

\bibitem[{Aronow and Carnegie(2013)}]{aronow2013}
Aronow, P.~M. and Carnegie, A. (2013), \enquote{Beyond LATE: Estimation of the
  Average Treatment Effect with an Instrumental Variable,} \textit{Political
  Analysis}, 21, 492–506.

\bibitem[{Athey et~al.(2019)Athey, Tibshirani, and Wager}]{athey2019}
Athey, S., Tibshirani, J., and Wager, S. (2019), \enquote{Generalized random
  forests,} \textit{Ann. Statist.}, 47, 1148--1178.

\bibitem[{Athey and Wager(2017)}]{athey2017efficient}
Athey, S. and Wager, S. (2017), \enquote{Efficient policy learning,}
  \textit{arXiv preprint arXiv:1702.02896}.

\bibitem[{Baiocchi et~al.(2014)Baiocchi, Cheng, and
  Small}]{baiocchi2014instrumental}
Baiocchi, M., Cheng, J., and Small, D.~S. (2014), \enquote{Instrumental
  variable methods for causal inference,} \textit{Statistics in medicine}, 33,
  2297--2340.

\bibitem[{Baiocchi et~al.(2010)Baiocchi, Small, Lorch, and
  Rosenbaum}]{baiocchi2010building}
Baiocchi, M., Small, D.~S., Lorch, S., and Rosenbaum, P.~R. (2010),
  \enquote{Building a stronger instrument in an observational study of
  perinatal care for premature infants,} \textit{Journal of the American
  Statistical Association}, 105, 1285--1296.

\bibitem[{Balke and Pearl(1997)}]{Balke1997}
Balke, A. and Pearl, J. (1997), \enquote{Bounds on Treatment Effects from
  Studies with Imperfect Compliance,} \textit{Journal of the American
  Statistical Association}, 92, 1171--1176.

\bibitem[{Bickel et~al.(1993)Bickel, Klaassen, Ritov, and
  Wellner}]{bickel1993efficient}
Bickel, P., Klaassen, C., Ritov, Y., and Wellner, J. (1993), \textit{Efficient
  and Adaptive Estimation for Semiparametric Models}, Johns Hopkins series in
  the mathematical sciences, Springer New York.

\bibitem[{Bound et~al.(1995)Bound, Jaeger, and Baker}]{bound1995}
Bound, J., Jaeger, D.~A., and Baker, R.~M. (1995), \enquote{Problems with
  Instrumental Variables Estimation When the Correlation Between the
  Instruments and the Endogeneous Explanatory Variable is Weak,}
  \textit{Journal of the American Statistical Association}, 90, 443--450.

\bibitem[{Brookhart and Schneeweiss(2007)}]{brookhart2007preference}
Brookhart, M.~A. and Schneeweiss, S. (2007), \enquote{Preference-based
  instrumental variable methods for the estimation of treatment effects:
  assessing validity and interpreting results,} \textit{The international
  journal of biostatistics}, 3.

\bibitem[{Chakraborty et~al.(2010)Chakraborty, Murphy, and
  Strecher}]{chakraborty2010dtr}
Chakraborty, B., Murphy, S., and Strecher, V. (2010), \enquote{Inference for
  non-regular parameters in optimal dynamic treatment regimes,}
  \textit{Statistical Methods in Medical Research}, 19, 317--343, pMID:
  19608604.

\bibitem[{Chen et~al.(2016)Chen, Zeng, and Kosorok}]{chen2016dosefinding}
Chen, G., Zeng, D., and Kosorok, M.~R. (2016), \enquote{Personalized Dose
  Finding Using Outcome Weighted Learning,} \textit{Journal of the American
  Statistical Association}, 111, 1509--1521, pMID: 28255189.

\bibitem[{Chernozhukov et~al.(2018)Chernozhukov, Chetverikov, Demirer, Duflo,
  Hansen, Newey, and Robins}]{Chernozhukov2018}
Chernozhukov, V., Chetverikov, D., Demirer, M., Duflo, E., Hansen, C., Newey,
  W., and Robins, J. (2018), \enquote{Double/debiased machine learning for
  treatment and structural parameters,} \textit{The Econometrics Journal}, 21,
  1--68.

\bibitem[{Cui et~al.(2017)Cui, Zhu, and Kosorok}]{cui2017tree}
Cui, Y., Zhu, R., and Kosorok, M. (2017), \enquote{Tree based weighted learning
  for estimating individualized treatment rules with censored data,}
  \textit{Electronic Journal of Statistics}, 11, 3927--3953.

\bibitem[{Ertefaie et~al.(2018)Ertefaie, Small, and Rosenbaum}]{ertefaie2018}
Ertefaie, A., Small, D.~S., and Rosenbaum, P.~R. (2018), \enquote{Quantitative
  Evaluation of the Trade-Off of Strengthened Instruments and Sample Size in
  Observational Studies,} \textit{Journal of the American Statistical
  Association}, 113, 1122--1134.

\bibitem[{Greenland(2000)}]{10.1093/ije/29.4.722}
Greenland, S. (2000), \enquote{{An introduction to instrumental variables for
  epidemiologists},} \textit{International Journal of Epidemiology}, 29,
  722--729.

\bibitem[{Hernan and Robins(2006)}]{hernan2006epi}
Hernan, M. and Robins, J. (2006), \enquote{Instruments for Causal Inference: An
  Epidemiologist's Dream?} \textit{Epidemiology (Cambridge, Mass.)}, 17,
  360--72.

\bibitem[{Imbens and Angrist(1994)}]{imbens1994}
Imbens, G.~W. and Angrist, J.~D. (1994), \enquote{Identification and Estimation
  of Local Average Treatment Effects,} \textit{Econometrica}, 62, 467--475.

\bibitem[{Kallus(2018)}]{kallus2018balanced}
Kallus, N. (2018), \enquote{Balanced policy evaluation and learning,} in
  \textit{Advances in Neural Information Processing Systems}, pp. 8895--8906.

\bibitem[{Kallus and Zhou(2018)}]{kallus2018confounding}
Kallus, N. and Zhou, A. (2018), \enquote{Confounding-robust policy
  improvement,} in \textit{Advances in neural information processing systems},
  pp. 9269--9279.

\bibitem[{Kosorok and Laber(2019)}]{kosorok2019review}
Kosorok, M.~R. and Laber, E.~B. (2019), \enquote{Precision Medicine,}
  \textit{Annual Review of Statistics and Its Application}, 6, 263--286.

\bibitem[{Kosorok and Moodie(2016)}]{kosorok2016}
Kosorok, M.~R. and Moodie, E. (2016), \textit{Adaptive Treatment Strategies in
  Practice: Planning Trials and Analyzing Data for Personalized Medicine},
  ASA-SIAM Series on Statistics and Applied Probability.

\bibitem[{Laber et~al.(2014{\natexlab{a}})Laber, Linn, and
  Stefanski}]{laber2014q}
Laber, E.~B., Linn, K.~A., and Stefanski, L.~A. (2014{\natexlab{a}}),
  \enquote{{Interactive model building for Q-learning},} \textit{Biometrika},
  101, 831--847.

\bibitem[{Laber et~al.(2014{\natexlab{b}})Laber, Lizotte, Qian, Pelham, and
  Murphy}]{laber2014}
Laber, E.~B., Lizotte, D.~J., Qian, M., Pelham, W.~E., and Murphy, S.~A.
  (2014{\natexlab{b}}), \enquote{Dynamic treatment regimes: Technical
  challenges and applications,} \textit{Electron. J. Statist.}, 8, 1225--1272.

\bibitem[{Laber and Zhao(2015)}]{Laber2015tree}
Laber, E.~B. and Zhao, Y.~Q. (2015), \enquote{{Tree-based methods for
  individualized treatment regimes},} \textit{Biometrika}, 102, 501--514.

\bibitem[{Linn et~al.(2015)Linn, Laber, and Stefanski}]{linn2015}
Linn, K.~A., Laber, E.~B., and Stefanski, L.~A. (2015), \enquote{Estimation of
  dynamic treatment regimes for complex outcomes: Balancing benefits and
  risks,} \textit{Adaptive Treatment Strategies in Practice: Planning Trials
  and Analyzing Data for Personalized Medicine, book chapter, CRC Press.}

\bibitem[{Linn et~al.(2017)Linn, Laber, and Stefanski}]{linn2017}
--- (2017), \enquote{Interactive Q-Learning for Quantiles,} \textit{Journal of
  the American Statistical Association}, 112, 638--649.

\bibitem[{Liu et~al.(2018)Liu, Wang, Kosorok, Zhao, and
  Zeng}]{Liu2018AugmentedOL}
Liu, Y., Wang, Y., Kosorok, M.~R., Zhao, Y., and Zeng, D. (2018),
  \enquote{Augmented outcome-weighted learning for estimating optimal dynamic
  treatment regimens.} \textit{Statistics in medicine}, 37 26, 3776--3788.

\bibitem[{Liu et~al.(2016)Liu, Wang, Kosorok, and Zeng}]{liu2016hybrid}
Liu, Y., Wang, Y., Kosorok, Michael~R., Z.~Y., and Zeng, D. (2016),
  \enquote{Robust hybrid learning for estimating personalized dynamic treatment
  regimens,} \textit{arXiv:1611.02314}.

\bibitem[{Luckett et~al.(2019)Luckett, Laber, Kahkoska, Maahs, Mayer-Davis, and
  Kosorok}]{luckett2019}
Luckett, D.~J., Laber, E.~B., Kahkoska, A.~R., Maahs, D.~M., Mayer-Davis, E.,
  and Kosorok, M.~R. (2019), \enquote{Estimating Dynamic Treatment Regimes in
  Mobile Health Using V-Learning,} \textit{Journal of the American Statistical
  Association}, 0, 1--34.

\bibitem[{Murphy(2003)}]{murphy2003optimal}
Murphy, S.~A. (2003), \enquote{Optimal dynamic treatment regimes,}
  \textit{Journal of the Royal Statistical Society: Series B (Statistical
  Methodology)}, 65, 331--355.

\bibitem[{Newey(1990)}]{newey1990}
Newey, W.~K. (1990), \enquote{Semiparametric efficiency bounds,}
  \textit{Journal of Applied Econometrics}, 5, 99--135.

\bibitem[{Newey and McFadden(1994)}]{NEWEY19942111}
Newey, W.~K. and McFadden, D. (1994), \enquote{Chapter 36 Large sample
  estimation and hypothesis testing,} Elsevier, vol.~4 of \textit{Handbook of
  Econometrics}, pp. 2111 -- 2245.

\bibitem[{Ogburn et~al.(2015)Ogburn, Rotnitzky, and Robins}]{ogburn2015}
Ogburn, E., Rotnitzky, A., and Robins, J. (2015), \enquote{Doubly robust
  estimation of the local average treatment effect curve,} \textit{Journal of
  the Royal Statistical Society. Series B: Statistical Methodology}, 77,
  373--396.

\bibitem[{Orellana et~al.(2010)Orellana, Rotnitzky, and
  Robins}]{Orellana2010pm}
Orellana, L., Rotnitzky, A.~G., and Robins, J. (2010), \enquote{Dynamic Regime
  Marginal Structural Mean Models for Estimation of Optimal Dynamic Treatment
  Regimes, Part I: Main Content,} \textit{The international journal of
  biostatistics}, 6, Article 8.

\bibitem[{Qi et~al.(2019{\natexlab{a}})Qi, Cui, Liu, and Pang}]{qi2019siam}
Qi, Z., Cui, Y., Liu, Y., and Pang, J.-S. (2019{\natexlab{a}}),
  \enquote{Estimation of Individualized Decision Rules Based on An Optimized
  Covariate-dependent Equivalent of Random Outcomes,} \textit{SIAM Journal on
  Optimization}, to appear.

\bibitem[{Qi et~al.(2019{\natexlab{b}})Qi, Liu, Fu, and Liu}]{Qi2019}
Qi, Z., Liu, D., Fu, H., and Liu, Y. (2019{\natexlab{b}}), \enquote{Multi-Armed
  Angle-Based Direct Learning for Estimating Optimal Individualized Treatment
  Rules With Various Outcomes,} \textit{Journal of the American Statistical
  Association}, 0, 1--33.

\bibitem[{Qian and Murphy(2011)}]{qian2011performance}
Qian, M. and Murphy, S.~A. (2011), \enquote{Performance guarantees for
  individualized treatment rules,} \textit{Annals of statistics}, 39, 1180.

\bibitem[{Robins(1994)}]{robins1994}
Robins, J.~M. (1994), \enquote{Correcting for non-compliance in randomized
  trials using structural nested mean models,} \textit{Communications in
  Statistics - Theory and Methods}, 23, 2379--2412.

\bibitem[{Robins(2000)}]{robins2000proceedings}
--- (2000), \enquote{Robust estimation in sequentially ignorable missing data
  and causal inference models.} \textit{Proceedings of the American Statistical
  Association Section on Bayesian Statistical Science}, 6--10.

\bibitem[{Robins(2004)}]{Robins2004}
--- (2004), \textit{Optimal Structural Nested Models for Optimal Sequential
  Decisions}, New York, NY: Springer New York, pp. 189--326.

\bibitem[{Robins et~al.(2000)Robins, Hern{\'a}n, and
  Brumback}]{Robins2000MarginalSM}
Robins, J.~M., Hern{\'a}n, M.~A., and Brumback, B.~A. (2000), \enquote{Marginal
  structural models and causal inference in epidemiology.}
  \textit{Epidemiology}, 11 5, 550--60.

\bibitem[{Robins et~al.(2017)Robins, Li, Mukherjee, Tchetgen~Tchetgen, and
  van~der Vaart}]{robins2017}
Robins, J.~M., Li, L., Mukherjee, R., Tchetgen~Tchetgen, E., and van~der Vaart,
  A. (2017), \enquote{Minimax estimation of a functional on a structured
  high-dimensional model,} \textit{Ann. Statist.}, 45, 1951--1987.

\bibitem[{Robins and Rotnitzky(2011)}]{robins2001}
Robins, J.~M. and Rotnitzky, A. (2011), \enquote{Comment on ``Inference for
  semiparametric models: Some questions and an answer'' by Bickel and Kwon,}
  \textit{Statistica Sinica}, 11, 920--936.

\bibitem[{Rubin and van~der Laan(2012)}]{rubin2012statistical}
Rubin, D.~B. and van~der Laan, M.~J. (2012), \enquote{Statistical issues and
  limitations in personalized medicine research with clinical trials,}
  \textit{The international journal of biostatistics}, 8.

\bibitem[{Schulte et~al.(2014)Schulte, Tsiatis, Laber, and
  Davidian}]{schulte2014}
Schulte, P.~J., Tsiatis, A.~A., Laber, E.~B., and Davidian, M. (2014),
  \enquote{$Q$- and $A$-Learning Methods for Estimating Optimal Dynamic
  Treatment Regimes,} \textit{Statist. Sci.}, 29, 640--661.

\bibitem[{Shi et~al.(2018)Shi, Fan, Song, and Lu}]{shi2018}
Shi, C., Fan, A., Song, R., and Lu, W. (2018), \enquote{High-dimensional
  $A$-learning for optimal dynamic treatment regimes,} \textit{Ann. Statist.},
  46, 925--957.

\bibitem[{Small and Rosenbaum(2008)}]{small2008war}
Small, D.~S. and Rosenbaum, P.~R. (2008), \enquote{War and wages: the strength
  of instrumental variables and their sensitivity to unobserved biases,}
  \textit{Journal of the American Statistical Association}, 103, 924--933.

\bibitem[{Steinwart and Christmann(2008)}]{Steinwart:2008:SVM:1481236}
Steinwart, I. and Christmann, A. (2008), \textit{Support Vector Machines},
  Springer Publishing Company, Incorporated, 1st ed.

\bibitem[{Steinwart and Scovel(2007)}]{steinwart2007}
Steinwart, I. and Scovel, C. (2007), \enquote{Fast rates for support vector
  machines using Gaussian kernels,} \textit{Ann. Statist.}, 35, 575--607.

\bibitem[{Sun et~al.(2017)Sun, Craig, and Zhang}]{JMLR:v18:17-003}
Sun, H., Craig, B.~A., and Zhang, L. (2017), \enquote{Angle-based Multicategory
  Distance-weighted SVM,} \textit{Journal of Machine Learning Research}, 18,
  1--21.

\bibitem[{Tan(2006)}]{tan2006}
Tan, Z. (2006), \enquote{Regression and Weighting Methods for Causal Inference
  Using Instrumental Variables,} \textit{Journal of the American Statistical
  Association}, 101, 1607--1618.

\bibitem[{Tchetgen~Tchetgen et~al.(2018)Tchetgen~Tchetgen, Michael, and
  Cui}]{tchetgen2018tr}
Tchetgen~Tchetgen, E.~J., Michael, H., and Cui, Y. (2018), \enquote{Technical
  Report: Marginal Structural Models for Time-varying Endogenous Treatments: A
  Time-Varying Instrumental Variable Approach,} \textit{arXiv preprint
  arXiv:1809.05422v1}.

\bibitem[{Tsiatis et~al.(2019)Tsiatis, Davidian, Holloway, and
  Laber}]{tsiatis2019dynamic}
Tsiatis, A.~A., Davidian, M., Holloway, S.~T., and Laber, E.~B. (2019),
  \textit{Dynamic Treatment Regimes: Statistical Methods for Precision
  Medicine}, CRC Press.

\bibitem[{Van~der Vaart(1998)}]{vaart_1998}
Van~der Vaart, A.~W. (1998), \textit{Asymptotic Statistics}, Cambridge Series
  in Statistical and Probabilistic Mathematics, Cambridge University Press.

\bibitem[{Walter et~al.(2015)Walter, Kubzansky, Koenen, Liang, {Tchetgen
  Tchetgen}, Cornelis, Chang, Rimm, Kawachi, and Glymour}]{Walter2015}
Walter, S., Kubzansky, L., Koenen, K., Liang, L., {Tchetgen Tchetgen}, E.,
  Cornelis, M., Chang, S., Rimm, E., Kawachi, I., and Glymour, M. (2015),
  \enquote{Revisiting mendelian randomization studies of the effect of body
  mass index on depression,} \textit{American Journal of Medical Genetics, Part
  B: Neuropsychiatric Genetics}, 168, 108--115.

\bibitem[{Wang(2018)}]{wang2018}
Wang, L. (2018), \enquote{Causal Inference with Unmeasured Confounding: an
  Instrumental Variable Approach,} Slides of presentation given at University
  of Waterloo, Oct. 2018.

\bibitem[{Wang and Tchetgen~Tchetgen(2018)}]{wang2018bounded}
Wang, L. and Tchetgen~Tchetgen, E. (2018), \enquote{Bounded, efficient and
  triply robust estimation of average treatment effects using instrumental
  variables,} \textit{Journal of the Royal Statistical Society: Series B
  (Statistical Methodology)}, 80, 531--550.

\bibitem[{Wang et~al.(2018)Wang, Zhou, Song, and Sherwood}]{wang2018quantile}
Wang, L., Zhou, Y., Song, R., and Sherwood, B. (2018),
  \enquote{Quantile-Optimal Treatment Regimes,} \textit{Journal of the American
  Statistical Association}, 113, 1243--1254.

\bibitem[{Wang et~al.(2016)Wang, Wu, Liu, Weng, and Zeng}]{Wang2016LearningOI}
Wang, Y., Wu, P., Liu, Y., Weng, C., and Zeng, D. (2016), \enquote{Learning
  Optimal Individualized Treatment Rules from Electronic Health Record Data,}
  \textit{2016 IEEE International Conference on Healthcare Informatics (ICHI)},
  65--71.

\bibitem[{Wang and Louis(2003)}]{10.1093/biomet/90.4.765}
Wang, Z. and Louis, T.~A. (2003), \enquote{{Matching conditional and marginal
  shapes in binary random intercept models using a bridge distribution
  function},} \textit{Biometrika}, 90, 765--775.

\bibitem[{White(1982)}]{10.2307/1912526}
White, H. (1982), \enquote{Maximum Likelihood Estimation of Misspecified
  Models,} \textit{Econometrica}, 50, 1--25.

\bibitem[{Wu et~al.(2018)Wu, Zeng, and Wang}]{wu2018ehr}
Wu, P., Zeng, D., and Wang, Y. (2018), \enquote{Matched Learning for Optimizing
  Individualized Treatment Strategies Using Electronic Health Records,}
  \textit{Journal of the American Statistical Association}, 0, 1--35.

\bibitem[{Zhang et~al.(2012{\natexlab{a}})Zhang, Tsiatis, Davidian, Zhang, and
  Laber}]{zhang2012estimating}
Zhang, B., Tsiatis, A.~A., Davidian, M., Zhang, M., and Laber, E.
  (2012{\natexlab{a}}), \enquote{Estimating optimal treatment regimes from a
  classification perspective,} \textit{Stat}, 1, 103--114.

\bibitem[{Zhang et~al.(2012{\natexlab{b}})Zhang, Tsiatis, Laber, and
  Davidian}]{zhang2012robust}
Zhang, B., Tsiatis, A.~A., Laber, E.~B., and Davidian, M. (2012{\natexlab{b}}),
  \enquote{A robust method for estimating optimal treatment regimes,}
  \textit{Biometrics}, 68, 1010--1018.

\bibitem[{Zhang et~al.(2013)Zhang, Tsiatis, Laber, and
  Davidian}]{Zhang2013RobustEO}
--- (2013), \enquote{Robust estimation of optimal dynamic treatment regimes for
  sequential treatment decisions.} \textit{Biometrika}, 100 3.

\bibitem[{Zhang and Zhang(2018)}]{zhangzhang2018}
Zhang, B. and Zhang, M. (2018), \enquote{C-learning: A new classification
  framework to estimate optimal dynamic treatment regimes,}
  \textit{Biometrics}, 74, 891--899.

\bibitem[{Zhang et~al.(2015)Zhang, Laber, Tsiatis, and
  Davidian}]{Zhang2015decisionlist}
Zhang, Y., Laber, E.~B., Tsiatis, A., and Davidian, M. (2015), \enquote{Using
  decision lists to construct interpretable and parsimonious treatment
  regimes,} \textit{Biometrics}, 71, 895--904.

\bibitem[{Zhao et~al.(2012)Zhao, Zeng, Rush, and Kosorok}]{zhao2012estimating}
Zhao, Y., Zeng, D., Rush, A.~J., and Kosorok, M.~R. (2012), \enquote{Estimating
  individualized treatment rules using outcome weighted learning,}
  \textit{Journal of the American Statistical Association}, 107, 1106--1118.

\bibitem[{Zhao et~al.(2015{\natexlab{a}})Zhao, Zeng, Laber, and
  Kosorok}]{zhao2015new}
Zhao, Y.-Q., Zeng, D., Laber, E.~B., and Kosorok, M.~R. (2015{\natexlab{a}}),
  \enquote{New statistical learning methods for estimating optimal dynamic
  treatment regimes,} \textit{Journal of the American Statistical Association},
  110, 583--598.

\bibitem[{Zhao et~al.(2015{\natexlab{b}})Zhao, Zeng, Laber, Song, Yuan, and
  Kosorok}]{zhao2015doubly}
Zhao, Y.-Q., Zeng, D., Laber, E.~B., Song, R., Yuan, M., and Kosorok, M.~R.
  (2015{\natexlab{b}}), \enquote{Doubly robust learning for estimating
  individualized treatment with censored data,} \textit{Biometrika}, 102,
  151--168.

\bibitem[{Zhou et~al.(2018{\natexlab{a}})Zhou, Zhu, and Zeng}]{zhou2018dosedr}
Zhou, W., Zhu, R., and Zeng, D. (2018{\natexlab{a}}), \enquote{A Parsimonious
  Personalized Dose Finding Model via Dimension Reduction,}
  \textit{arXiv:1802.06156}.

\bibitem[{Zhou and Kosorok(2017)}]{zhou2017augmented}
Zhou, X. and Kosorok, M. (2017), \enquote{Augmented Outcome-weighted Learning
  for Optimal Treatment Regimes,} \textit{arXiv:1711.10654}.

\bibitem[{Zhou et~al.(2017)Zhou, Mayer-Hamblett, Khan, and Kosorok}]{rwl}
Zhou, X., Mayer-Hamblett, N., Khan, U., and Kosorok, M.~R. (2017),
  \enquote{Residual Weighted Learning for Estimating Individualized Treatment
  Rules,} \textit{Journal of the American Statistical Association}, 112,
  169--187.

\bibitem[{Zhou et~al.(2018{\natexlab{b}})Zhou, Wang, and
  Zeng}]{Zhou2018OutcomeWeightedLF}
Zhou, X., Wang, Y., and Zeng, D. (2018{\natexlab{b}}),
  \enquote{Outcome-Weighted Learning for Personalized Medicine with Multiple
  Treatment Options,} \textit{2018 IEEE 5th International Conference on Data
  Science and Advanced Analytics (DSAA)}, 565--574.

\bibitem[{Zhu et~al.(2017)Zhu, Zhao, Chen, Ma, and Zhao}]{zhu2017greedy}
Zhu, R., Zhao, Y.-Q., Chen, G., Ma, S., and Zhao, H. (2017), \enquote{Greedy
  outcome weighted tree learning of optimal personalized treatment rules,}
  \textit{Biometrics}, 73, 391--400.

\bibitem[{Zou and Hastie(2005)}]{zou2005}
Zou, H. and Hastie, T. (2005), \enquote{Regularization and Variable Selection
  via the Elastic Net,} \textit{Journal of the Royal Statistical Society.
  Series B (Statistical Methodology)}, 67, 301--320.

\bibitem[{Zubizarreta et~al.(2013)Zubizarreta, Small, Goyal, Lorch, Rosenbaum,
  et~al.}]{zubizarreta2013stronger}
Zubizarreta, J.~R., Small, D.~S., Goyal, N.~K., Lorch, S., Rosenbaum, P.~R.,
  et~al. (2013), \enquote{Stronger instruments via integer programming in an
  observational study of late preterm birth outcomes,} \textit{The Annals of
  Applied Statistics}, 7, 25--50.

\end{thebibliography}

\end{document}